\gdef\activatealgo{\let\>\qquad\let\=\leftarrow
        \catcode`[=\active\def[##1]{\ifmmode \lbrack##1]\else{\bf ##1}\fi}}
\def\poly{\textit{poly}}
\def\tw{{\it tw}}
\def\ds{{\it ds}}
\def\vc{{\it vc}}
\def\col{{\it 3col}}
\def\qr{{\it qr}}
\def\FO{\ensuremath{{\rm FO}}}
\def\MSO{{\rm MSO}}
\def\Pow{{\ms P}}
\def\root{{\it root}}
\def\subgames{{\it subgames}}
\def\subgame{{\it subgame}}
\def\next{{\it next}}
\newcommand{\linmso}{\ensuremath{\mbox{LinMSO}}}
\newcommand{\mso}{\ensuremath{\mbox{MSO}}}
\def\STR{{\mathcal{STR}}}
\def\redgame{{\ensuremath{\it reduce}}}
\def\OPT{{\it OPT}}
\def\interpreted{{\ensuremath{\it interpreted}}}
\def\nullary{{\ensuremath{\it null}}}
\def\unaryrelations{{\ensuremath{\it unary}}}
\def\relations{{\ensuremath{\it rel}}}
\def\vocabulary{{\ensuremath{\it vocabulary}}}
\def\true{{\ensuremath{\top}}}
\def\false{{\ensuremath{\bot}}}
\def\nil{{\ensuremath{\rm nil}}}
\def\mcg{{\ensuremath{\mathcal{MC}}}}
\def\emcg{{\ensuremath{\mathcal{EMC}}}}
\def\eval{{\it eval}}
\def\reduce{{\ensuremath{\it reduce}}}
\def\combine{{\ensuremath{\it combine}}}
\def\forget{{\ensuremath{\it forget}}}
\def\convert{{\ensuremath{\it convert}}}
\def\val{{\ensuremath{\it val}}}
\newcommand{\ms}[1]{\ensuremath{\mathscr{#1}}}
\newcommand{\adj}{\ensuremath{{\it adj}}}
\newcommand{\inc}{\ensuremath{{\it inc}}}
\newcommand{\arity}{\ensuremath{{\it arity}}}
\newcommand{\taugraph}{\ensuremath{\tau_{\text{Graph}}}}
\newtheorem{lemma}{Lemma}
\newtheorem{invariant}{Invariant}
\newtheorem{theorem}{Theorem}
\newtheorem{corollary}{Corollary}
\newtheorem{example}{Example}
\theoremstyle{definition}
\newtheorem{definition}{Definition}
\begin{document}

\begin{frontmatter}
\title{Courcelle's Theorem -- A Game-Theoretic Approach\tnoteref{thanks}}
\tnotetext[thanks]{Supported by the DFG under grant RO 927/8}

\author{Joachim Kneis}
\ead{kneis@cs.rwth-aachen.de}
\author{Alexander Langer}
\ead{langer@cs.rwth-aachen.de}
\author{Peter Rossmanith}
\ead{rossmani@cs.rwth-aachen.de}
\address{Department of Computer Science, RWTH Aachen University}

\begin{abstract}
Courcelle's Theorem states that every problem definable in Monadic
Second-Order logic can be solved in linear time on structures of bounded
treewidth, for example, by constructing a tree automaton that recognizes
or rejects a tree decomposition of the structure.  Existing, optimized software
like the MONA tool can be used to build the corresponding tree automata,
which for bounded treewidth are of constant size.
Unfortunately, the constants involved can become extremely large -- every
quantifier alternation requires a power set construction for the automaton.
Here, the required space can become a problem in practical applications. 

In this paper, we present a novel, direct approach based on model checking games,
which avoids the expensive power set construction.  
Experiments with an implementation are promising, and we can solve problems on graphs
where the automata-theoretic approach fails in practice.
\end{abstract}

\end{frontmatter}

Courcelle's celebrated theorem essentially states that
every problem definable in Monadic Second-Order logic (MSO) can be
solved in linear time on graphs of bounded treewidth~\cite{Cou90}.  However,
the multiplicative constants in the running time, which depend
on the treewidth and the MSO-formula, can be extremely large~\cite{FG04}.

\begin{theorem}[\cite{Cou90,FG04}]
\label{thm:courcelle}
Let $\mathcal P$ be an MSO problem and $w$ be a positive integer.
There is an algorithm $A$ and a function $f\colon \mathbf N \times \mathbf N \to \mathbf N$
such that for every graph $\mathcal G = (V, E)$ of order $n := |V|$ and
treewidth at most $w$, $A$ solves $\mathcal P$ on input $\mathcal G$ in time $f(\|\varphi\|, w) \cdot n$, where
$\varphi$ is the MSO formula defining~$\mathcal P$ and $\|\varphi\|$ is its length.
Furthermore, unless ${\rm P = NP}$, the function $f$ cannot be upper bounded by an iterated exponential of bounded height in terms
of~$\varphi$ and~$w$.
\end{theorem}

This result has been generalized by Arnborg, Lagergren, and Seese
to Extended MSO~\cite{ALS91}, and by Courcelle and Mosbah to
Monadic Second-Order evaluations using semiring
homomorphisms~\cite{CM93}.  In both cases, an MSO-formula with free
set variables is used to describe a property, and satisfying
assignments to these set variables are evaluated in an appropriate way.

Courcelle's Theorem is usually proved as follows: In time
only dependent on~$\varphi$ and the treewidth~$w$, a tree automaton $\mathcal A$ is constructed that
accepts a tree decomposition of width~$w$ if and only if the corresponding graph satisfies the formula.
This construction can either be done explicitly, by actually constructing the
tree automaton (see, e.g., \cite{ALS91,Cou90a,DF99,FFG02,FG06,Wey08,Cou11}),
or implicitly via auxiliary formulas obtained by applying the
Feferman--Vaught Theorem~\cite{FV59} extended to MSO~\cite{Cou90,Gur79}
(see, e.g., \cite{Cou90,Fri01,Mak04,Gro07,Cou11}).

In a practical setting, the biggest strength of Courcelle's Theorem
is at the same time its largest weakness:  MSO logic has extremely
large expressive power, and very short formulas can be used to
encode NP-hard problems.  This is used in~\cite{FG04} to prove
non-elementary worst-case lower bounds for the multiplicative constants
in the linear running time.
Even worse, these lower bounds already hold for the class of trees,
i.e., graphs of treewidth one.

On the other hand, these are worst-case lower bounds
for very special classes of formulas and trees, and thus there is
a good chance that in practice problems can be solved much faster.
In fact, existing software like the MONA tool~\cite{KM01,KMS00} for Weak Second-Order
logic on two successors (WS2S) is surprisingly successful
even though it is subject to the same theoretical lower bounds.

The automata-theoretic approach is therefore a promising starting point
for practical applications of Courcelle's Theorem, particularly
since advanced and optimized tools like MONA can be used as a black
box for the majority of the work, and techniques like minimizing tree automata
are very well understood.

There are, however, some cases where the automata-theoretic approach
is infeasible in practice, i.e., when the automata (or set of
auxiliary formulas) are too large to be practically computable.
This can even happen
when the final minimal automata are small, but intermediate automata
cannot be constructed in reasonable time and space (note that each
quantifier alternation requires an automaton power set construction).

In his thesis~\cite{Sog08}, Soguet has studied the sizes of tree
automata corresponding to various problems for small
clique-width~\cite{CMR00}.\footnote{Both, treewidth and clique-width, can be
defined in terms of graph grammars (hyperedge replacement grammars
for treewidth, and vertex replacement grammars for clique-width; see
e.g.\ a recent survey~\cite{Cou08}), and in both cases, tree automata can
be used to recognize parse trees of graphs.}
The automata were generated using MONA, and in many cases, the
corresponding automata were surprisingly small, thanks to the
well-understood minimization of tree automata.
On the other hand, even for graphs of 
clique-width three, MONA was unable to construct the corresponding
tree automata for the classical \textsc{3-Colorability} problem.
Even worse, the same happened for simple problems such as deciding
whether the graph is connected or if its maximum degree is two. 

These negative results are somewhat unsatisfying because
the respective algorithm already fails in the first phase,
when the automaton is
constructed.  The first phase, however,
only depends on the treewidth
(or clique-width in above cases) and the formula (i.e., the problem),
but is independent of the actual input graph.
On the other hand, when running the tree automaton on most graphs
arising from practical problems, only
few states are actually visited.

Recently, there have been a few approaches to this problem, see,
e.g.,~\cite{GPW08a,GPW10,CD10,CD10a}.
For example, the approach of~\cite{CD10,CD10a} avoids
an explicit construction of the tree automaton.  Instead, the state-transition
function is computed \emph{on-the-fly}.  Experiments indicate practical feasibility.
Courcelle~\cite{Cou10a} introduces \emph{special tree-width}, where
the corresponding automata are easier to construct.

In this paper, we present a novel, game-theoretic approach,
where the input structure is taken into account
from the beginning via model checking games (cf.,~\cite{Hin73,Gra07,Gra11}).
Therefore, only the amount
of information is stored that is needed by the algorithm to solve the problem
on this explicit input, and, in some sense, transitions between nodes
of the tree decompositions are as well computed \emph{on-the-fly}.
We particularly avoid the expensive power set construction.

We hope that the approach can be used in those cases, where the automata
are too large to be constructed in practice, but the input graphs itself
are simple enough.   In fact, first experiments are promising.  Using the
generic approach, we can, for example, solve the \textsc{3-Colorability}
problem on grids of size $6 \times 33$ (treewidth~6) in about 21 seconds and
with 8 MB memory usage on standard PC hardware, and the \textsc{Minimum
Vertex Cover} problem on the same graph in less than a second and only 1 MB of
memory usage.   We note that the automata construction using MONA
in \cite{Sog08} already failed for $2 \times n$ grids (clique-width~3).

\subsection*{Related Work}

We briefly survey other approaches to Courcelle's Theorem.
We already mentioned that, given the MSO formula $\varphi$, one can
construct a finite-state bottom-up tree automaton that accepts a tree
decomposition of the input graph~$G$ if and only if $G \models
\varphi$.  This is sometimes called the automata theoretic approach.
A direct construction of the tree automata
is described in, e.g.,~\cite{Wey08} or~\cite[Chapter 6]{Cou11}.
In \cite{AF93,DF99} a Myhill-Nerode type argument is used to show
that the treewidth parse tree operators admit a right congruence
with finitely many congruence classes.  The method of test sets can then
be used to construct the tree automaton. 
One can also use a reduction to the classical model checking
problem for MSO on labeled trees~\cite{ALS91,FFG02,FG06}.
It is well-known~\cite{TW68,Don70} that this problem can be solved by
constructing suitable finite-state tree automata.
This approach is favorable if one likes to use existing software such as
the MONA tool~\cite{KM01}.

A model theoretic approach is based on variants of the
Feferman--Vaught Theorem~\cite{FV59}:
If a graph $G$ can be
decomposed into components $G_1$ and $G_2$, then from the input
formula~$\varphi$ one can construct a suitable \emph{reduction
sequence} consisting of Boolean combinations (\emph{and}, \emph{or}, \emph{not})
of finitely many
formulas that hold in $G_1$ and $G_2$ if and only if $\varphi$ holds
in $G$ (cf.,~\cite{Cou90,Gur79,Mak04,Cou11}).  One can therefore
use dynamic programming on the tree decomposition to compute the
\emph{$q$-theory} of $G$, i.e., set of formulas of quantifier rank at
most~$q$ that hold in~$G$ (cf.,~\cite{Fri01,Gro07,Mak04}).
Similarly, one can also inductively compute the set of satisfying assignments to the
input formula~\cite{CM93}.


We are not aware of any implementations of Courcelle's Theorem based
on the Feferman--Vaught approach.  The construction of all possible
reduction sequences for MSO formulas ``obviously is not
practical''~\cite[Section 1.6]{Mak04}.   The algorithms presented
in~\cite{Fri01,Mak04} are therefore infeasible in practice.
However, from~\cite{CM93} we get that computing the particular reduction sequence for
the input formula~$\varphi$ suffices.  Some lower bounds are known for the
necessary conversions into disjunctions~\cite{MRW05}, but it would
still be interesting to see how this approach behaves in practice.

A few authors studied practical aspects of the automata theoretic approach.
It is mentioned in~\cite{DF99} that a Myhill--Nerode based program has
been implemented as part of an M.Sc.\ thesis, which unfortunately
does not seem to be publicly available. 
The MONA tool~\cite{KM01} is a well-known and optimized implementation
for the tree automata construction.  The space required to construct
the automata with MONA still turns out to cause severe problems in
practical applications~\cite{Sog08,GPW10}.
One idea~\cite[Chapter 6]{Cou11}
is to use precomputed automata for commonly used predicates such
as ${\it Conn}(X)$ expressing that the set~$X$ is connected.
Note however that the ${\it Conn}(X)$ automaton
requires $2^{2^{\Theta(k)}}$ states
for graphs of clique-width~$k$~\cite[Chapter 6]{Cou11}.
An automatic translation into Monadic
Datalog is proposed in~\cite{GPW10}.   Some experiments 
indeed suggest feasibility in practice; their
prototype implementation was, however, obtained by manual construction
and
not by an automatic transformation from the underlying MSO formula.
In~\cite{CD10,CD10a} the power set construction is avoided by considering
existential formulas only.  The automata thus remain non-deterministic, but
of course standard methods to simulate runs of the automata apply.
Since the state transition function is given only implicitly, the automaton
is essentially computed \emph{on-the-fly} while recognizing a clique-decomposition.
Experiments have been conducted on graphs of comparably high clique-width and
the approach is quite promising.
In fact,  the lack of feasible algorithms to compute the
necessary clique-width parse trees seems to be the major limitation.
To ease the specification of such \emph{fly-automata},
Courcelle~\cite{Cou10a,Cou11a} introduces \emph{special tree-width}.
Special tree-width lies between path-width and treewidth, but the
automata are significantly smaller and easier to construct than
those for treewidth.

In this article, we present a new approach that neither uses automata theoretic
methods nor uses a Feferman--Vaught style splitting theorem.  Instead, we essentially
evaluate the input formula on the graph using a simple recursive model checking
algorithm.  In what follows, we shall outline this approach.

\subsection*{Overview}

Our starting point is the \emph{model checking game} for MSO
(Definition~\ref{def:model-checking-game}), a pebble game between two
players called the \emph{verifier} and the \emph{falsifier} also known as
the Hintikka game~\cite{Hin73}.  The verifier
tries to prove that the formula holds on the input structure, while the
falsifier tries to prove the opposite.  In the game, the verifier moves on
existential formulas ($\vee$, $\exists$), while the falsifier moves on
universal formulas ($\wedge$, $\forall$).

This game can in a natural way be identified with a simple algorithm
that \emph{evaluates} the formula on the input structure in a recursive manner.
If, for example, the formula is $\exists R \psi(R)$ for a set variable~$R$,
the algorithm checks whether $\psi(U)$ holds for all sets $U$.
In this sense, the computation tree of this simple algorithm can be
interpreted as the \emph{unfolding} (cf.,~\cite{McM95}) of the model checking game.
On a structure with $n$ elements, this straight-forward recursive
model-checking algorithm takes time $O((2^n + n)^q)$ for a formula
of quantifier rank~$q$.  By dynamic programming on the tree
decomposition, we can improve this to time linear in $n$ on structures of
bounded treewidth.

This works as follows:  We traverse the tree decomposition
of the input structure $\ms A$ bottom-up.  At each node of the tree
decomposition we preliminary try to evaluate the formula $\varphi$
on $\ms A$ using the model checking game on the ``current''
substructure~$\ms A'$ of $\ms A$.  To this end, we allow ``empty''
assignments $x := \nil$ to first order variables $x$.  Such
empty assignments correspond to objects in $\ms A$ that are not
contained in $\ms A'$ and are to be assigned in later steps.  Then,
two things may happen:

\begin{itemize}
\item We can \emph{already now} determine whether $\ms A \models \varphi$ or $\ms A \not\models \varphi$.

If, for instance, the formula $\col$ encodes the \textsc{3-Colorability} problem and
even $\ms A'$ is not three-colorable, it locally violates $\col$ and we can derive $\ms A \not\models \col$.

\item We cannot \emph{yet} determine whether $\ms A \models \varphi$ or $\ms A \not\models \varphi$.

	For example, if the formula encodes \textsc{Dominating Set} problem, then a vertex~$v$ in the ``current'' bag
	might be undominated in the current subgraph, but we do not know whether in the ``future'' another vertex
	might dominate~$v$.
\end{itemize}

The first case is formalized in Lemma~\ref{lem:emcg-eval-introduce} and Lemma~\ref{lem:emcg-eval-union}.
In the second case, we found a ``witness,'' i.e., a subgame that we were unable to evaluate. 
We then will re-visit those undetermined subgames
during the course of the dynamic programming until we finally arrive in the root of the tree decomposition, where
all subgames become determined.

The next crucial observation is that MSO and FO formulas with
bounded quantifier rank have limited capabilities to distinguish structures
(formally captured in the $\equiv_q$-equivalence of structures,
cf.~\cite{EF99}).  We exploit this fact and show that we can delete
redundant equivalent subgames (cf., Algorithm~\ref{alg:reduce})
for a suitable definition of equivalence
(cf., Definition~\ref{def:equivalent-games}).  We can then show that, assuming
a fixed formula and bounded treewidth, the number of reduced, non-equivalent games
is bounded by a constant (Lemma~\ref{lem:number}), which allows us to obtain
running times linear in the size of the tree decomposition.

While this game-theoretic approach is subject to the same non-elementary
lower bounds as the other approaches, the actual number of ways to
play the model checking game highly depends on the input graph.  For example,
if the graph does not contain, say, a triangle, then the players will never move
to a set of nodes that induce a triangle, while a tree automaton
must work for all graphs.  This observation is reflected in practical experiments, where
the actual number of entries considered is
typically much smaller than the corresponding worst-case bound.

\section{Preliminaries}
\label{sec:prel}

The power set of a set $U$ is denoted by $\Pow(U)$.  The disjoint union
of two sets $U_1, U_2$ is denoted by~$U_1 \uplus U_2$.
We assume that trees are rooted and denote the root of a tree $\mathcal T$ by
$\root(\mathcal T)$. 
For every $t \in \mathbf N$, $\exp^t(\cdot)$ is a $t$-times iterated
exponential, i.e., $\exp^0(x) = x$ and $\exp^t(x) = 2^{\exp^{t-1}(x)}$.

For a set $U$ and object $x$, we let $(x \in U)$ be defined as
$$
(x \in U) = \begin{cases}
0 & \text{if $x \notin U$} \\
1 & \text{if $x \in U$.}
\end{cases}
$$

To avoid cluttered notation, we may, for elements $s_1, \ldots, s_l$ and
$t_1, \ldots, t_m$, abbreviate $\bar s := \{s_1, \ldots, s_l\}$,
$(\bar s, s') := \bar s \cup \{s'\}$, and
$(\bar s, \bar t) := \bar s \bar t := \bar s \cup \bar t$.

\subsection{Structures}

We fix a countably infinite set of \emph{symbols}. 
Each symbol $S$ has an \emph{arity}~$r = \arity(S) \ge 0$.
We distinguish between \emph{nullary} symbols with arity zero
and \emph{relation symbols} that have arity greater
than zero.  Relation symbols with arity one are called \emph{unary}.
For convenience, we shall denote relation symbols by capital letters
and nullary symbols by lower case letters.

A \emph{vocabulary} $\tau$ is a finite set of symbols.
We denote by $\nullary(\tau)$ the set of nullary symbols in $\tau$,
by $\relations(\tau)$ the set of relation symbols in $\tau$, and
by $\unaryrelations(\tau)$ the set of unary relation symbols in $\tau$.
Let $\arity(\tau) = \max\{\,\arity(R) \mid R \in \relations(\tau)\,\}$
be the maximum arity over all relation symbols in $\tau$.
If $\nullary(\tau) = \emptyset$, we call $\tau$ \emph{relational}.

Let $\tau$ be a vocabulary.
A \emph{structure~$\ms A$ over $\tau$} (or $\tau$-structure) is a
tuple $\ms A = \left(A, (R^{\ms A})_{R \in \relations(\tau)}, (c^{\ms A})_{c \in \nullary(\tau)}\right)$, 
where $A$ is a finite set called the \emph{universe} of $\ms A$, and $(R^{\ms
A})_{R \in \relations(\tau)}$ and $(c^{\ms A})_{c \in \nullary(\tau)}$ are
\emph{interpretations} of the $\tau$-symbols in~$\ms A$.
Here, $R^{\ms A} \subseteq A^{\arity(R)}$ for each relation symbol
$R \in \relations(\tau)$.  For a nullary symbol $c \in \nullary(c)$
we either have $c^{\ms A} \in A$ and say that $c$ is \emph{interpreted}
in~$\ms A$, or we write $c^{\ms A} = \nil$ and say that~$c$ is
\emph{uninterpreted}.
The set of nullary symbols interpreted in $\ms A$ is denoted by
$\interpreted(\ms A)$.  
If all symbols are interpreted, we say the
structure is \emph{fully interpreted}, and \emph{partially interpreted}
otherwise.  We note that a related concept of \emph{partially equipped signatures}
has been used in, e.g.,~\cite{AF93,DF99,GH10}.

The set of all $\tau$-structures is denoted by $\STR(\tau)$.
We shall always denote structures in script letters $\ms A, \ms B, \ldots$ and
in roman letters $A, B, \ldots$ their corresponding universes.
If the universe is empty, then we say that the structure is \emph{empty}.
Structures over a
relational vocabulary $\tau$ are called \emph{relational structures}.

For a structure $\ms A$, we denote by $\vocabulary(\ms A)$ the vocabulary of $\ms A$.
For sets $\bar R = \{R_1, \ldots, R_l\} \subseteq \relations(\tau)$ and
$\bar c = \{c_1, \ldots, c_m\} \subseteq \nullary(\tau)$, we let
$\bar R^{\ms A} := \{\,R^{\ms A} \mid R \in \bar R\,\}$, and
$\bar c^{\ms A} := \{\,c^{\ms A} \mid c \in {\bar c \cap \interpreted(\ms A)} \,\}$
be their corresponding interpretations.

\begin{example}
\label{ex:taugraph}
A graph $(V, E)$ can in a natural way be identified with a structure~$\ms G$
over the vocabulary $\taugraph = (\adj)$, where $\adj$ represents the binary adjacency
relation.  The universe of $\ms G$ is $V$, and we interpret $\adj$ as
$\adj^{\ms G} = E$ in~$\ms G$.
\end{example}

Let $\tau$ be a vocabulary and $\{R_1, \ldots, R_l, c_1, \ldots, c_m\}$
be a set of symbols, each of which is not contained in~$\tau$.
The vocabulary $\tau' = (\tau , R_1, \ldots, R_l, c_1, \ldots, c_m)$ is
called an \emph{expansion} of~$\tau$.  Similarly, if $\ms A$ is a $\tau$-structure and
$\ms A'$ is a $\tau'$-structure that agrees with $\ms A$ on $\tau$, i.e.,
$R^{\ms A} = R^{\ms A'}$ for each $R \in \relations(\tau)$ and
$c^{\ms A} = c^{\ms A'}$ for each $c \in \nullary(\tau)$, then
we call $\ms A'$ a $\tau'$-expansion of~$\ms A$.
If~$\ms A$ is a $\tau$-structure, and $U_1, \ldots, U_l$ are relations over~$A$, such that
$U_i \subseteq A^{\arity(R_i)}$, $1 \le i \le l$, and $u_1, \ldots, u_m \in A \cup \{\nil\}$, we write
$\ms A' = (\ms A, U_1, \ldots, U_l, u_1, \ldots, u_m)$ to indicate that $\ms A'$ is
a $\tau'$-expansion of $\ms A$, such that $R_i^{\ms A'} = U_i$, $1 \le i \le l$, and
$c_j^{\ms A'} = u_j$, $1 \le j \le m$.

Let $\ms A$ be a $\tau$-structure and
$\bar a = \{a_1, \ldots, a_m\} \subseteq A$.
Then $\ms A[\bar a]$ is the \emph{substructure of $\ms A$ induced by
$\bar a$}, where $\ms A[\bar a]$ has universe $\bar a$, 
for each relation symbol $R \in \tau$ we have
	$R^{\ms A[\bar a]} = R^{\ms A} \cap \bar a^{\arity(R)}$,
and nullary symbols $c$ are interpreted as $c^{\ms A[\bar a]} = c^{\ms A}$ if $c^{\ms A} \in \bar a$
and become uninterpreted otherwise.

Two $\tau$-structures $\ms A$ and $\ms B$ over the same
vocabulary~$\tau$ are \emph{isomorphic}, denoted by $\ms A \cong \ms B$,
if there is an \emph{isomorphism}
$h\colon A \to B$, where $h$ is a bijection between $A$ and $B$ and
\begin{itemize}
\item $c \in \interpreted(\ms A)$ if and only if $c \in \interpreted(\ms B)$
	for all $c \in \nullary(\tau)$,
\item $h(c^{\ms A}) = c^{\ms B}$ for every nullary symbol $c \in \interpreted(\tau)$, and
\item for every relation symbol $R \in \tau$ and $a_1, \ldots,
	a_p \in A$, where $p = \arity(R)$,
	$$(a_1, \ldots, a_p) \in R^{\ms A} \qquad \text{iff}
	\qquad (h(a_1), \ldots, h(a_p)) \in R^{\ms B}.$$
\end{itemize}


\begin{definition}[Compatibility, Union]
We call two $\tau$-structures $\ms A_1$ and $\ms A_2$ \emph{compatible}, if 
for all nullary symbols $c \in \interpreted(\ms A_1) \cap \interpreted(\ms A_2)$ we
have $c^{\ms A_1} = c^{\ms A_2}$ and
the identity $x \mapsto x$ is an isomorphism between
$\ms A_1[A_1 \cap A_2]$ and $\ms A_2[A_1 \cap A_2]$.

In this case, we define the \emph{union} of $\ms A_1$ and $\ms A_2$,
denoted by $\ms A_1 \cup \ms A_2$, as the $\tau$-structure with universe
$A := A_1 \cup A_2$ and interpretations
$R^{\ms A_1 \cup \ms A_2} := R^{\ms A_1} \cup R^{\ms A_2}$ for every
relation symbol $R \in \tau$.  Nullary symbols
$c \in \nullary(\tau)$ with $c^{\ms A_1} = c^{\ms A_2} = \nil$ remain uninterpreted
in $\ms A_1 \cup \ms A_2$; otherwise $c^{\ms A_1 \cup \ms A_2} = c^{\ms A_i}$ if
$c \in \interpreted(\ms A_i)$ for some $i \in \{1,2\}$.
\end{definition}

\subsection{Treewidth and Tree Decompositions}
\label{sec:treewidth}

Tree decompositions and treewidth were introduced by
Robertson and Seymour~\cite{RS86}
in their works on the Graph Minors Project, cf.~\cite{DF99,FG06,Die10}.

A \emph{tree decomposition} of a relational $\tau$-structure $\ms A$ is a tuple
$(\mathcal T, \mathcal X)$, where $\mathcal T = (T, F)$ is a rooted
tree and $\mathcal X = (X_i)_{i \in T}$ is a collection
of subsets $X_i \subseteq A$, such that
\begin{itemize}
\item $\bigcup_{i \in T} X_i = A$,
\item for all $p$-ary relation symbols $R \in \tau$ and all
	$(a_1, \ldots, a_p) \in R^{\ms A}$, 
	there is an $i \in T$ such that $\{a_1, \ldots, a_p\} \subseteq X_i$, and
\item for all $i, j_1, j_2 \in T$, if $i$ is on the path between $j_1$ and $j_2$ in
	$\mathcal T$, then $X_{j_1} \cap X_{j_2} \subseteq X_i$.
\end{itemize}

The sets $X_i$ are called \emph{bags}. The \emph{width} of a tree
decomposition is the size of its largest bag minus one, and the
\emph{treewidth} of a structure $\ms A$ is the minimum width of all tree
decompositions of~$\ms A$. 

Without loss of generality, we assume that each tree decomposition we
consider is \emph{nice}.  Nice tree decompositions are directed,
where each edge in $F$ has a direction away from the root, and have
the following properties:  Each node $i \in T$ has at most two children.
For leafs $i \in T$, we have $X_i = \emptyset$.
If $i$ has exactly one
child~$j$, then there is $a \in A$ such that either $X_i = X_j
\cup \{a\}$ or $X_i = X_j \setminus \{a\}$.  In the former case,
we say $i$ is an \emph{introduce} node, in the latter case we call
$i$ a \emph{forget} node of the tree decomposition.  Finally, if a node $i$
has two children $j_1$ and $j_2$, then we require $X_{i} = X_{j_1}
= X_{j_2}$ and call such nodes \emph{join} nodes.
If $i \to \cdots \to j$ is a directed path in $\mathcal T$ pointing away from the root,
we say $j$ appears \emph{below} $i$ in $\mathcal T$.

With every node $i \in T$ of a (nice) tree decomposition of a $\tau$-structure
$\ms A$ we associate a substructure $\ms A_i$ defined as follows:
Let $A_i \subseteq A$ be the set of objects in $X_i$ or in bags $X_j$ for nodes
$j$ below $i$ in the tree decomposition.
Then we let $\ms A_i := \ms A[A_i]$ be the substructure of $\ms A$
induced by $A_i$.

Computing the treewidth of a graph is NP-complete~\cite{ACP87}.
However, the algorithms in this paper rely on a given tree decomposition
of the input structure.  For graphs~$G$, there is a fixed-parameter tractable
algorithm~\cite{Bod96,DF99} with a running time of $2^{O(\tw(G)^3)} |G|$,
whose dependence on the treewidth might become a problem in
practical applications.  In a practical setting, heuristics seem
to work well and often nearly optimal tree decompositions can
be computed~\cite{BK10}.  Using \emph{Gaifman graphs},
one can also compute tree decompositions of arbitrary structures,
cf.,~\cite[Section~11.3]{FG06}.
In the following, we therefore just assume a tree decomposition is given
as part of the input.  For more information on treewidth, we refer
the reader to surveys such as~\cite{Bod93,Bod98}.

\subsection{MSO Logic}

MSO logic over a vocabulary $\tau$, denoted by $\MSO(\tau)$, is
simultaneously defined over all vocabularies~$\tau$ by induction.
Firstly, for every $p$-ary relation symbol $R \in \tau$ and any nullary symbols
$c_1, \ldots, c_p \in \tau$, $\MSO(\tau)$ contains the \emph{atomic} formula
$R(c_1, \ldots, c_p)$.  If $R$ is unary, we may abbreviate
$R(c)$ as $c \in R$.  Secondly:
\begin{itemize}
\item If $\varphi, \psi$ are in $\MSO(\tau)$, then $\neg \varphi$,
	$\varphi \vee \psi$, and $\varphi \wedge \psi$ are in $\MSO(\tau)$,
\item If $\varphi \in \MSO(\tau \cup \{c\})$ for some nullary symbol~$c$,
	then both, $\forall c \varphi$ and $\exists c \varphi$ are in
	$\MSO(\tau)$.  This is called \emph{first order} or \emph{object quantification}.
\item If $\varphi \in \MSO(\tau \cup \{R\})$ for a unary relation symbol $R$, then
	both, $\forall R \varphi$ and $\exists R \varphi$ are in $\MSO(\tau)$.
	The corresponding case is called \emph{second order} or \emph{set quantification}.
\end{itemize}

Note that we do not distinguish between ``basic'' symbols (contained in a
certain ``base'' vocabulary such as $\taugraph$), and symbols that are 
used as \emph{variables} subject to quantification.  Let $\tau$ be
a vocabulary and $\varphi \in \MSO(\tau)$ be a formula.  Let $\tau'
\subseteq \tau$ be the smallest vocabulary with $\varphi \in
\MSO(\tau')$.  Then we call the symbols in $\unaryrelations(\tau')
\cup \nullary(\tau')$ the \emph{free} symbols of $\varphi$.  Let
$\|\varphi\|$ be the size of a suitable encoding of~$\varphi$.

If $\varphi \in \{\forall c \psi, \forall R \psi, \psi_1 \wedge
\psi_2\}$ for some $c$, $R$, $\psi$, $\psi_1$, and $\psi_2$, we call $\varphi$
\emph{universal}.  Similarly, we call $\varphi$ \emph{existential}
if $\varphi \in \{\exists c \psi, \exists R \psi, \psi_1 \vee
\psi_2\}$.

If $\varphi$ does not contain set quantifiers, then
we say $\varphi$ is \emph{first order} and contained in $\FO(\tau)$.
Note that in particular all atomic formulas of $\MSO(\tau)$ are first
order.  The \emph{quantifier rank} $\qr(\varphi)$ of a formula $\varphi \in
\MSO(\tau)$ denotes the maximum number of nested quantifiers in $\varphi$,
counting both first order and second order quantifiers, and is defined
by induction over the structure of~$\varphi$ as
\begin{itemize}
\item $\qr(\varphi) = 0$ if $\varphi$ is an atomic formula,
\item $\qr(\varphi) = \qr(\neg \varphi)$, 
\item $\qr(\varphi) = \max\{\qr(\psi_1), \qr(\psi_2)\}$ if $\varphi \in \{\psi_1 \wedge \psi_2, \psi_1 \vee \psi_2\}$, and
\item $\qr(\varphi) = \qr(\psi) + 1$ if $\varphi \in \{\forall R \psi, \exists R \psi, \forall c \psi, \exists c \psi\}$.
\end{itemize}

Without loss of generality, we assume throughout the paper that
every formula is in negation normal form, i.e., the negation
symbol~$\neg$ only occurs in front of atomic formulas.  This can
be achieved by a simple rewriting of the formula.

For a fully interpreted $\tau$-structure $\ms A$ and a formula $\varphi \in \mso(\tau)$, we write
$\ms A \models \varphi$ if and only if $\varphi$ \emph{holds} in $\ms A$ or is \emph{true} in $\ms A$
in the classical sense, cf.~\cite{Tar44,EF99}.
We shall do not specify this further, since we will switch to a game-theoretic
characterization in the remainder of this paper, cf., Section~\ref{sec:mcg}.


In~\cite{ALS91}, Extended MSO was introduced.  Here, an MSO-formula over a relational
vocabulary is given together with an evaluation or optimization goal over the unary relation
symbols (set variables).  This principle was furthermore generalized
to semiring homomorphisms in~\cite{CM93}, where satisfying interpretations of
the free relation symbols are to be translated into an appropriate semiring.

In this paper, we shall consider MSO-definable linear optimization problems, also called
\linmso-definable optimization problems.  It is not hard to see that the methods in this
paper extend to other classes of MSO-definable problems, such as counting and enumeration problems.
See, e.g.,~\cite[Chapter~6]{Cou11} for an overview of MSO-definable problems and their algorithmic
applications.

\def\linmsoopt{
\min\Biggl\{\,\sum_{k=1}^l \alpha_k|U_k| \Biggm|
U_i \subseteq A \text{, $1 \le i \le l$, and } 
(\ms A, U_1, \ldots, U_l) \models \varphi \,\Biggr\}
}
\begin{definition}[\linmso-definable Optimization Problem]
Let $\tau$ be a relational vocabulary, $\bar R = \{R_1, \ldots, R_l\} \subseteq \tau$ be
a set of unary relation symbols, $\varphi \in \mso(\tau)$, and $\tau' = \tau \setminus \bar R$.
Let $\alpha_1, \ldots, \alpha_l \in \mathbf Z$ and $\min\,\emptyset := \infty$.

Then we call the problem of, given a $\tau'$-structure $\ms A$, computing
$$\linmsoopt$$ a \emph{\linmso-definable optimization problem}. 
\end{definition}

\begin{example}
\label{ex:mso-examples}
Consider the following formulas:
\begin{align*}
\vc(R) = {} & \forall x \forall y (\neg \adj(x,y) \vee x \in R \vee y \in R) \in \mso(\taugraph \cup \{R\}) \\
\ds(R) = {} & \forall x (x \in R \vee \exists y (y \in R \wedge \adj(x,y))) \in \mso(\taugraph \cup \{R\}) \\
\col = {} & \exists R_1 \exists R_2 \exists R_3 \Bigg( 
\forall x \bigg(\bigvee_{i=1}^3(x \in R_i) \wedge \bigwedge_{i\neq j}
	(\neg x \in R_i \vee \neg x \in R_j)\bigg) \wedge \\
	& \qquad \forall x \forall y\bigg(\neg \adj(x, y) \vee
	\bigwedge_{i=1}^3 (\neg x \in R_i \vee \neg y \in R_i)\bigg)\Bigg) \in \mso(\taugraph)
\end{align*}
Then, given a $\taugraph$-structure $\ms G$,
\begin{align*}
&\min\bigl\{\,|C| \bigm|  C \subseteq A \wedge
	(\ms G, C) \models \vc \,\bigr\}, \\
&\min\bigl\{\,|D| \bigm|  D \subseteq A \wedge (\ms G, D) \models \ds \,\}\text{, and} \\
&\min\bigl\{\,0 \bigm|  \ms G \models \col\,\bigr\}
\end{align*}
encode the well known graph problems \textsc{Minimum Vertex Cover}, \textsc{Minimum
Dominating Set}, and \textsc{3-Colorability}, respectively.

\end{example}

\section{Model Checking Games}
\label{sec:mcg}

The semantics of MSO in the classical sense (cf.~\cite{Tar44,EF99})
can be characterized using a two player pebble game, called the \emph{Hintikka game} or
\emph{model checking game}, cf.~\cite{Hin73,Gra07,Gra11}.

A pebble \emph{game} $\mathcal G = (P, M, P_0, P_1,
p_0)$ between two players, say Player~0 and Player~1, consists of
a finite set~$P$ of \emph{positions}, two disjoint sets $P_0, P_1 \subseteq
P$ assigning positions to the two players, an \emph{initial position}
$p_0 \in P$, and an acyclic binary relation $M \subseteq P
\times P$, which specifies the valid \emph{moves} in the game.  
We only allow moves from positions assigned to one of the two players,
i.e., we require $p \in P_0 \cup P_1$ for all $(p, p') \in M$.
On the other hand, we do allow that positions without outgoing moves
are assigned to players.
Let $|\mathcal G| := |P|$ be the \emph{size} of~$\mathcal G$.

For $p \in P$, we let $\next_{\mathcal G} (p) = \{\,p' \in P \mid (p, p') \in M\,\}$ be
the set of positions reachable from $p$ via a move in~$M$. 
For any position~$p \in \next_{\mathcal G}(p_0)$ we let
$\subgame_{\mathcal G}(p) = (P, M, P_0, P_1, p)$ be a \emph{subgame} of~$\mathcal G$,
which is issued from the new initial position~$p$.
The set of all 
subgames of~$\mathcal G$ is denoted by~$\subgames(\mathcal G)$.
If $\mathcal G$ is clear from the context, we usually omit the subscript and
write~$\next(p)$ and~$\subgame(p)$.

A \emph{play} of~$\mathcal G$ is a maximal sequence $(p_0, \ldots, p_l)$
of positions $p_0, \ldots, p_{l-1} \in P_0 \cup P_1$, such that
between any subsequent positions $p_i$ and $p_{i+1}$ there
is a valid move, i.e., $(p_i, p_{i+1}) \in M$ for $0 \le i \le l-1$.
Such a play is said to have~$l$ \emph{rounds} and to \emph{end} in
position~$p_l$. 

The rules of the game are that in the $i$th round of the play, where $1 \le i \le l -
1$, the player assigned to position $p_i$ has to place a valid
\emph{move}, i.e., has to choose the next position $p_{i+1} \in \next(p_i)$.
If no such position $p_{i+1}$ exists, or the position $p_i$ is not assigned
to either of the players, the play ends.
If the play ends in a position $p_l$ with $p_l
\in P_i$, where $i \in \{0,1\}$, then the other player, Player~$(1-i)$, \emph{wins} the play. 
If, however, the play ends in a position $p_l$ with $p_l \notin P_0 \cup P_1$, then there
is a draw and none of the players wins the play.
The goal of game is to force the other player into a position where
they cannot move.

We say that a player has a \emph{winning strategy on $\mathcal G$},
if and only if they can win every play of the game irrespective
of the choices of the other player.  For instance, Player~0 has a winning
strategy on $\mathcal G$ if and only if either
\begin{itemize}
\item $p_0 \in P_0$ and there is a move $(p_0, p_1) \in M$
	such that Player~0 has a winning strategy on $\subgame_{\mathcal G}(p_1)$; or
\item $p_0 \in P_1$ and Player~0 has a winning strategy
	on $\subgame_{\mathcal G}(p_1)$ for all moves
	$(p_0, p_1) \in M$.  Note that this includes
	the case that Player~1 cannot move at all.
\end{itemize}
A game $\mathcal G$ is said to be \emph{determined} or \emph{well-founded} if
either one of the players has a winning strategy on~$\mathcal G$, otherwise 
$\mathcal G$ is \emph{undetermined}.

We fix two special games $\false$ and $\true$ on which the first player
and the second player, respectively, have winning strategies. 
One can efficiently test whether one of the player has a winning strategy on
a game~$\mathcal G$, cf.,~\cite{Gra07,Gra11}. 
Algorithm~\ref{alg:eval} determines whether one of the players has a winning strategy on a
game~$\mathcal G$ and returns either $\false$ or $\true$ if this is the case.  If
none of the players has a winning strategy, the algorithm returns a corresponding
``proof'', a list of all the plays of~$\mathcal G$ that ended with a draw.

\begin{algorithm}[tb]
Algorithm $\eval(\mathcal G)$
\halign{# \hfil&#\hfil\cr
Input: & A game $\mathcal G = (P, M, P_0, P_1, p_0)$. \cr
}
\medskip \activatealgo \obeylines
[if] $\mathcal G \in \{\true, \false\}$ [then] [return] $\mathcal G$

Let $P' = \{p_0\}$, $M' = \emptyset$, $P_0' = P_0 \cap \{p_0\}$, and $P_1' = P_1 \cap \{p_0\}$.
[for] $p' \in \next(p_0)$ [do]
    \> Let $(P'', M', P_0'', P_1'', p_0'') = \eval(\subgame_{\mathcal G}(p'))$.
    \> Update $P' := P' \cup P''$ and $P_0' := P_0' \cup P_0''$, $P_1' := P_1' \cup P_1''$.
    \> Update $M' := M' \cup M'' \cup \{(p_0', p_0'')\}$ .
Let $\mathcal G' = (P', M', P_0', P_1', p_0)$ and compute $\subgames(\mathcal G')$.
[if] $p_0 \in P_0'$ [then]
    \> [if] $\subgames(\mathcal G') = \{\true\}$ [or] $\subgames(\mathcal G') = \emptyset$ [then] [return] $\true$
    \> [if] $\false \in \subgames(\mathcal G')$ [then] [return] $\false$
[if] $p_0 \in P_1'$ [then]
    \> [if] $\subgames(\mathcal G') = \{\false\}$ [or] $\subgames(\mathcal G') = \emptyset$ [then] [return] $\false$
    \> [if] $\true \in \subgames(\mathcal G')$ [then] [return] $\true$
[return] $\mathcal G'$
\caption{\label{alg:eval} Evaluating a game.}
\end{algorithm}

In the case of the model checking game,
we call the two players the \emph{falsifier} and the \emph{verifier}.
The verifier wants to prove that a 
formula is \emph{true} on a structure (or, the structure \emph{satisfies}
the formula), while the falsifier tries to show that it is \emph{false} (or,
the structure does not satisfy the formula).
The reader may therefore call $\true$ ``true'' and $\false$ ``false''.

\begin{definition}[Model Checking Game]
\label{def:model-checking-game}
The (classical) \emph{model checking game} 
$\mcg(\ms A, \varphi) = (P, M, P_0, P_1, p_0)$
over a fully
interpreted $\tau$-structure~$\ms A$ and a formula~$\varphi \in \mso(\tau)$
is defined by induction over the structure of~$\varphi$ as follows.
Let $p_0 = (\ms A[\bar c^{\ms A}], \varphi)$, where $\bar c = \nullary(\tau)$.
If $\varphi$ is an atomic or negated formula, then
$\mcg(\ms A, \varphi) = (\{p_0\}, \emptyset, P_0, P_1, p_0)$, where
\begin{itemize}
\item $p_0 \in P_0$ if and only if
    \begin{itemize}
	\item $\varphi = R(c_1, \ldots, c_p)$
	    and $(c_1^{\ms A}, \ldots, c_p^{\ms A}) \in R^{\ms A}$, or
	\item $\psi = \neg R(c_1, \ldots, c_p)$
	    and $(c_1^{\ms A}, \ldots, c_p^{\ms A}) \notin R^{\ms A}$.
	\end{itemize}
\item $p_0 \in P_1$ if and only if 
    \begin{itemize}
	\item $\varphi = R(c_1, \ldots, c_p)$ and
	    $(c_1^{\ms A}, \ldots, c_p^{\ms A}) \notin R^{\ms A}$, or
	\item $\psi = \neg R(c_1, \ldots, c_p)$ and
	    $(c_1^{\ms A}, \ldots, c_p^{\ms A}) \in R^{\ms A}$.
	\end{itemize}
\end{itemize}

If $\varphi \in \{\forall R \psi, \exists R \psi\}$ for some relation
symbol~$R$, let $\ms A_U = (\ms A, U)$ for $U \subseteq A$ 
be the $(\tau, R)$-expansion of~$\ms A$
with $R^{\ms A_U} = U$, and
let $\mcg(\ms A_U, \psi) = (P_U, M_U, P_{0,U}, P_{1,U}, p_U)$ be
the corresponding model checking game over~$\ms A_U$ and~$\psi$.
Then $\mcg(\ms A, \varphi) = (P, M, P_0, P_1, p_0)$, where
\begin{itemize}
    \item $P = \{p_0\} \cup \bigcup_{U \subseteq A} P_U$,
    \item $M = \bigcup_{U \subseteq A} (M_U \cup \{(p_0, p_U)\})$,
    \item $P_0 = P_0' \cup \bigcup_{U \subseteq A} P_{0,U}$,
	where $P_0' = \{p_0\}$ iff $\varphi = \forall R \psi$
	and $P_0' = \emptyset$ otherwise,
    \item $P_1 = P_1' \cup \bigcup_{U \subseteq A} P_{1,U}$,
	where $P_1' = \{p_1\}$ iff $\varphi = \exists R \psi$
	and $P_1' = \emptyset$ otherwise.
\end{itemize}

If $\varphi \in \{\forall c \psi, \exists c \psi\}$ for some nullary
symbol~$c$, let $\ms A_a = (\ms A, a)$ be the 
$(\tau, c)$-expansion of~$\ms A$ with $c^{\ms A_a} = a \in A$, and
let $\mcg(\ms A_a, \psi) = (P_a, M_a, P_{0,a}, P_{1,a}, p_a)$ be
the corresponding model checking game over~$\ms A_a$ and~$\psi$.
Then $\mcg(\ms A, \varphi) = (P, M, P_0, P_1, p_0)$, where
\begin{itemize}
    \item $P = \{p_0\} \cup \bigcup_{a \in A} P_a$,
    \item $M = \bigcup_{a \in A} (M_a \cup \{(p_0, p_a)\})$,
    \item $P_0 = P_0' \cup \bigcup_{a \in A} P_{0,a}$,
	where $P_0' = \{p_0\}$ iff $\varphi = \forall c \psi$
	and $P_0' = \emptyset$ otherwise,
    \item $P_1 = P_1' \cup \bigcup_{a \in A} P_{1,a}$,
	where $P_1' = \{p_1\}$ iff $\varphi = \exists c \psi$
	and $P_1' = \emptyset$ otherwise.
\end{itemize}

If $\varphi \in \{\psi_1 \wedge \psi_2, \psi_1 \vee \psi_2\}$,
let
$\mcg(\ms A, \psi) = (P_\psi, M_\psi, P_{0,\psi}, P_{1,\psi}, p_\psi)$ be
the model checking game over~$\ms A$ and~$\psi \in \{\psi_1, \psi_2\}$.
Then $\mcg(\ms A, \varphi) = (P, M, P_0, P_1, p_0)$, where
\begin{itemize}
    \item $P = \{p_0\} \cup \bigcup_{\psi \in \{\psi_1,\psi_2\} } P_\psi$,
    \item $M = \bigcup_{\psi \in \{\psi_1,\psi_2\}}  (M_\psi \cup \{(p_0, p_\psi)\})$,
    \item $P_0 = P_0' \cup \bigcup_{\psi \in \{\psi_1,\psi_2\}}  P_{0,\psi}$,
	where $P_0' = \{p_0\}$ iff $\varphi = \psi_1 \wedge \psi_2$
	and $P_0' = \emptyset$ otherwise,
    \item $P_1 = P_1' \cup \bigcup_{\psi \in \{\psi_1,\psi_2\}}  P_{1,\psi}$,
	where $P_1' = \{p_1\}$ iff $\varphi = \psi_1 \vee \psi_2$
	and $P_1' = \emptyset$ otherwise.
\end{itemize}
\end{definition}

Note that the falsifier is the \emph{universal} player and moves on universal formulas, while
the verifier is the \emph{existential} player and moves on existential formulas. 
Furthermore, if the structure~$\ms A$ is empty, then, by definition,
$\ms A \models \forall c \psi$ and $\ms A \not\models \exists c \psi$ for all $\psi$.
In the model checking game, this corresponds to the case that there
are no moves from the current position.  Consequently, the play ends and
the player assigned to this position looses.
On non-empty structures, each play ends in an atomic or negated
atomic formula.  The goal of the verifier is to make the play end in
a position~$(\ms A', \psi)$ with $\ms A' \models \psi$, and conversely
the goal of the falsifier is to force the play into an ending position
$(\ms A', \psi)$ with $\ms A' \not\models \psi$.  It is well-known that
the classical model checking game is well-founded~\cite{Hin73} and that
the verifier has a winning strategy on $\mcg(\ms A, \varphi)$ if and
only if $\ms A \models \varphi$, see, e.g.,~\cite{Gra07}.

\subsection{An Extension of the Classical Model Checking Game}

We shall now consider an extension of the model checking game
that has the following two central properties:
\begin{itemize}
\item It is defined for partially interpreted structures;  and
\item it is ``well-defined'' under taking the union of structures in the
	sense that if one of the players has a winning strategy on the
	game on $\ms A$ and $\varphi$, then the same player has a winning
	strategy in the game on $\ms A \cup \ms B$ and $\varphi$ for all
	structures~$\ms B$ compatible with~$\ms A$.
\end{itemize}

Before we give the formal definition of the new game, let us briefly
mention why we require these properties: Recall that we want to use
the model checking game $\mcg(\ms A, \varphi)$ to decide algorithmically
whether a $\tau$-structure~$\ms A$ holds on a formula~$\varphi \in
\mso(\tau)$.  If $\varphi$ contains set quantifiers, then there is a 
number of positions in $\mcg(\ms A, \varphi)$ that grows exponentially with
the size of~$A$.  In order to avoid exponential running time on structures
of bounded treewidth, a tree decomposition~$(\mathcal T, \mathcal X)$ of~$\ms A$, where $\mathcal
T = (T, F)$, is traversed bottom-up by a dynamic programming
algorithm.  
At a node~$i \in T$, we only consider the substructure $\ms A_i$ of $\ms A$.
Let $\ms A'$ be some expansion of $\ms A$.  Then $\ms A[A_i]$ is in general
not fully interpreted, which explains the first requirement.

For the second requirement, note that for each $i \in T$ there is
a $\tau$-structure~$\ms B_i$, such that $\ms A$ can be written as
$\ms A = \ms A_i \cup \ms B_i$.  The structure $\ms B_i$ is sometimes
called the ``future'' of~$\ms A_i$ in the literature.  Therefore, if one of the players
has a winning strategy in the game on $\ms A_i$ and $\varphi$,
we require that the same player has a winning strategy on $\ms A = \ms A_i \cup \ms B_i$ and
$\varphi$.

In order to make the inductive construction work, we additionally need to
distinguish the nodes in the ``current'' bag $X_i$ of the tree decomposition.
The game therefore additionally depends on a given set $X = X_i \subseteq A$.

\begin{definition}[Extended Model Checking Game]
\label{def:extended-model-checking-game}
\label{def:outcome}

The \emph{extended model checking game} 
$\emcg(\ms A, X, \varphi) = (P, M, P_0, P_1, p_0)$
over a $\tau$-structure~$\ms A$, a set
$X \subseteq A$, and a formula~$\varphi \in \mso(\tau)$
is defined by induction over the structure of~$\varphi$ as follows.
Let $p_0 = (\ms A[X \cup \bar c^{\ms A}], X, \varphi)$, where $\bar c = \nullary(\tau)$.
If $\varphi$ is an atomic or negated formula, then
$\emcg(\ms A, \varphi) = (\{p_0\}, \emptyset, P_0, P_1, p_0)$, where
\begin{itemize}
\item $p_0 \in P_0$ if and only if either
    \begin{itemize}
    	\item $\varphi = R(c_1, \ldots, c_p)$, such that
		$\{c_1, \ldots, c_p\} \subseteq \interpreted(\ms A)$,
		and $(c_1^{\ms A}, \ldots, c_p^{\ms A}) \in R^{\ms A}$, or
	\item $\varphi = \neg R(c_1, \ldots, c_p)$, such that
		$\{c_1, \ldots, c_p\} \subseteq \interpreted(\ms A)$,
	    	and $(c_1^{\ms A}, \ldots, c_p^{\ms A}) \notin R^{\ms A}$.
    \end{itemize}
\item $p_0 \in P_1$ if and only if either
    \begin{itemize}
	\item $\varphi = R(c_1, \ldots, c_p)$, such that
	    $\{c_1, \ldots, c_p\} \subseteq \interpreted(\ms A)$,
	    and $(c_1^{\ms A}, \ldots, c_p^{\ms A}) \notin R^{\ms A}$, or
	\item $\varphi = \neg R(c_1, \ldots, c_p)$,
	    such that $\{c_1, \ldots, c_p\} \subseteq \interpreted(\ms A)$,
	    and $(c_1^{\ms A}, \ldots, c_p^{\ms A}) \in R^{\ms A}$.
    \end{itemize}
\end{itemize}

If $\varphi \in \{\forall R \psi, \exists R \psi\}$ for some relation
symbol~$R$, or $\varphi \in \{\psi_1 \wedge \psi_2, \psi_1 \vee \psi_2\}$,
then
$\emcg(\ms A, X, \varphi)$ is defined analogously to~$\mcg(\ms A, \varphi)$.

If $\varphi \in \{\forall c \psi, \exists c \psi\}$ for some nullary
symbol~$c$, let $\ms A_u = (\ms A, u)$ be the 
$(\tau, c)$-expansion of~$\ms A$ with $c^{\ms A_u} = u \in A \cup \{\nil\}$, and
let $\emcg(\ms A_u, X, \psi) = (P_u, M_u, P_{0,u}, P_{1,u}, p_u)$ be
the corresponding extended model checking game over~$\ms A_u$ and~$\psi$.
Then $\emcg(\ms A, X, \varphi) = (P, M, P_0, P_1, p_0)$, where
\begin{itemize}
    \item $P = \{p_0\} \cup \bigcup_{u \in A \cup \{\nil\}} P_u$,
    \item $M = \bigcup_{u \in A \cup \{\nil\}} (M_u \cup \{(p_0, p_u)\})$,
    \item $P_0 = P_0' \cup \bigcup_{u \in A \cup \{\nil\}} P_{0,u}$,
	where $P_0' = \{p_0\}$ iff $\varphi = \forall c \psi$
	and $P_0' = \emptyset$ otherwise,
    \item $P_1 = P_1' \cup \bigcup_{u \in A \cup \{\nil\}} P_{1,u}$,
	where $P_1' = \{p_1\}$ iff $\varphi = \exists c \psi$
	and $P_1' = \emptyset$ otherwise.
\end{itemize}

\end{definition}

\bigskip
For the games we consider throughout this paper, one can derive from a position~$p \in P$ whether
$p \in P_0$ or $p \in P_1$ (cf., the definitions of~$\mcg$ and $\emcg$).
To avoid cluttered notation, we shall therefore usually omit the sets~$P_0$ and $P_1$ from the
tuple $(P, M, P_0, P_1, p_0)$ and identify games with the triple~$(P, M, p_0)$.
Figure~\ref{fig:unfolding} shows a simplified schematic of an extended model checking game and
the result after an application of the evaluation algorithm~$\eval$.

\begin{figure}[tbp]
\begin{center}
\includegraphics[scale=0.8]{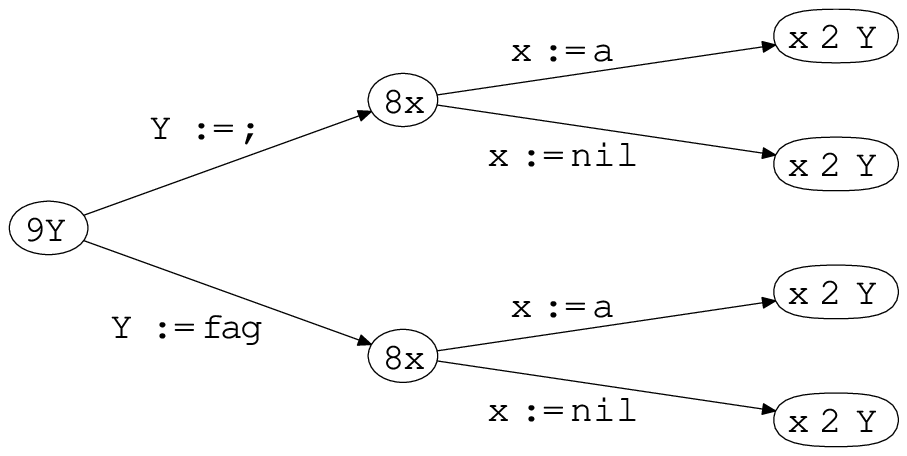}

\includegraphics[scale=0.8]{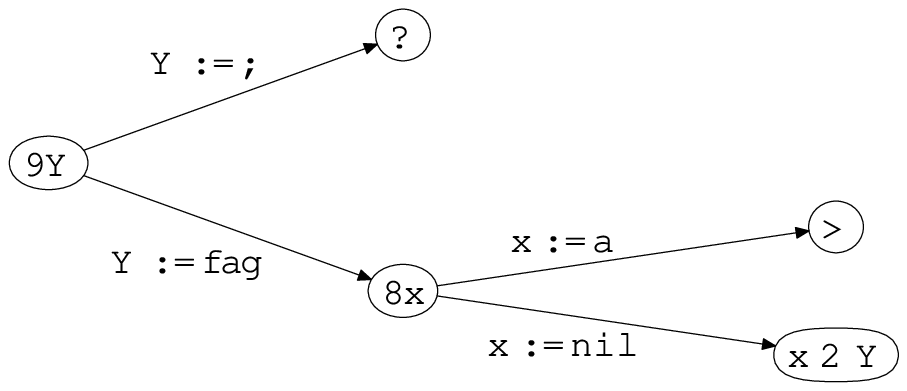}
\end{center}
\caption{Top: simplified schematic of $\emcg(\ms A, \emptyset, \varphi)$ for the structure $\ms A$ with universe $A = \{a\}$ 
and $\varphi = \exists Y \forall x (x \in Y)$.  Bottom: $\eval(\emcg(\ms A, \emptyset, \varphi))$.  The
lower branch witnesses a play that ends with a draw.%
\label{fig:unfolding}
\label{fig:eval}}
\end{figure}

If $\ms A$ is a fully interpreted structure, $\mcg(\ms A, \varphi)$ can be embedded into $\emcg(\ms A, X, \varphi)$
such that for each play of $\mcg(\ms A, \varphi)$ there is a corresponding, equivalent play of~$\emcg(\ms A, X, \varphi)$.
Algorithm~\ref{alg:convert} effectively computes this embedding (Lemma~\ref{lem:convert-me}).
Furthermore, if $\emcg(\ms A, X, \varphi)$ is determined, then so is
$\mcg(\ms A, \varphi)$ (Lemma \ref{lem:emcg-determines-mcg}).

\begin{lemma}
\label{lem:convert-me}
Let $\ms A$ be a fully interpreted $\tau$-structure, $X \subseteq A$, and $\varphi \in \mso(\tau)$.
Then, using Algorithm~\ref{alg:convert}, we have
$$
\mcg(\ms A, \varphi) = \convert(\emcg(\ms A, X, \varphi)).
$$
\end{lemma}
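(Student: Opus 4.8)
The plan is to prove the identity by structural induction on the formula $\varphi$, establishing it simultaneously for every fully interpreted $\tau$-structure $\ms A$, every $X \subseteq A$, and every $\varphi \in \mso(\tau)$, so that the induction hypothesis is available for the expanded structures that arise when descending into quantifiers. The only ways in which $\emcg(\ms A, X, \varphi)$ differs from $\mcg(\ms A, \varphi)$ are: (i) each position carries the extra component $X$, and its structure is induced by $X \cup \bar c^{\ms A}$ rather than by $\bar c^{\ms A}$ alone; (ii) an object quantifier $\forall c \psi$ or $\exists c \psi$ branches over $A \cup \{\nil\}$ instead of over $A$, i.e.\ it has one extra empty-assignment branch; and (iii) the player assignment at an atomic position additionally insists that the occurring constants be interpreted. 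The role of $\convert$ is precisely to reconcile (i)--(iii): it discards the $\nil$-branches and, on each retained position, restricts the structure to its interpreted nullary symbols and drops the $X$-component. I would show that performing these operations on $\emcg(\ms A, X, \varphi)$ reproduces $\mcg(\ms A, \varphi)$ position-for-position and move-for-move.

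For the base case, $\varphi$ is atomic or negated atomic and both games consist of a single position. Since $\ms A$ is fully interpreted, every nullary symbol lies in $\interpreted(\ms A)$, so the side condition $\{c_1, \ldots, c_p\} \subseteq \interpreted(\ms A)$ of Definition~\ref{def:extended-model-checking-game} holds vacuously, and the extended game assigns $p_0$ to the same player as the classical game, namely according to whether $(c_1^{\ms A}, \ldots, c_p^{\ms A}) \in R^{\ms A}$. Moreover, restricting $\ms A[X \cup \bar c^{\ms A}]$ to $\bar c^{\ms A}$ yields exactly $\ms A[\bar c^{\ms A}]$, the structure of the classical position, so $\convert$ turns the unique $\emcg$-position into the unique $\mcg$-position. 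The cases $\varphi \in \{\psi_1 \wedge \psi_2, \psi_1 \vee \psi_2\}$ and $\varphi \in \{\forall R \psi, \exists R \psi\}$ are immediate: here $\emcg$ is defined analogously to $\mcg$ and introduces no $\nil$-branch, the assignment of $p_0$ agrees by definition (universal versus existential), and every expansion $\ms A_U = (\ms A, U)$ remains fully interpreted because it adjoins a relation rather than a nullary symbol. Hence the induction hypothesis applies to each child subgame and $\convert$ reassembles $\mcg(\ms A, \varphi)$ from the converted children.

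The crux is object quantification, $\varphi \in \{\forall c \psi, \exists c \psi\}$. Here $\emcg(\ms A, X, \varphi)$ has one child $\emcg(\ms A_u, X, \psi)$ for each $u \in A \cup \{\nil\}$, whereas $\mcg(\ms A, \varphi)$ has one child $\mcg(\ms A_a, \psi)$ for each $a \in A$. For $u = a \in A$ the expansion $\ms A_a = (\ms A, a)$ is again fully interpreted, so by the induction hypothesis $\convert(\emcg(\ms A_a, X, \psi)) = \mcg(\ms A_a, \psi)$, giving a bijection between the non-$\nil$ children of the extended game and the children of the classical game. The single remaining child, obtained from $u = \nil$, lives over the partially interpreted structure $\ms A_{\nil}$ and has no counterpart in $\mcg$; it is exactly this subtree that $\convert$ prunes. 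Together with the fact that $p_0$ is assigned to the same player in both games, this shows that after pruning the $\nil$-branch and relabeling the retained positions one obtains precisely $\mcg(\ms A, \varphi)$.

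The main obstacle I anticipate is bookkeeping rather than conceptual: one must verify that $\convert$ prunes \emph{exactly} the $\nil$-subtrees and nothing else, and that its local relabeling produces the classical positions verbatim, using the substructure identity that $\ms A_u[X \cup \bar c^{\ms A_u}]$ restricted to the interpreted constants equals $\ms A_u[\bar c^{\ms A_u}]$. This has to be argued in lock-step with the recursion of $\convert$ so that the equalities of positions, of the move relation, and of the player sets $P_0, P_1$ all hold at once. Crucially, the partially interpreted structures appearing on $\nil$-branches never reach the induction hypothesis, since they are removed before the recursion descends, which is what keeps the argument well-founded.
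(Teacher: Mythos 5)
Your proposal is correct and follows essentially the same route as the paper's proof: a structural induction on $\varphi$ in which the Boolean and set-quantifier cases go through because the relevant expansions remain fully interpreted, and the object-quantifier case works because the $\nil$-branch is exactly what $\convert$ prunes (the position's structure being fully interpreted if and only if the underlying expansion is), so the induction hypothesis applies to all retained children. Your additional bookkeeping remarks (the substructure identity and position-for-position relabeling) only spell out details the paper's proof treats as immediate from the definitions.
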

\begin{proof}
The proof is an induction over the structure of~$\varphi$.
For atomic or negated atomic formulas, the statement trivially holds by
definition of~$\mcg(\ms A, \varphi)$, since $\subgames(\emcg(\ms A, X, \varphi)) = \emptyset$.
Let $\mathcal G = \emcg(\ms A, X, \varphi) = (P, M, p_0)$.

Let $\varphi \in \{\forall R \psi, \exists R \psi\}$ or
$\varphi \in \{\psi_1 \wedge \psi_2, \psi_1 \vee \psi_2\}$ and $\psi \in \{\psi_1, \psi_2\}$
and consider $p = (\ms H, X, \psi) \in \next_{\mathcal G}(p_0)$.
We have $\subgame_{\mathcal G}(p) = \emcg(\ms A', X, \psi)$,
where either $\ms A' = (\ms A, U)$ is an $(\tau, R)$-expansion
of~$\ms A$ for some $U \subseteq A$, or $\ms A' = \ms A$, respectively.
Since $\ms A$ is fully interpreted, $\ms A'$ is fully interpreted, and we
obtain $\mcg(\ms A', \psi) = \convert(\emcg(\ms A', X, \psi))$ by the induction hypothesis.

If otherwise $\varphi \in \{\forall c \psi, \exists c \psi\}$, consider
$p = (\ms H, X, \psi) \in \next_{\mathcal G}(p_0)$.
By definition, $\subgame_{\mathcal G}(p) = \emcg(\ms A', X, \psi)$, where
$\ms A'$ is a $(\tau, c)$-expansion of~$\ms A$ with $c^{\ms A'} \in A \cup \{\nil\}$.
If all constant symbols are interpreted in $\ms H$, then
$c^{\ms A'} \neq \nil$, i.e., $\ms A'$ is fully interpreted.
By the induction hypothesis we get $\mcg(\ms A', \psi) = \convert(\emcg(\ms A', X, \psi))$.

Together, the statement follows.
\end{proof}

\begin{algorithm}[tb]
Algorithm $\convert(\mathcal G)$

\halign{# \hfil&#\hfil\cr
Input: & A game $\mathcal G = (P, M, p_0)$. \cr
}
\medskip \activatealgo \obeylines
[if] $\mathcal G \in \{\true, \false\}$ [then] [return] $\mathcal G$.
Let $p_0 = (\ms H, X, \varphi)$ and $\bar c = \nullary(\vocabulary(\ms H))$.
Let $p_0' = (\ms H[\bar c^{\ms H}], \varphi)$, $P' = \{p_0'\}$, and $M' = \emptyset$.
[for] $p_1 = (\ms H_1, X, \psi) \in \next(p_0)$ s.t.\ $\ms H_1$ is fully interpreted [do]
    \> Let $(P_1', M_1', p_1') = \convert(\subgame_{\mathcal G}(p_1))$.
    \> Update $P' := P' \cup P_1'$ and $M' := M' \cup M_1' \cup \{(p_0', p_1')\}$.
[return] $\mathcal (P', M', p_0')$

\caption{\label{alg:convert} Converting $\emcg$ to \mcg}
\end{algorithm}

We now prove that if an extended model game is determined, then the corresponding player can win the game without using
any further ``$\nil$-moves''.  This will be useful in the proof of Lemma~\ref{lem:emcg-determines-mcg}.

\begin{lemma}
\label{lem:partially-replaced}
Let $\ms A_1$ and $\ms A_2$ be $\tau$-structures with $A_1 = A_2$ and $c \in \nullary(\tau)$, such
that $c^{\ms A_1} = \nil$, $R^{\ms A_1} = R^{\ms A_2}$ for all $R \in \relations(\tau)$ and
$d^{\ms A_1} = d^{\ms A_2}$ for all $d \in \nullary(\tau) \setminus \{c\}$.
Let $\varphi \in \mso(\tau)$.

If $\eval(\emcg(\ms A_1, X, \varphi)) \in \{\true, \false\}$, then
$A_1 \neq \emptyset$ and
$$
\eval(\emcg(\ms A_1, X, \varphi)) = \eval(\emcg(\ms A_2, X, \varphi)).
$$
\end{lemma}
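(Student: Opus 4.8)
The plan is to prove both assertions by induction on the structure of~$\varphi$, reading off the subgames from the inductive definition of~$\emcg$ (Definition~\ref{def:extended-model-checking-game}) and the outcomes from the branches of the evaluation algorithm~$\eval$ (Algorithm~\ref{alg:eval}). First I would settle $A_1 \neq \emptyset$ by proving the contrapositive as a self-contained induction: if a $\tau$-structure $\ms A$ has empty universe, then $\eval(\emcg(\ms A, X, \varphi)) \notin \{\true, \false\}$ for all $X$ and $\varphi$. For an atomic or negated atomic formula this holds because no nullary symbol can be interpreted in an empty structure, so the initial position lies in neither $P_0$ nor $P_1$ and $\eval$ returns the undetermined one-position game. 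In every inductive case each move leads to a subgame over a structure whose universe is still empty (an assignment $c' := \nil$, a relation $R := \emptyset$, or passing to a conjunct or disjunct never enlarges the universe), and at least one such subgame always exists; by the induction hypothesis all children are undetermined, so no early-return condition of $\eval$ fires and the returned game is again undetermined. Applied to $\ms A_1$, this yields $A_1 \neq \emptyset$ whenever $\eval(\emcg(\ms A_1, X, \varphi)) \in \{\true, \false\}$.

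For the equality, the observation that drives everything is that $\ms A_1$ and $\ms A_2$ differ only in the interpretation of the single nullary symbol~$c$, and that, by the MSO formation rules, $c \in \nullary(\tau)$ is never rebound by a quantifier inside~$\varphi$. Consequently no move of the game depends on the value of~$c$: the games $\emcg(\ms A_1, X, \varphi)$ and $\emcg(\ms A_2, X, \varphi)$ have the same move structure, their immediate subgames stand in a canonical bijection, and corresponding subgames are again of the form $\emcg(\ms B_1, X, \psi)$, $\emcg(\ms B_2, X, \psi)$ with $\ms B_1, \ms B_2$ differing only in the interpretation of~$c$, so the induction hypothesis applies to each pair. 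The base case is then immediate: if $c$ is not among the arguments of the atomic formula, the position is assigned to the same player in both structures (they agree on all relations and all other nullary symbols) and the evaluations coincide; if $c$ does occur among the arguments, then in $\ms A_1$ it is uninterpreted, the position is a draw, and $\eval(\emcg(\ms A_1, X, \varphi)) \notin \{\true, \false\}$, so the implication holds vacuously.

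The inductive step matches the branches of $\eval$, which act as a min/max with draws: at a verifier position ($\vee, \exists$) the algorithm returns $\true$ exactly when some child evaluates to $\true$ and returns $\false$ exactly when \emph{all} children evaluate to $\false$, and at a falsifier position ($\wedge, \forall$) the roles of $\true$ and $\false$ are exchanged. Assume $\eval(\emcg(\ms A_1, X, \varphi)) = v \in \{\true, \false\}$. If $v$ is produced by a ``some child'' branch, I fix one witnessing child; it is determined in~$\ms A_1$, so by the induction hypothesis its partner in~$\ms A_2$ carries the same value, the same branch fires, and $\ms A_2$ also evaluates to~$v$. If $v$ is produced by an ``all children'' branch, then every child is determined with value~$v$ in~$\ms A_1$; the induction hypothesis transfers each to its partner, and since the children are in bijection all children of the $\ms A_2$-game carry~$v$ as well, so the branch again fires and yields~$v$. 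In either case $\eval(\emcg(\ms A_2, X, \varphi)) = v$, which completes the induction.

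The routine part is the case enumeration; the step needing genuine care is the ``all children'' branch. There I must know that \emph{every} child of the $\ms A_1$-game is determined --- not merely the one witnessing the value --- so that the induction hypothesis may be applied to each, and that the bijection between the children of the two games is exact, so that no new and possibly decisive child can appear in the $\ms A_2$-game and flip the outcome. Establishing that the move structure truly does not depend on the interpretation of~$c$ (equivalently, that $c$ is free and never rebound) is the crux on which this exact correspondence, and hence the whole transfer, rests.
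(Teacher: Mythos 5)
Your proof is correct and follows essentially the same route as the paper's: a structural induction on $\varphi$ in which the subgames of $\emcg(\ms A_1, X, \varphi)$ and $\emcg(\ms A_2, X, \varphi)$ are in canonical bijection (indexed by the same sets $U \subseteq A_1 = A_2$, elements $u \in A_1 \cup \{\nil\}$, or conjuncts), with corresponding expansions still differing only in the uninterpreted symbol $c$, so the induction hypothesis transfers determined values child by child. Your extra scaffolding---the separate contrapositive induction showing that an empty universe forces an undetermined game, and the explicit case analysis of \eval's ``some child''/``all children'' branches---just spells out what the paper leaves implicit (the paper establishes $A_1 \neq \emptyset$ only in the atomic base case and compresses the inductive step into a one-line appeal to the induction hypothesis).
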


Before we give the formal proof, consider the following high-level argument:
Suppose that $\eval(\emcg(\ms A_1, X, \varphi)) = \true$.
Then there is at least one play of the game $\emcg(\ms A_1, X, \varphi)$ that is
won by the verifier.
Consider an arbitrary play $(p_0, \ldots, p_l)$ won by the verifier
and let $p_l = (\ms H, X, \psi)$.  Since $p_l$ is assigned to the falsifier,
all constant symbols occurring in $\psi$ are interpreted and hence different
from~$c$.  The verifier can therefore win the game without depending on formulas
where~$c$ occurs.

\begin{proof}
The proof is an induction over the structure of~$\varphi$.

Let $\eval(\emcg(\ms A_1, X, \varphi)) \in \{\true, \false\}$.  If $\varphi$
is an atomic or negated formula, say $\varphi = R(c_1, \ldots, c_p)$,
then $\{c_1, \ldots, c_p\} \subseteq \interpreted(\ms A_1)$.
Therefore, $A_1 \neq \emptyset$ and for all $1 \le i \le p$, we have
$c \neq c_i$ and $c_i^{\ms A_1} = c_i^{\ms A_2}$, which implies
$\eval(\emcg(\ms A_1, X, \varphi)) = \eval(\emcg(\ms A_2, X, \varphi))$.

If $\varphi \in \{\forall R \psi, \exists R \psi\}$ for a relation symbol~$R$,
let $U \subseteq A$ and $\ms A_1'$, $\ms A_2'$ be the $(\tau, R)$-expansions
of~$\ms A_1$ and $\ms A_2$, respectively, with
$R^{\ms A_1'} = R^{\ms A_2'} = U$.
Then
by the induction hypothesis
$\eval(\emcg(\ms A_1', X, \psi)) = \eval(\emcg(\ms A_2', X, \psi))$
if $\eval(\emcg(\ms A_1', X, \psi)) \in \{\true, \false\}$.

Similarly, if $\varphi \in \{\forall d \psi, \exists d \psi\}$ for a nullary symbol~$d$,
let $\ms A_1'$ and $\ms A_2'$ be $(\tau, d)$-expansions of~$\ms A_1$ and $\ms A_2$, respectively,
such that $d^{\ms A_1'} = d^{\ms A_2'}$.
Then by the induction hypothesis
$\eval(\emcg(\ms A_1', X, \psi)) = \eval(\emcg(\ms A_2',\allowbreak X, \psi))$,
if $\eval(\emcg(\ms A_1', X, \psi)) \in \{\true, \false\}$.

Finally, if $\varphi \in \{\psi_1 \wedge \psi_2, \psi_1 \vee \psi_2\}$,
then from $\eval(\emcg(\ms A_1, X, \psi)) \in \{\true, \false\}$, where
$\psi \in \{\psi_1, \psi_2\}$, we get
$\eval(\emcg(\ms A_1', X, \psi)) = \eval(\emcg(\ms A_2', X, \psi))$.

Together, the statement of the lemma follows.
\end{proof}

We can now prove that if some player has a winning strategy in the extended model checking
game, then the same player has a winning strategy in the classical model checking game.

\begin{lemma}
\label{lem:emcg-determines-mcg}
Let $\ms A$ be a fully interpreted $\tau$-structure, $X \subseteq A$, and $\varphi \in \mso(\tau)$.
If $\eval(\emcg(\ms A, X, \varphi)) \in \{\true, \false\}$, then
$$
\eval(\mcg(\ms A, \varphi)) = \eval(\emcg(\ms A, X, \varphi)).
$$
\end{lemma}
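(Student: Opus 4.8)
The plan is to argue by induction on the structure of $\varphi$, comparing the two games branch by branch. A useful preliminary observation is that, because $\ms A$ is fully interpreted, the classical game $\mcg(\ms A,\varphi)$ is well-founded, so $\eval(\mcg(\ms A,\varphi)) \in \{\true,\false\}$ holds automatically; it therefore suffices to show that this value agrees with the (by hypothesis determined) value of $\eval(\emcg(\ms A,X,\varphi))$. By Lemma~\ref{lem:convert-me} we have $\mcg(\ms A,\varphi)=\convert(\emcg(\ms A,X,\varphi))$, and $\convert$ retains exactly the subgames issued from fully interpreted structures. The only structural difference between the two games is thus the extra ``$\nil$-branch'' $\emcg((\ms A,\nil),X,\psi)$ that $\emcg$ creates at each object-quantifier position, and the whole proof reduces to showing that such branches never alter a determined outcome.

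For the base case, where $\varphi$ is atomic or negated atomic, both games consist of the single position $p_0$, which is assigned to the same player in both games: since $\ms A$ is fully interpreted the side condition $\{c_1,\dots,c_p\}\subseteq\interpreted(\ms A)$ in the definition of $\emcg$ is satisfied, so the two games have the same value. For $\varphi\in\{\exists R\psi,\forall R\psi,\psi_1\vee\psi_2,\psi_1\wedge\psi_2\}$ the two games branch identically, and I would simply follow the determination rules of Algorithm~\ref{alg:eval}: a determined verifier (existential) node returns $\true$ exactly when some child is $\true$ and $\false$ exactly when all children are $\false$, and dually for a falsifier (universal) node. In the ``some child'' subcase I apply the induction hypothesis to a single witnessing child, and in the ``all children'' subcase to every child; in both situations the invoked children are determined, so the hypothesis applies and the $\mcg$ node evaluates in the same way.

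The object-quantifier case is the crux. For $\varphi\in\{\exists c\psi,\forall c\psi\}$ the branches $(\ms A,a)$ with $a\in A$ are fully interpreted and are covered directly by the induction hypothesis, but the additional branch $(\ms A,\nil)$ is not fully interpreted, so the hypothesis does not apply to it; this is precisely where Lemma~\ref{lem:partially-replaced} enters. Whenever the outcome is forced by the ``all children'' rule (a verifier node evaluating to $\false$, or a falsifier node to $\true$), the $\nil$-branch is itself determined, and Lemma~\ref{lem:partially-replaced} then yields both $A\neq\emptyset$ and the equality of the $\nil$-branch with every $a$-branch, so the same determination already holds over the $A$-branches that are present in $\mcg$. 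Whenever the outcome is instead witnessed by a single $\true$ (verifier) or $\false$ (falsifier) child, that witness is either an $a$-branch, handled by the induction hypothesis, or the $\nil$-branch, in which case Lemma~\ref{lem:partially-replaced} transfers its value to some $a$-branch (one exists since $A\neq\emptyset$), again producing a witnessing branch in $\mcg$.

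I expect the main obstacle to be the bookkeeping that lines up the determination rules of Algorithm~\ref{alg:eval} with the differing applicability conditions of the two inductive tools: the induction hypothesis may be invoked only on fully interpreted $a$-branches and only after that branch has been shown determined, whereas the $\nil$-branch must always be routed through Lemma~\ref{lem:partially-replaced}, which conveniently also excludes the degenerate case $A=\emptyset$. Matching the ``some child'' versus ``all children'' conditions correctly for both players, and verifying that a genuine witnessing $a$-branch exists before appealing to the induction hypothesis, is where the care is required; once these are in place, the remaining cases are routine.
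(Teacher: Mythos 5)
Your proposal is correct and follows essentially the same route as the paper's own proof: induction over the structure of $\varphi$, with the relation-quantifier and Boolean cases handled directly by the induction hypothesis, and Lemma~\ref{lem:partially-replaced} invoked precisely to transfer a determined $\nil$-branch to a fully interpreted $a$-branch (which also rules out $A = \emptyset$) in the object-quantifier case. Your treatment is somewhat more explicit than the paper's (which writes out only the \true{} case and cites the \false{} case as analogous), but the underlying argument is identical.
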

\begin{proof}
The proof is an induction over the structure of~$\varphi$.

Suppose $\eval(\emcg(\ms A, X, \varphi)) = \true$ (the case $\false$ is shown
analogously).  If $\varphi$ is an atomic or negated atomic formula, then the statement
clearly holds.
If $\varphi = \psi_1 \wedge \psi_2$, then for each $\psi \in \{\psi_1, \psi_2\}$ we
have $\eval(\emcg(\ms A, X, \psi)) = \true$.  This implies $\eval(\mcg(\ms A, \psi)) = \true$
by the induction hypothesis, and therefore~$\eval(\mcg(\ms A, \varphi)) = \true$.

Similarly, if $\varphi = \forall R \psi$ for a relation symbol~$R$,
then $\eval(\emcg(\ms A', X, \psi)) = \true$ for each $(\tau, R)$-expansion $\ms A'$ of $\ms A$,
each of which is fully interpreted.
We get $\eval(\mcg(\ms A', \psi)) = \true$ by the induction hypothesis.
Hence, $\eval(\mcg(\ms A, \varphi)) = \true$.

If $\varphi = \forall c \psi$ for a nullary symbol~$c$,
then $\eval(\emcg(\ms A', X, \psi)) = \true$ for each fully interpreted
$(\tau, c)$-expansion $\ms A'$ of $\ms A$.
This implies $\eval(\mcg(\ms A', \psi)) = \true$ by the induction hypothesis,
and therefore~$\eval(\mcg(\ms A, \varphi)) = \true$.

If $\varphi = \psi_1 \vee \psi_2$, then there is $\psi \in \{\psi_1, \psi_2\}$ 
with $\eval(\emcg(\ms A, X, \psi)) = \true$.  We get
$\eval(\mcg(\ms A, \psi)) = \true$
by the induction hypothesis, and therefore~$\eval(\mcg(\ms A, \varphi)) = \true$.

Similarly, if $\varphi = \exists R \psi$ for a relation symbol~$R$,
then there is a $(\tau, R)$-expansion $\ms A'$ of $\ms A$ with
$\eval(\emcg(\ms A', X, \psi)) = \true$.
Since $\ms A$ is fully interpreted, $\ms A'$ is fully interpreted.
Using the induction hypothesis, we have $\eval(\mcg(\ms A', \psi)) = \true$
and therefore~$\eval(\mcg(\ms A, \varphi)) = \true$.

Finally, if $\varphi = \exists c \psi$ for a nullary symbol~$c$, then
there is a $(\tau, c)$-expansion $\ms A'$ of $\ms A$ with $\eval(\emcg(\ms
A', X, \psi)) = \true$. 
By Lemma~\ref{lem:partially-replaced}, we can assume~$c^{\ms A} \neq \nil$.
Then $\ms A'$ is fully interpreted and we get $\eval(\mcg(\ms A', \psi)) = \true$
by the induction hypothesis.  Therefore
$\eval(\mcg(\ms A, \varphi)) = \true$.
\end{proof}

We can significantly strengthen this statement further: If $\emcg(\ms
A, X, \varphi)$ is determined, then $\emcg(\ms A \cup \ms B,
X, \varphi)$ is also determined for \emph{all}~$\ms B$ compatible with~$\ms
A$.  Note that the union $\ms A \cup \ms B$ arises on \emph{join} or
\emph{introduce} nodes~$i$ of the tree decomposition, where $X =
X_i$ is the current bag, cf., Figure~\ref{fig:join}.

Recall, for instance, the example \textsc{3-Colorability} from the introduction:  If 
a subgraph $\ms A'$ of a graph $\ms A$ is not three-colorable, then clearly
$\ms A$ is not three-colorable either.
The following lemma formalizes this observation.

Let us give a brief high-level explanation before we state the lemma and
give its proof.  Roughly speaking, if $\mathcal G = \emcg(\ms A, X, \varphi)$ is
determined, then moves to objects $b \in B \setminus A$ in
$\mathcal G' = \emcg(\ms A \cup \ms B, X, \varphi)$
are either ``irrelevant'' for a player's strategy
or already ``sufficiently'' captured by moves to $\nil$
(cf., Lemma~\ref{lem:partially-replaced}).
If therefore one of the players, say the falsifier, has a
winning strategy in $\mathcal G$, then in some sense this winning
strategy carries over to $\mathcal G'$.
In the case of \textsc{3-Colorability}, if $\ms A$ is not three-colorable, then 
the falsifier has a winning strategy on~$\emcg(\ms A, X, \col)$:
No matter which three sets the verifier chooses, either these sets are not a partition
or not independent sets.  In either case there are witnessing vertices
that the falsifier can choose.  Thus, no matter which subsets
the verifier chooses in $\mathcal G' = \emcg(\ms A \cup \ms B, X,
\col)$, the falsifier can then choose the same witnessing vertices
to win each play of~$\mathcal G'$.

\begin{lemma}[Introduce]
\label{lem:emcg-eval-introduce}
Let $\ms A$ and $\ms B$ be compatible $\tau$-structures with $B = A \uplus \{b\}$.
Let $X \subseteq A$ and $\varphi \in \mso(\tau)$.  Let $\mathcal G = \emcg(\ms A, X, \varphi)$ and
$\mathcal G' = \emcg(\ms B, X \cup \{b\}, \varphi)$.

\begin{enumerate}
\item If $\eval(\mathcal G) = \true$, then $\eval(\mathcal G') = \true$.
\item If $\eval(\mathcal G) = \false$, then $\eval(\mathcal G') = \false$.
\end{enumerate}
\end{lemma}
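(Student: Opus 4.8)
The plan is to prove both statements by a simultaneous induction on the structure of $\varphi$, following the same inductive skeleton used in Lemma~\ref{lem:emcg-determines-mcg} and Lemma~\ref{lem:partially-replaced}. The key structural fact driving the argument is that $\mathcal G' = \emcg(\ms B, X \cup \{b\}, \varphi)$ has, at its root, \emph{exactly the same} set of available moves as $\mathcal G = \emcg(\ms A, X, \varphi)$ plus possibly one additional move, namely the move that assigns the fresh object~$b$ to a first-order variable (in the quantifier cases $\forall c\,\psi$ and $\exists c\,\psi$). For the Boolean and set-quantifier cases, the branching structure of $\mathcal G'$ mirrors that of $\mathcal G$ one-to-one: a set $U \subseteq A$ in $\mathcal G$ corresponds to the set $U$ (viewed as a subset of $B$) in $\mathcal G'$, and the subgames recurse on compatible structures $(\ms A, U)$ and $(\ms B, U)$ with universes differing again only by~$b$. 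So the only genuinely new phenomenon is the extra object-move to~$b$.

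First I would treat the base case and the ``easy'' inductive cases. For atomic or negated atomic $\varphi$, every occurring constant is interpreted in $\ms A$ already (this is exactly the guard in the definition of $\emcg$ ensuring $p_0$ is assigned to a player), so the assignment of the root to $P_0$ or $P_1$ is identical in $\ms A$ and $\ms B$, and the evaluation agrees. For $\varphi = \psi_1 \wedge \psi_2$ (and dually $\psi_1 \vee \psi_2$): if $\eval(\mathcal G) = \true$, then both conjunct-subgames evaluate to $\true$; apply the induction hypothesis to each to get the corresponding subgames of $\mathcal G'$ equal to $\true$, whence $\eval(\mathcal G') = \true$. For $\varphi = \forall R\,\psi$: if $\eval(\mathcal G) = \true$, then for every $U \subseteq A$ the subgame $\emcg((\ms A, U), X, \psi)$ evaluates to $\true$; the subgames of $\mathcal G'$ range over $U \subseteq B$, so I must handle two kinds of $U$: those with $b \notin U$ (identified with a subset of $A$, handled by the induction hypothesis) and those with $b \in U$. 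The latter is the place where compatibility and the fresh-object structure must be used to relate $(\ms B, U)$ back to a determined $\ms A$-subgame --- I anticipate reducing $U \cap A$ to the $\ms A$-side and invoking the induction hypothesis, since $b$ is isolated enough to not create new winning plays for the falsifier.

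The heart of the proof, and what I expect to be the main obstacle, is the first-order quantifier case, say $\varphi = \exists c\,\psi$ with $\eval(\mathcal G) = \true$ (the $\false$ and $\forall$ sub-cases being dual/analogous). Here the verifier, having a winning strategy in $\mathcal G$, picks some $u \in A \cup \{\nil\}$ with $\eval(\emcg((\ms A, u), X, \psi)) = \true$. In $\mathcal G'$ the available assignments are $u' \in B \cup \{\nil\} = A \cup \{b, \nil\}$, a strict superset, so the verifier can simply reuse the \emph{same} choice $u$; I then need $\eval(\emcg((\ms B, u), X \cup \{b\}, \psi)) = \true$, which follows from the induction hypothesis applied to the compatible pair $(\ms A, u)$ and $(\ms B, u)$ --- provided $u \ne \nil$ or we argue that the $\nil$-choice behaves consistently across $\ms A$ and $\ms B$. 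The genuinely delicate point is the \emph{universal} dual, $\varphi = \forall c\,\psi$ with $\eval(\mathcal G) = \false$: the falsifier has a witness $u$ making $\psi$ false, and again reusing $u$ in $\mathcal G'$ suffices; but to conclude $\eval(\mathcal G') = \false$ I must also rule out that the \emph{new} assignment $c := b$ rescues the verifier. This is where Lemma~\ref{lem:partially-replaced} is essential: a $\nil$-assignment in the determined $\ms A$-game can be ``upgraded'' to a genuine object assignment without changing the outcome, so the falsifier's determined strategy on $\ms A$ already dominates every response, including the move to~$b$.

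To make the $b$-move case fully rigorous I would argue as follows: when a player assigns $c := b$ in $\mathcal G'$, the resulting structure $(\ms B, b)$ restricted to the relevant bag differs from $(\ms A, \nil)$ only in that $c$ is now interpreted by the fresh, relationally-isolated element~$b$; since $\ms A$ and $\ms B$ are compatible and $b$ participates in no tuples shared with $A$ at this stage, the outcome of $\emcg((\ms B, b), X \cup \{b\}, \psi)$ is governed by the already-determined outcome of the $\nil$-branch of $\mathcal G$ via Lemma~\ref{lem:partially-replaced}. Thus every new branch introduced by~$b$ collapses to an outcome the winning player already controls, the extra move cannot overturn a determined game, and the induction closes. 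I would phrase the whole argument in terms of $\eval$-values rather than explicit strategies to keep it parallel with the preceding two lemmas, so that the routine verifications reduce to straightforward appeals to the induction hypothesis and to Lemma~\ref{lem:partially-replaced}.
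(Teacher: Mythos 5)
Your overall skeleton---induction on the structure of $\varphi$, mapping each move of $\mathcal G'$ back to a move of $\mathcal G$ (a set $U' \subseteq B$ to $U' \setminus \{b\}$, the moves $c := b$ and $c := \nil$ to $c := \nil$)---is exactly the paper's, and your atomic, Boolean, and set-quantifier cases go through essentially as you describe. The genuine gap sits in the one case you yourself call the heart of the proof: the new move $c := b$. Your justification for it rests on two claims that are both false. First, ``$b$ participates in no tuples shared with $A$'': compatibility of $\ms A$ and $\ms B$ only forces the two structures to agree on the common universe $A$; the structure $\ms B$ may contain arbitrary tuples joining $b$ to elements of $A$---indeed that is the whole point of an introduce node (the new vertex is typically adjacent to vertices of the current bag)---so no ``relational isolation'' argument can be sound. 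Second, Lemma~\ref{lem:partially-replaced} requires the two structures to have the \emph{same} universe ($A_1 = A_2$); it cannot be applied, as your last paragraph does, to relate $\emcg((\ms B, b), X \cup \{b\}, \psi)$ (universe $B$) to the $\nil$-branch of $\mathcal G$ (universe $A$). As written, the step fails.

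The repair is close at hand, and it is instructive that the paper's proof needs no appeal to Lemma~\ref{lem:partially-replaced} at all. The $(\tau, c)$-expansions $(\ms A, \nil)$ (with $c$ uninterpreted) and $(\ms B, b)$ are themselves \emph{compatible}: the agreement condition on nullary symbols is vacuous because $c$ is interpreted in only one of them, and the identity on $A$ is still an isomorphism of the induced substructures, since $c$ becomes uninterpreted in $(\ms B, b)[A]$. Their universes differ by exactly $b$, so the induction hypothesis applied to this pair and to $\psi$ already transfers a determined outcome of the $\nil$-branch of $\mathcal G$ to the $b$-branch of $\mathcal G'$, with no side condition on how $b$ sits inside $\ms B$. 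Alternatively, your route can be salvaged as a two-step chain: first apply the induction hypothesis to the compatible pair $(\ms A, \nil)$, $(\ms B, \nil)$, and only then apply Lemma~\ref{lem:partially-replaced} \emph{inside} $\ms B$ (both structures now share the universe $B$) to replace $c = \nil$ by $c = b$; note that this lemma holds for arbitrary replacement objects, so the isolation hypothesis is not needed there either. Either fix closes the induction; without one of them, your proof of the crucial case does not stand.
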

\begin{proof}
We prove the lemma by induction over the structure of~$\varphi$.  Let $\bar c = \nullary(\tau)$.
Let $\mathcal G =
(P, M, p_0)$ and $\mathcal G' = (P', M', p_0')$ with $p_0 = (\ms H, X,
\varphi)$ and $p_0' = (\ms H', X \cup \{b\}, \varphi)$, where $\ms H = \ms
A[X \cup \bar c^{\ms A}]$ and $\ms H' = \ms B[X \cup \{b\} \cup \bar c^{\ms B}]$.
Suppose $\eval(\mathcal G) = \true$ (the second case $\eval(\mathcal G) =
\false$ is proven analogously).

Let $\varphi = R(c_1, \ldots, c_p)$ or $\varphi = \neg
R(c_1, \ldots, c_p)$ for a relation symbol $R \in \tau$.
We have $\eval(\mathcal G) = \true$, and hence, by definition 
$c_i \in \interpreted(\ms A)$ for all $1 \le i \le p$.
Here, $c_i^{\ms H} = c_i^{\ms A} = c_i^{\ms B} = c_i^{\ms H'}$
for all $1 \le i \le p$, since $\ms A$ and $\ms B$ are compatible, and
therefore $R^{\ms H} = R^{\ms H'} \cap H^p$, since $H = H' \setminus \{b\}$.
Hence, $(c_1^{\ms H}, \ldots, c_p^{\ms H}) \in
R^{\ms H}$ if and only if $(c_1^{\ms H'}, \ldots, c_p^{\ms H'}) \in R^{\ms H'}$,
and thus $\eval(\mathcal G') = \true$.

Assume now that $\varphi = \psi_1 \wedge \psi_2$ or $\varphi = \psi_1
\vee \psi_2$.  By definition, for each $\psi \in \{\psi_1, \psi_2\}$
there is a subgame $\mathcal G_\psi = \emcg(\ms A, X, \psi) \in \subgames(\mathcal G)$
and a subgame $\mathcal G'_\psi = \emcg(\ms B, X \cup \{b\}, \psi) \in \subgames(\mathcal G')$.
By the induction hypothesis, $\eval(\mathcal G'_\psi) = \true$ if $\eval(\mathcal G_\psi) = \true$,
and hence $\eval(\mathcal G') = \true$ if $\eval(\mathcal G) = \true$.

If $\varphi = \forall R \psi$ or $\varphi = \exists R
\psi$, then for each $U \subseteq A$ there is a subgame
$\mathcal G_{U} = \emcg((\ms A, U), X, \psi)) \in \subgames(\mathcal G)$,
and for each $U' \subseteq B$ there is a subgame
$\mathcal G_{U'}' = \emcg((\ms B, U'), X \cup \{b\}, \psi)) \in \subgames(\mathcal G')$.

If $\varphi = \forall R \psi$, consider an arbitrary $U' \subseteq B$
and let $U = U' \setminus\{b\}$. 
We know, by definition of $\eval(\mathcal G)$, that $\eval(\mathcal G_U)
= \true$.  Furthermore, $(\ms A, U)$ and $(\ms B, U')$ are
compatible, and therefore, by the induction hypothesis, also $\eval(\mathcal G'_{U'})
= \true$.  Therefore, $\eval(\mathcal G'_{U'}) = \true$ for all
$U' \subseteq B$, and hence $\eval(\mathcal G') = \true$.

If otherwise $\varphi = \exists R \psi$, then there is some
$U \subseteq A$ such that $\eval(\mathcal G_U)
= \true$.   Since $(\ms A, U)$ and $(\ms B, U)$ are compatible, 
$\eval(\mathcal G'_U) = \true$ by the induction hypothesis.
Therefore, $\eval(\mathcal G') = \true$.

If $\varphi = \forall c \psi$, consider an arbitrary $(\tau, c)$-expansion~$\ms B'$
of~$\ms B$ and let $\ms A' := \ms B[A]$.
Note that if $c^{\ms B'} \neq b$, then $c^{\ms A'} = c^{\ms B'} \in A$, and if
$c^{\ms B'} = b$ or $c^{\ms B'} = \nil$, then $c^{\ms A'} = \nil$.
In either case, $\ms A'$ and $\ms B'$ are compatible.
We know, by definition of $\eval(\mathcal G)$,
that $\eval(\emcg(\ms A', X, \psi)) = \true$.  Hence, by the induction hypothesis,
$\eval(\emcg(\ms B', X \cup \{b\}, \psi)) = \true$. 
All in all, $\eval(\mathcal G') = \true$.

Assume now that $\varphi = \exists c \psi$.  Since
$\eval(\mathcal G) = \true$, we know that there is a $(\tau, c)$-expansion~$\ms A'$ of
$\ms A$, such that $\eval(\emcg(\ms A', X, \psi)) = \true$.
Let $\ms B'$ the $(\tau, c)$-expansion of $\ms B$ with
$c^{\ms B'} = c^{\ms A'}$.  Then $\ms A'$ and $\ms B'$ are
compatible, and using the induction hypothesis as above, we obtain
$\eval(\mathcal G') = \true$.
\end{proof}

\begin{figure}[tbp]
\hfill
\includegraphics[scale=0.8]{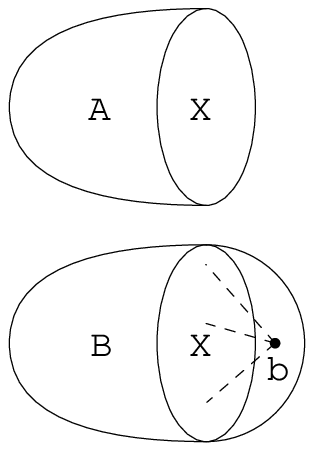}
\hfill
\includegraphics[scale=0.8]{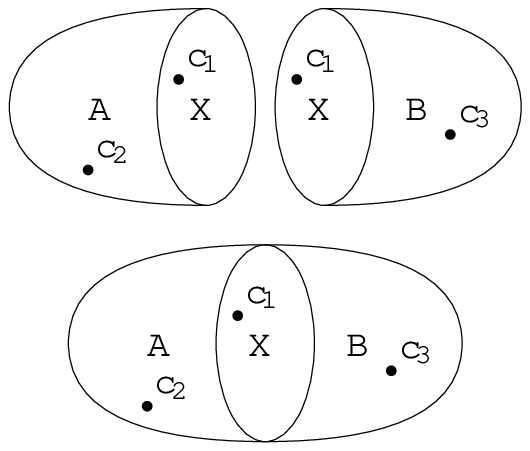}
\hfill
\mbox{}
\caption{Introduce (left):  If $\ms A$ and $\ms B$ are such that $\ms A = \ms B[A]$,
then winning strategies for $\emcg(\ms A, X, \varphi)$ carry over to $\emcg(\ms B, X \cup \{b\}, \varphi)$. 
Join/union (right): If $\ms A$ and $\ms B$ are compatible,
	then winning strategies for $\emcg(\ms A, X, \varphi)$
	carry over to $\emcg(\ms A \cup \ms B, X, \varphi)$. %
	\label{fig:join}
}
\end{figure}

\begin{corollary}
\label{cor:introduce-join}
Let $\ms A$ and $\ms B$ be compatible $\tau$-structures with $A \subseteq B$.
Let $X \subseteq A$ and $\varphi \in \mso(\tau)$.  Let $\mathcal G = \emcg(\ms A, X, \varphi)$ and
$\mathcal G' = \emcg(\ms B, X \cup (B \setminus A), \varphi)$.
\begin{enumerate}
\item If $\eval(\mathcal G) = \true$, then $\eval(\mathcal G') = \true$.
\item If $\eval(\mathcal G) = \false$, then $\eval(\mathcal G') = \false$.
\end{enumerate}
\end{corollary}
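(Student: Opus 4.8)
The plan is to reduce the corollary to the single-element case already established in Lemma~\ref{lem:emcg-eval-introduce} by peeling off the elements of $B \setminus A$ one at a time. I would write $B \setminus A = \{b_1, \ldots, b_k\}$ and set $\ms C_j := \ms B[A \cup \{b_1, \ldots, b_j\}]$ for $0 \le j \le k$, so that $C_j = A \cup \{b_1, \ldots, b_j\}$, in particular $C_j = C_{j-1} \uplus \{b_j\}$ and $\ms C_k = \ms B$. The argument then proceeds by induction on $j$, showing that $\eval(\emcg(\ms C_j, X \cup \{b_1, \ldots, b_j\}, \varphi))$ agrees with $\eval(\mathcal G)$ whenever the latter is $\true$ or $\false$; the case $j = k$ is exactly the claim, since $X \cup \{b_1, \ldots, b_k\} = X \cup (B \setminus A)$.

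Before running the induction I would verify the two set-up facts that make Lemma~\ref{lem:emcg-eval-introduce} applicable at each step. First, $\ms C_0 = \ms A$: since $\ms A$ and $\ms B$ are compatible and $A \cap B = A$, the identity is an isomorphism between $\ms A = \ms A[A]$ and $\ms B[A] = \ms C_0$, and by the definition of isomorphism this forces the two structures to agree on all relations and on the interpretation status and value of every nullary symbol, i.e.\ $\ms A = \ms C_0$. Second, consecutive members $\ms C_{j-1}$ and $\ms C_j$ of the chain are compatible: both are induced substructures of the same structure $\ms B$ on nested universes $C_{j-1} \subseteq C_j$, so any nullary symbol interpreted in both takes the value $c^{\ms B}$ in each, and both $\ms C_{j-1}[C_{j-1} \cap C_j]$ and $\ms C_j[C_{j-1} \cap C_j]$ equal $\ms B[C_{j-1}]$, so the identity is an isomorphism between them.

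With these facts in hand, the inductive step is a direct invocation of Lemma~\ref{lem:emcg-eval-introduce} applied to the compatible pair $\ms C_{j-1}$, $\ms C_j$ with $C_j = C_{j-1} \uplus \{b_j\}$, the bag $X \cup \{b_1, \ldots, b_{j-1}\} \subseteq C_{j-1}$, and the newly introduced element $b_j$: part~(1) propagates a $\true$-evaluation and part~(2) a $\false$-evaluation from $\emcg(\ms C_{j-1}, X \cup \{b_1, \ldots, b_{j-1}\}, \varphi)$ to $\emcg(\ms C_j, X \cup \{b_1, \ldots, b_j\}, \varphi)$. Chaining the implications from $j = 1$ to $j = k$ yields both statements of the corollary.

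I do not expect a genuine obstacle: the real content lives entirely in Lemma~\ref{lem:emcg-eval-introduce}, and the only thing to get right is the bookkeeping of bags and universes along the chain---chiefly checking that restricting $\ms B$ to $A$ recovers $\ms A$ (so that the chain really starts at $\mathcal G$) and that each one-element extension stays within the hypotheses of the introduce lemma, namely compatibility of consecutive structures and the containment of the current bag in the smaller universe.
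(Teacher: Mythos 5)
Your proof is correct and follows essentially the same route as the paper: induction on $|B \setminus A|$, adding one element at a time and invoking Lemma~\ref{lem:emcg-eval-introduce} at each step (the paper phrases this recursively via the pair $(\ms A', \ms B)$ with $\ms A' = (\ms A \cup \ms B)[A \cup \{b\}]$, while you unroll the chain $\ms C_0, \ldots, \ms C_k$ explicitly). Your added verifications that $\ms B[A] = \ms A$ and that consecutive chain members are compatible are exactly the bookkeeping the paper leaves implicit.
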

\begin{proof}
We use Lemma~\ref{lem:emcg-eval-introduce} and induction over $|B \setminus A|$.
Let $\eval(\mathcal G) \in \{\true, \false\}$.

If $B \setminus A = \emptyset$ and therefore $\ms A = \ms B$, the statement clearly holds.
Otherwise, consider $b \in B \setminus A$ and let $\ms A' = (\ms A \cup \ms B)[A \cup \{b\}]$.
From $\eval(\mathcal G) \in \{\true, \false\}$ we get
$\eval(\emcg(\ms A', X \cup \{b\}, \varphi) = \eval(\mathcal G)$ by
Lemma~\ref{lem:emcg-eval-introduce}.

We can now use the induction hypothesis on $\ms A'$, $\ms B$ and $X \cup \{b\}$, since
$\ms A'$ and $\ms B$ a compatible and $|B \setminus A'| < |B \setminus A|$,
and obtain $\eval(\mathcal G') = \eval(\mathcal G)$.
\end{proof}

The \emph{forget} operation at a node~$i$ of a tree decomposition does not change the underlying
structure~$\ms A_i$.  It is therefore not surprising that any winning strategies carry over.

\begin{lemma}[Forget]
\label{lem:emcg-eval-forget}
Let $\ms A$ be a $\tau$-structure, $X' \subseteq X \subseteq A$ and $\varphi \in \mso(\tau)$.
Let $\mathcal G = \emcg(\ms A, X, \varphi)$ and
$\mathcal G' = \emcg(\ms A, X', \varphi)$.
\begin{enumerate}
\item If $\eval(\mathcal G) = \true$, then $\eval(\mathcal G') = \true$.
\item If $\eval(\mathcal G) = \false$, then $\eval(\mathcal G') = \false$.
\end{enumerate}
\end{lemma}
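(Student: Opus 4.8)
The plan is to proceed by induction over the structure of $\varphi$, parallel to the proof of Lemma~\ref{lem:emcg-eval-introduce}, but the argument is considerably simpler: here the underlying structure $\ms A$ is literally the same in $\mathcal G = \emcg(\ms A, X, \varphi)$ and $\mathcal G' = \emcg(\ms A, X', \varphi)$, and only the bag component shrinks from $X$ to $X' \subseteq X$. The observation driving the whole induction is that neither the object quantification (ranging over $u \in A \cup \{\nil\}$) nor the set quantification (ranging over $U \subseteq A$) is restricted by the bag: both range over the full universe $A$ in $\mathcal G$ and in $\mathcal G'$ alike. Consequently the subgames of $\mathcal G$ and those of $\mathcal G'$ are in an exact one-to-one correspondence, where matching subgames are issued from the \emph{identical} expanded structure ($\ms A_u$ for object moves, $\ms A_U$ for set moves, or $\ms A$ itself for $\wedge/\vee$) and differ only in carrying $X$ versus $X'$. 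I will treat the case $\eval(\mathcal G) = \true$; the case $\eval(\mathcal G) = \false$ is entirely symmetric, swapping the roles of the existential and universal connectives, which gives the two items of the lemma.

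For the base case, let $\varphi$ be atomic or negated atomic, say $\varphi = R(c_1, \ldots, c_p)$. Then $\mathcal G$ and $\mathcal G'$ each consist of a single position $p_0$, and $\eval$ returns $\true$, $\false$, or the undetermined game according to whether $p_0 \in P_0$, $p_0 \in P_1$, or $p_0 \notin P_0 \cup P_1$. I will argue that this classification does not depend on the bag. If some $c_i$ is uninterpreted in $\ms A$, the position is undetermined in both games; so assume $\{c_1, \ldots, c_p\} \subseteq \interpreted(\ms A)$. Then each value $c_i^{\ms A}$ lies in $\bar c^{\ms A}$, where $\bar c = \nullary(\tau)$, and hence in both $X \cup \bar c^{\ms A}$ and $X' \cup \bar c^{\ms A}$; therefore restricting the induced substructure by $X$ or by $X'$ leaves the tuple $(c_1^{\ms A}, \ldots, c_p^{\ms A})$ inside the universe, so its membership in $R^{\ms A}$ is unaffected. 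Thus $p_0 \in P_0$ holds in $\mathcal G$ if and only if it holds in $\mathcal G'$, and $\eval(\mathcal G) = \true$ yields $\eval(\mathcal G') = \true$. This base case, where the invariance of the underlying structure is used, is the only genuinely computational step.

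For the inductive step I will use that the player assigned to the root $p_0$ is determined solely by whether $\varphi$ is existential or universal, a syntactic property independent of the bag, together with the definition of $\eval$. If $\varphi = \psi_1 \vee \psi_2$, $\varphi = \exists R \psi$, or $\varphi = \exists c \psi$, then $\eval(\mathcal G) = \true$ means $\eval(\emcg(\ms A^\ast, X, \psi)) = \true$ for some matching expansion $\ms A^\ast$ (which is $\ms A$, some $\ms A_U$ with $U \subseteq A$, or some $\ms A_u$ with $u \in A \cup \{\nil\}$); since $\ms A^\ast$ still has universe $A$ and $X' \subseteq X \subseteq A$, the induction hypothesis gives $\eval(\emcg(\ms A^\ast, X', \psi)) = \true$ for the corresponding subgame of $\mathcal G'$, whence $\eval(\mathcal G') = \true$. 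Dually, if $\varphi = \psi_1 \wedge \psi_2$, $\varphi = \forall R \psi$, or $\varphi = \forall c \psi$, then $\eval(\mathcal G) = \true$ means every subgame of $\mathcal G$ evaluates to $\true$; applying the induction hypothesis to each, and using that the subgames of $\mathcal G'$ are exactly the matching ones over the same index set $\{\psi_1,\psi_2\}$, $\Pow(A)$, or $A \cup \{\nil\}$ (including the $u = \nil$ move), shows that every subgame of $\mathcal G'$ evaluates to $\true$, so again $\eval(\mathcal G') = \true$.

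I do not expect a serious obstacle. Unlike the Introduce lemma, there is no change of universe and hence no mismatch between the move sets of the two games, so the subgame correspondence is a genuine bijection rather than a projection; in fact the induction even shows the two evaluations always agree, of which the lemma records the two directions needed later. The only point requiring care is the base-case identity above, namely verifying that shrinking the induced substructure from $X \cup \bar c^{\ms A}$ to $X' \cup \bar c^{\ms A}$ can never flip the value of an atomic formula on interpreted constants, and confirming that the $\false$ case is the exact mirror image, which it is, since interchanging $X$ with $X'$ plays no role and the existential/universal dichotomy is symmetric.
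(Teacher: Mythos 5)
Your proposal is correct and takes essentially the same approach as the paper: the paper's proof likewise observes that $\mathcal G$ and $\mathcal G'$ are identical games up to the bag component in the position labels (in particular $p_0 \in P_i$ if and only if $p_0' \in P_i'$) and then concludes by induction over the structure of~$\varphi$. Your write-up simply spells out the subgame bijection and the atomic base case that the paper compresses into ``it is not hard to see.''
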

\begin{proof}
Let $\mathcal G = (P, M, P_0, P_1, p_0)$ and $\mathcal G' = (P', M', P_0', P_1', p_0')$.
It is not hard to see that $\mathcal G$ and $\mathcal G'$ are almost identical,
the only difference being slightly differently labeled positions:
By definition, $p_0 = (\ms H, X, \varphi)$ and $p_0' = (\ms H', X', \varphi)$,
where $\ms H = \ms A[X \cup c^{\ms A}]$ and $\ms H' = \ms A[X' \cup c^{\ms A}]$.
In particular, $p_0 \in P_i$ if and only if $p_0' \in P_i'$, where $i \in \{1,2\}$.
By induction over the structure of $\varphi$, the claim then easily follows.
\end{proof}

Finally we show that the same holds for \emph{join} nodes of a tree decomposition.
Note that the corresponding operation on structures is the union.

\begin{lemma}[Join/Union]
\label{lem:emcg-eval-union}
Let $\ms A, \ms B$ be compatible $\tau$-structures, $X = A \cap B$, and
$\varphi \in \mso(\tau)$.  Let 
$\mathcal G = \emcg(\ms A, X, \varphi)$
and
$\mathcal G' = \emcg(\ms A \cup \ms B, X, \varphi).$
\begin{enumerate}
\item If $\eval(\mathcal G) = \true$, then $\eval(\mathcal G') = \true$.
\item If $\eval(\mathcal G) = \false$,  then $\eval(\mathcal G') = \false$.
\end{enumerate}
\end{lemma}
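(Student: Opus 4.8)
The plan is to derive the statement by chaining two results already established in the excerpt, namely Corollary~\ref{cor:introduce-join} (the set version of \emph{introduce}) and the Forget Lemma~\ref{lem:emcg-eval-forget}, rather than running a fresh induction on~$\varphi$. Write $\ms C := \ms A \cup \ms B$, so that $\mathcal G' = \emcg(\ms C, X, \varphi)$ and $C = A \cup B \supseteq A$. The conceptual point is that passing from $\ms A$ to $\ms C$ while keeping the bag fixed at~$X$ is the same as first \emph{introducing} the new objects $C \setminus A = B \setminus A$ into both the universe \emph{and} the bag (governed by the introduce corollary), and then \emph{forgetting} them from the bag again (governed by the forget lemma). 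Since the bag returns to~$X$ and the universe is now all of~$C$, the composite operation is precisely the join.

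First I would verify that $\ms A$ and $\ms C$ are compatible with $A \subseteq C$, so that Corollary~\ref{cor:introduce-join} applies with $\ms A$, $\ms C$, $X$ in the roles of its $\ms A$, $\ms B$, $X$; in fact $\mathcal G = \emcg(\ms A, X, \varphi)$ is literally the corollary's smaller game. Concretely I claim $\ms A = \ms C[A]$. For each relation symbol~$R$ this amounts to $R^{\ms A} = R^{\ms C} \cap A^{\arity(R)}$, which follows from $R^{\ms C} = R^{\ms A} \cup R^{\ms B}$ together with the observation that any tuple of $R^{\ms B}$ lying in $A$ already lies in $(A \cap B)^{\arity(R)}$ and hence, by compatibility of $\ms A$ and~$\ms B$, in $R^{\ms A}$. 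For nullary symbols, the only delicate case is $c \in \interpreted(\ms B) \setminus \interpreted(\ms A)$: here compatibility forces $c^{\ms B} \notin A \cap B$, hence $c^{\ms B} \notin A$, so $c$ is uninterpreted in $\ms C[A]$ exactly as in~$\ms A$. Combined with $c^{\ms C} = c^{\ms A}$ whenever $c \in \interpreted(\ms A)$, this yields $\ms A = \ms C[A]$, and thus compatibility with $A \subseteq C$.

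With compatibility in hand the argument closes quickly. Assume $\eval(\mathcal G) = \true$ (the case $\false$ being symmetric). Applying Corollary~\ref{cor:introduce-join} to $\ms A$, $\ms C$ and~$X$ yields $\eval(\emcg(\ms C, X \cup (C \setminus A), \varphi)) = \true$, and since $C \setminus A = B \setminus A$ this is a \emph{determined} game whose bag is $X \cup (B \setminus A)$. Because $X \subseteq X \cup (B \setminus A) \subseteq C$, the Forget Lemma~\ref{lem:emcg-eval-forget} then shrinks the bag back to~$X$ without changing the outcome, giving $\eval(\emcg(\ms C, X, \varphi)) = \true$, i.e.\ $\eval(\mathcal G') = \true$. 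Running the same chain with $\false$ in place of $\true$ gives the second item.

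The only genuine obstacle is the compatibility verification, and within it the nullary-symbol bookkeeping: one must rule out that a constant interpreted in~$\ms B$ but not in~$\ms A$ accidentally names an element of~$A$, which would break $\ms A = \ms C[A]$. This is exactly the point where the ``if and only if'' form of the compatibility condition (that $c^{\ms A}$ lies in $A \cap B$ precisely when $c^{\ms B}$ does) is indispensable. Everything else is a mechanical composition of two already-established lemmas, and I do not expect the specific choice $X = A \cap B$ (as opposed to an arbitrary $X \subseteq A$) to matter beyond guaranteeing $X \subseteq A$, which is all the corollary requires.
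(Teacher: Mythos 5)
Your proposal is correct and follows exactly the paper's own route: apply Corollary~\ref{cor:introduce-join} to $\ms A$ and $\ms A \cup \ms B$ to get a determined game with bag $X \cup (B \setminus A)$, then shrink the bag back to $X$ via Lemma~\ref{lem:emcg-eval-forget}. Your explicit verification that $\ms A = (\ms A \cup \ms B)[A]$ (including the nullary-symbol case) is a precondition the paper's three-line proof leaves implicit, so your write-up is, if anything, more complete.
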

\begin{proof}
Let $\eval(\emcg(\ms A, X, \varphi)) \in \{\true, \false\}$.
By Corollary~\ref{cor:introduce-join}, 
$\eval(\emcg(\ms A \cup \ms B, X \cup (B \setminus A), \varphi))
= \eval(\emcg(\ms A, X, \varphi))$.
The claim then immediately follows by Lemma~\ref{lem:emcg-eval-forget}.
\end{proof}

\section{Reducing the Size of Games}

In this section we show that for every game
$\mathcal G = (P, M, P_0) = \emcg(\ms A, X, \varphi)$ one can construct a 
game $\mathcal G' = (P', M', p_0)$ such that $\eval(\mathcal G) = \eval(\mathcal G')$
if $\eval(\mathcal G) \in \{\true, \false\}$, 
but $P' \subseteq P$ and $M' \subseteq M'$ are typically much smaller than $P$ and $M$.
This will be crucial for obtaining the desired running times of our algorithm.
We first define a suitable notion of \emph{equivalence} between games.

\begin{definition}[Equivalent Games]
\label{def:equivalent-games}
We say that two positions $p_1, p_2$ are \emph{equivalent}, denoted
by $p_1 \cong p_2$ iff
\begin{itemize}
\item $p_1 = (\ms H_1, X, \varphi)$ and $p_2 = (\ms H_2, X, \varphi)$ for some formula
    $\varphi$ and set $X \subseteq H_1 \cap H_2$,
\item there is an isomorphism $h\colon H_1 \to H_2$ between $\ms H_1$ and $\ms H_2$, such that
    $h(a) = a$ for all $a \in X$.
\end{itemize}

We say that two games $\mathcal G_1 = (P_1, M_1, p_1)$ and $\mathcal G_2 = (P_2, M_2, p_2)$ 
are \emph{equivalent}, denoted by $\mathcal G_1 \cong \mathcal G_2$,
if $p_1 \cong p_2$ and there is a bijection $\pi\colon
\subgames(\mathcal G_1) \to \subgames(\mathcal G_2)$,
such that $\mathcal G' \cong \pi(\mathcal G')$ for all $\mathcal G' \in \subgames(\mathcal G_1)$.
\end{definition}

\begin{algorithm}[tb]
Algorithm $\reduce(\mathcal G)$
\halign{# \hfil&#\hfil\cr
Input: & A game $\mathcal G = (P, M, p_0)$ with $p_0 = (\ms H, X, \varphi)$. \cr
}
\medskip \activatealgo \obeylines
[if] $\mathcal G \in \{\true, \false\}$ [then] [return] $\mathcal G$
[if] $\varphi$ is an atomic or negated atomic formula [then] [return] $\eval(\mathcal G)$
Let $P' := \{p_0\}$ and $M' := \emptyset$.
[for] $p \in \next(p_0)$ [do]
    \> Let $\mathcal G' = (P_1', M_1', p') := \reduce(\subgame_{\mathcal G}(p))$.
    \> [if] $\varphi$ is universal [and] $\mathcal G' = \false$ [then] [return] $\false$
    \> [if] $\varphi$ is existential [and] $\mathcal G' = \true$ [then] [return] $\true$
    \> [if] $\mathcal G' \notin \{\true, \false\}$ [and] $\mathcal G' \not\cong \mathcal G''$ %
	for all $\mathcal G'' \in \subgames((P', M', p_0))$ [then]
	\>\> Update $P' := P' \cup P_1'$ and $M' := M' \cup M_1' \cup \{(p_0, p')\}$.
[if] $P' = \{p_0\}$ [then] [return] $\eval((P', M', p_0))$.
[return] $(P', M', p_0)$

\caption{\label{alg:reduce} Reducing a game.}
\end{algorithm}

We now define a \emph{reduce operation} that significantly shrinks the
size of a game~$\mathcal G$ (see Algorithm~\ref{alg:reduce}).
Firstly, subgames won by the opponent player are removed.  If, for instance, the formula is universal, then
the falsifier can safely ignore subgames that evaluate as
$\true$, i.e., for which the verifier has a winning strategy. 
For example, it is easy to see that we can remove the two subgames $\true$ and $\false$
in Figure~\ref{fig:eval}.

Secondly, we only need to keep one representation per equivalence class
under~$\cong$ for all undetermined games.  Here, we use the fact that
$\eval(\mathcal G_1) \cong \eval(\mathcal G_2)$ for any $\mathcal G_1, \mathcal
G_2$ with $\mathcal G_1 \cong \mathcal G_2$.  We will not explicitly prove
this claim.  If, however, 
$\mathcal G_1 = \emcg(\ms A_1, X, \varphi)$ and $\mathcal G_2 = \emcg(\ms A_2, X, \varphi)$
for some $\tau$-structures $\ms A_1$ and $\ms A_2$, for $X \subseteq A_1 \cap A_2$ and $\varphi \in \mso(\tau)$,
then the bijection $\pi$ induced by the definition of $\cong$ yields a bisimulation
between $\emcg(\ms A_1, X, \varphi)$ and $\emcg(\ms A_2, X, \varphi)$.
In particular, if both $\mathcal G_1$ and $\mathcal G_2$ are subgames of the same
game~$\ms G$, then it suffices to keep either subgame
as ``witness'' for possible winning positions for the respective player in the
model checking game.  Thus, removing equivalent subgames from a game~$\mathcal G$
can be seen as a variant of taking the \emph{bisimulation quotient} (cf., \cite[Chapter 7]{BK08}) of $\mathcal G$.

See Figures~\ref{fig:vc-unfolding} and~\ref{fig:ds-unfolding} in Section~\ref{sec:concrete-problems} for two
examples.

\begin{lemma}
\label{lem:eval-equals-reduce}
Let $\ms A$ be a $\tau$-structure, $X \subseteq A$, and $\varphi \in \mso(\tau)$.
Let $\mathcal G = \emcg(\ms A, X, \varphi)$.  Then
\begin{itemize}
\item $\eval(\mathcal G) = \true$, if and only if $\reduce(\mathcal G) = \true$, and
\item $\eval(\mathcal G) = \false$, if and only if $\reduce(\mathcal G) = \false$.
\end{itemize}
\end{lemma}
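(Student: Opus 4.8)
The plan is to prove the lemma by induction over the structure of~$\varphi$, where the induction hypothesis is exactly the statement of the lemma applied to every subgame $\subgame_{\mathcal G}(p)$ with $p \in \next(p_0)$. Each such subgame has the form $\emcg(\ms A', X', \psi)$ for a proper subformula~$\psi$ of~$\varphi$, so the induction is well-founded. For the base case, if $\varphi$ is atomic or negated atomic then $\reduce(\mathcal G)$ returns $\eval(\mathcal G)$ by definition, and both equivalences hold trivially.

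For the inductive step I distinguish whether $\varphi$ is universal or existential. These two classes exhaust all remaining formulas (they cover $\forall R\psi$, $\forall c\psi$, $\psi_1 \wedge \psi_2$ and $\exists R\psi$, $\exists c\psi$, $\psi_1 \vee \psi_2$, respectively), and crucially both $\eval$ and $\reduce$ treat every formula in a given class uniformly, accessing the children only through $\next(p_0)$ together with the universal/existential flag. I would carry out the universal case ($p_0 \in P_0$) in full and obtain the existential case by the symmetric argument in which the roles of $\true$ and $\false$, and of the two short-circuit branches, are interchanged.

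For universal~$\varphi$, the evaluation semantics are that $\eval(\mathcal G) = \true$ exactly when every evaluated subgame equals $\true$ (including the vacuous case of no children), while $\eval(\mathcal G) = \false$ exactly when some evaluated subgame equals $\false$. I match this against $\reduce$ in four steps. First, if $\eval(\mathcal G) = \true$ then every $\eval(\subgame_{\mathcal G}(p)) = \true$, so by the induction hypothesis every $\reduce(\subgame_{\mathcal G}(p)) = \true$; hence the loop never finds a $\false$ child (no early return) and adds no child to $P'$, so $P' = \{p_0\}$ and $\reduce$ returns $\eval((\{p_0\}, \emptyset, p_0)) = \true$. Second, if $\eval(\mathcal G) = \false$ then some $\eval$-child is $\false$, so the corresponding $\reduce$-child is $\false$ by the induction hypothesis and $\reduce$ short-circuits to $\false$. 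Conversely, the only branch returning $\false$ for universal~$\varphi$ is this short-circuit, so $\reduce(\mathcal G) = \false$ forces some $\reduce$-child to be $\false$, hence by the induction hypothesis some $\eval$-child is $\false$ and $\eval(\mathcal G) = \false$; and the only branch returning $\true$ is the terminal $P' = \{p_0\}$ test, which forces every $\reduce$-child to equal $\true$ (none is undetermined, else it would have been inserted; none is $\false$, else the short-circuit would have fired), whence by the induction hypothesis every $\eval$-child equals $\true$ and $\eval(\mathcal G) = \true$.

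The remaining point is that the $\cong$-based pruning is harmless for the determined verdict: it only ever discards \emph{undetermined} children, it never removes all of them (the first representative of each equivalence class is always inserted, since at that moment $\subgames((P', M', p_0))$ contains no equivalent game), and it never interferes with the $\true$/$\false$ short-circuits. Consequently $P' = \{p_0\}$ holds precisely when no child reduces to an undetermined game, which is exactly what the analysis above uses. I expect the main obstacle to be this careful matching of each control-flow branch of $\reduce$ — the two early returns, the silent discarding of children won by the opponent, and the terminal singleton test invoking $\eval$ — against the corresponding determined outcome of $\eval$, while verifying that the equivalence-based pruning touches only undetermined children and therefore can never flip a $\true$/$\false$ verdict.
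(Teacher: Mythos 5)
Your proof is correct and takes essentially the same route as the paper's: structural induction over $\varphi$, a universal/existential case split, and a branch-by-branch matching of $\reduce$'s control flow (short-circuits, silent discarding of children won by the opponent, terminal $P' = \{p_0\}$ test) against the semantics of $\eval$. If anything, your converse direction is slightly more careful than the paper's, which in the existential case tacitly equates ``no child reduces to $\true$'' with ``all children reduce to $\false$,'' whereas your pruning argument explicitly rules out the remaining possibility of undetermined children.
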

\begin{proof}
Let $\mathcal G = (P, M, p_0)$, where $p_0 = (\ms H, X, \varphi)$.  Without loss of generality, we
assume that $\mathcal G \notin \{\true, \false\}$.
We only show the first case (\true), the second statement is proven analogously.
The proof is an induction over the structure of~$\varphi$.
If $\varphi$ is an atomic or negated atomic formula or $P = \{p_0\}$, then the statement holds
by definition of~$\reduce(\mathcal G)$.
For the induction step, assume $\varphi$ is not an atomic or negated formula, and $\next(p_0) \neq \emptyset$. 

Let $\mathcal G_{p} = \subgame_{\mathcal G}(p)$ for all $p \in \next(p_0)$ and
let $\eval(\mathcal G) = \true$.
If $\varphi$ is existential, then there is $p \in \next(p_0)$ with $\eval(\mathcal G_{p}) = \true$.
By the induction hypothesis, $\reduce(\mathcal G_{p}) = \eval(\mathcal G_{p}) = \true$,
and therefore $\reduce(\mathcal G) = \true$.
Similarly, if $\varphi$ is universal, then $\eval(\mathcal G_{p}) = \true$ for all $p \in \next(p_0)$.
By the induction hypothesis, $\reduce(\mathcal G_{p}) = \true$ for each $p \in \next(p_0)$.
Hence, we have $P' = \{p_0\}$ after the for-loop.  Since $\varphi$ is universal, the call to
$\eval((P', M', p_0))$ returns $\true$ by definition, and therefore $\reduce(\mathcal G) = \true$.

Conversely, let $\reduce(\mathcal G) = \true$.
If $\varphi$ is existential, then there must be some $p \in \next(p_0)$ with $\reduce(\mathcal G_{p}) = \true$.
Assume for a contradiction that $\reduce(\mathcal G_{p}) = \false$ for all
$p \in \next(p_0)$.  Then $P' = \{p_0\}$ after the for-loop, which implies $\eval((P', M', p_0)) = \false$, a contradiction.
Let therefore $p$ be such a position with $\reduce(\mathcal G_{p}) = \true$.
Then, by the induction hypothesis, $\eval(\mathcal G_{p}) = \true$ for this~$p$, and therefore
also $\eval(\mathcal G) = \true$.
If $\varphi$ is universal, then we know $P' = \{p_0\}$ after the for-loop, as this is
the only possibility how~$\reduce(\mathcal G)$ can return~$\true$.
Therefore, $\reduce(\mathcal G_{p}) = \true$ for all $p \in \next(p_0)$, and
hence $\eval(\mathcal G) = \true$ by the induction hypothesis and definition of~$\eval(\mathcal G)$.
\end{proof}

Now we prove an upper bound for the size of a reduced game.  Since this
is a general upper bound for arbitrary formulas and structures, we
cannot expect better bounds than the known lower bounds (unless $\rm
P=NP$)~\cite{FG04}.

\begin{definition}[Equivalent Structures]
Let $\tau$ be a vocabulary and $\varphi \in \mso(\tau)$.
Let $\ms A_1, \ms A_2$ be two $\tau$-structures and $X \subseteq A_1 \cap A_2$.

We call $\ms A_1$ and $\ms A_2$ \emph{equivalent with respect to $\varphi$ and $X$}, denoted
by $\ms A_1 \cong_{X, \varphi} \ms A_2$, if
$\reduce(\emcg(\ms A_1, X, \varphi)) \cong \reduce(\emcg(\ms A_2, X, \varphi))$.

For an arbitrary set $X$ of objects, we let
$$\STR(\tau, X) = \{\,\ms A \in \STR(\tau) \mid \text{$X \subseteq A$}\,\}$$
be the set of all $\tau$-structures that contain $X$, and
$\STR(\tau, X)/{\cong_{X, \varphi}}$ the set of equivalence classes of $\STR(\tau, X)$ under $\cong_{X, \varphi}$.
We let
$$N_{X, \varphi} := |\STR(\tau, X)/{\cong_{X, \varphi}}|.$$
\end{definition}

\begin{lemma}
\label{lem:number}
Let $\tau$ be a vocabulary, $\varphi \in \mso(\tau)$, and $X$ be a set of objects.
Then
$$
N_{X, \varphi} \leq \exp^{\qr(\varphi)+1}((|X|+1)^{O(\|\varphi\|)}),
$$
where
$\|\varphi\|$ is the length of an encoding of~$\varphi$.
\end{lemma}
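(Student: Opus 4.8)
The plan is to prove the bound by structural induction on $\varphi$, arranging the induction so that the height of the iterated exponential is exactly $\qr(\varphi)+1$. First I would observe that $\reduce(\emcg(\ms A, X, \varphi))$ — and hence the class of $\ms A$ under $\cong_{X,\varphi}$ — never inspects symbols of $\tau$ that do not occur in $\varphi$, so I may assume $\vocabulary(\ms A)$ consists only of the $O(\|\varphi\|)$ symbols appearing in $\varphi$, with $\arity(\tau)\le\|\varphi\|$. I would also note that $N_{X,\varphi}$ depends on $X$ only through $k:=|X|$, since relabelling the remaining universe is an isomorphism; it therefore suffices to bound the number of $\cong$-classes of reduced games over an interface of size $k$.

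The conceptual core is a recursive characterization of reduced games: the $\cong$-class of $\reduce(\mathcal G)$ is determined by (i) the isomorphism type of its root position $(\ms H, X, \varphi)$ fixing $X$ pointwise, where $\ms H = \ms A[X \cup \bar c^{\ms A}]$ and $\bar c = \nullary(\tau)$, and (ii) the \emph{set} of $\cong$-classes of its reduced children. That a set, rather than a multiset, suffices is exactly what $\reduce$ guarantees: it discards any subgame $\cong$-equivalent to one already kept and prunes children equal to $\true$ or $\false$ of the wrong polarity; conversely, two reduced games agreeing on (i) and (ii) are $\cong$, because the equal sets of child-classes furnish the bijection $\pi$ required by Definition~\ref{def:equivalent-games}. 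This is the step I expect to be the main obstacle, since it must be reconciled carefully with the early-termination branches of Algorithm~\ref{alg:reduce} (which may collapse the entire game to $\true$ or $\false$) and with the fact that $\reduce$ returns $\eval(\mathcal G)$ on atomic and negated atomic formulas.

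Granting this characterization, I would count as follows. Let $r(k)$ be the number of root isomorphism types: the universe of $\ms H$ is $X$ together with the interpreted nullary symbols, hence of size at most $k+\|\varphi\|$; choosing the image of each of the $\le\|\varphi\|$ constants ($\nil$, a point of $X$, or a new point) and then an interpretation of each relation symbol yields $r(k)\le (k+1)^{O(\|\varphi\|)}\cdot 2^{(k+\|\varphi\|)^{O(\|\varphi\|)}}$, which I fold into $\exp^{1}((k+1)^{O(\|\varphi\|)})$ — the fiddly bookkeeping here being the absorption of the $\|\varphi\|$-dependent factors into the exponent. In the base case $\qr(\varphi)=0$ the game is entirely determined by $\ms H$, since Boolean connectives only re-inspect the same structure, so $N_{X,\varphi}\le r(k)\le\exp^{1}((k+1)^{O(\|\varphi\|)})$. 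If $\varphi=\psi_1\wedge\psi_2$ or $\psi_1\vee\psi_2$, the class is fixed by the root type and the pair of child-classes, so $N_{X,\varphi}\le r(k)\cdot N_{X,\psi_1}\cdot N_{X,\psi_2}$; as $\qr(\psi_i)\le\qr(\varphi)$, the product of three height-$(\qr(\varphi)+1)$ towers remains a height-$(\qr(\varphi)+1)$ tower with the argument merely summed. For a quantifier $\varphi\in\{\exists R\psi,\forall R\psi,\exists c\psi,\forall c\psi\}$ the interface of every child game is still $X$, the children range over reduced games of $\psi$ with $\qr(\psi)=\qr(\varphi)-1$, and by (ii) the class is fixed by the root type together with a \emph{subset} of the child-classes; hence $N_{X,\varphi}\le r(k)\cdot 2^{N_{X,\psi}}$. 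By the induction hypothesis $N_{X,\psi}\le\exp^{\qr(\varphi)}((k+1)^{O(\|\psi\|)})$, so $2^{N_{X,\psi}}\le\exp^{\qr(\varphi)+1}((k+1)^{O(\|\varphi\|)})$, and the factor $r(k)$ is absorbed. The remaining check is that the additive growth of the exponent across the $\le\|\varphi\|$ Boolean nodes lying between consecutive quantifier levels stays within $(k+1)^{O(\|\varphi\|)}$, which holds since $\|\varphi\|\cdot(k+1)^{O(\|\varphi\|)}\le(k+1)^{O(\|\varphi\|)}$.
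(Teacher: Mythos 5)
Your proof is correct and follows essentially the same route as the paper's: structural induction on $\varphi$, counting isomorphism types of the root structure in the atomic case, taking the product $N_{X,\psi_1}\cdot N_{X,\psi_2}$ for Boolean connectives, and bounding the quantifier case by $2^{N_{X,\psi}}$ precisely because $\reduce$ keeps at most one representative per $\cong$-class of subgames. Your explicit characterization of a reduced game's $\cong$-class by its root-position type together with the \emph{set} of child classes (and the extra factor $r(k)$) is a more careful spelling-out of what the paper asserts implicitly, not a different argument.
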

\begin{proof}
Without loss of generality, we assume $\tau$ is minimal such that $\varphi \in \mso(\tau)$ and
therefore $\|\varphi\| \ge \max\{|\tau|, \arity(\tau)\}$.
We prove the claim by induction over the structure of~$\varphi$.

If $\varphi$ is an atomic or negated atomic formula, let $\bar c = \nullary(\tau)$, and
$\ms A \in \STR(\tau, X)$.  Let $\mathcal G_{\ms A} = \reduce(\emcg(\ms A, X, \varphi))$.
Then either $\mathcal G_{\ms A} \in \{\true, \false\}$,
or $\mathcal G_{\ms A} = (P, M, p_0)$, where $p_0 = (\ms H, X, \varphi)$
and $\ms H = \ms A[X \cup c^{\ms A}]$.  Hence, $N_{X, \varphi}$
depends on the number of non-isomorphic structures on at most $n := |X| +
|\bar c^{\ms A}| \le |X| + |\nullary(\tau)|$ objects.  For a fixed
relation symbol $R \in \tau$, there are $2^{n^{\arity(R)}}$ ways to
choose the interpretation~$R^{\ms H}$.  The total number of non-isomorphic $\tau$-structures
over at most $n$ objects is therefore bounded by $N_{X, \varphi} \leq
\exp^{\qr(\varphi)+1}((|X|+1)^{O(\|\varphi\|)})$.

If $\varphi = \psi_1 \wedge \psi_2$ or $\varphi =
\psi_1 \vee \psi_2$, then $\qr(\varphi) = \max\{\qr(\psi_1),
\qr(\psi_2)\}$ and $\|\psi_1\| + \|\psi_2\| \le \|\varphi\|$.
Furthermore, by the induction hypothesis we get $N_{X, \psi_i} \leq
\exp^{\qr(\varphi)+1}((|X|+1)^{O(\|\psi_i\|)})$.
We conclude that $N_{X, \varphi} = O(N_{X, \psi_1} \cdot N_{X, \psi_2}) \leq
\exp^{\qr(\varphi)+1}((|X|+1)^{O(\|\varphi\|)})$.

If $\varphi \in \{\forall c \psi, \exists c \psi, \forall
R \psi, \exists R \psi\}$, then $\qr(\psi) =
\qr(\varphi) - 1$, $\|\psi\| < \|\varphi\|$, and, by the induction
hypothesis, $N_{X, \psi} = \exp^{\qr(\psi)+1}((|X|+1)^{O(\|\psi\|)})$.
Since $\reduce()$ ignores equivalent subgames, 
the total number $N_{X, \varphi}$ is upper-bounded by $2^{N_{X, \psi}}
\leq \exp^{\qr(\varphi)+1}((|X|+1)^{O(\|\varphi\|)})$.
\end{proof}

\begin{lemma}
\label{lem:size}
Let $\ms A$ be a $\tau$-structure, $X \subseteq A$ and $\varphi \in \mso(\tau)$.
Then
$$
|\reduce(\emcg(\ms A, X, \varphi))|  \le 
\exp^{\qr(\varphi)+1}((|X|+1)^{O(\|\varphi\|)}),$$
where
$\|\varphi\|$ is the length of an encoding of~$\varphi$.
\end{lemma}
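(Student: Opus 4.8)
The plan is to prove the bound by induction over the structure of $\varphi$, piggybacking on the counting bound of Lemma~\ref{lem:number}. The central observation is that a game is an acyclic (tree-shaped) object whose size $|\mathcal G| = |P|$ equals the number of its positions, so the size of a reduced game satisfies $|\reduce(\emcg(\ms A, X, \varphi))| \le 1 + (\text{number of retained subgames}) \cdot (\text{maximum size of a retained subgame})$, the additive $1$ accounting for the root $p_0$. The reason the first factor can be controlled is that, by inspection of Algorithm~\ref{alg:reduce}, $\reduce$ keeps at most one representative per $\cong$-equivalence class of its subgames. For the base case, if $\varphi$ is atomic or negated atomic, then $\reduce(\emcg(\ms A, X, \varphi))$ is either $\true$, $\false$, or the single-position game $(\{p_0\}, \emptyset, p_0)$, so its size is bounded by a constant, which is at most $\exp^{1}((|X|+1)^{O(\|\varphi\|)})$ since here $\qr(\varphi) = 0$.

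For the Boolean connectives $\varphi \in \{\psi_1 \wedge \psi_2, \psi_1 \vee \psi_2\}$, the only retained subgames are (reductions of) $\emcg(\ms A, X, \psi_1)$ and $\emcg(\ms A, X, \psi_2)$, so $|\reduce(\emcg(\ms A, X, \varphi))| \le 1 + |\reduce(\emcg(\ms A, X, \psi_1))| + |\reduce(\emcg(\ms A, X, \psi_2))|$. Using the induction hypothesis together with $\qr(\varphi) = \max\{\qr(\psi_1), \qr(\psi_2)\}$ and $\|\psi_1\| + \|\psi_2\| \le \|\varphi\|$, this sum is absorbed into $\exp^{\qr(\varphi)+1}((|X|+1)^{O(\|\varphi\|)})$.

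The heart of the argument is the quantifier case, $\varphi \in \{\exists R \psi, \forall R \psi, \exists c \psi, \forall c \psi\}$, where $\next(p_0)$ may contain exponentially (or, for object quantifiers, linearly) many subgames $\emcg(\ms A', X, \psi)$, one for each expansion $\ms A'$ of $\ms A$ by $R$ (respectively, for each interpretation $u \in A \cup \{\nil\}$ of $c$). Determined subgames contribute nothing beyond possibly triggering an early return of $\true$ or $\false$, and every undetermined subgame is discarded as soon as it is $\cong$-equivalent to a previously retained one. Hence the number of retained subgames is at most the number of $\cong$-classes among the reduced expansions, which, by the very definition of $\cong_{X, \psi}$, equals the number of $\cong_{X,\psi}$-classes represented among the expansions and is therefore at most $N_{X,\psi}$. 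Since $\qr(\psi) = \qr(\varphi) - 1$ and $\|\psi\| < \|\varphi\|$, Lemma~\ref{lem:number} gives $N_{X,\psi} \le \exp^{\qr(\varphi)}((|X|+1)^{O(\|\psi\|)})$, while the induction hypothesis bounds the size of each retained subgame by the same quantity $\exp^{\qr(\varphi)}((|X|+1)^{O(\|\psi\|)})$. Multiplying yields $|\reduce(\emcg(\ms A, X, \varphi))| \le 1 + \exp^{\qr(\varphi)}((|X|+1)^{O(\|\psi\|)})^2$.

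It remains to fold this estimate into the claimed form, which is where the elementary arithmetic of iterated exponentials does the work: because $\qr(\varphi) \ge 1$ in the quantifier case, squaring merely doubles the topmost exponent, so $\exp^{t}(z)^2 \le \exp^{t}(2z)$ for $t \ge 1$, and since $2(|X|+1)^{O(\|\psi\|)}$ is still of the form $(|X|+1)^{O(\|\varphi\|)}$, we obtain a bound of $\exp^{\qr(\varphi)}((|X|+1)^{O(\|\varphi\|)})$, which is in turn at most the claimed $\exp^{\qr(\varphi)+1}((|X|+1)^{O(\|\varphi\|)})$ because $2^{w} \ge w$. I expect the main obstacle to be the step identifying the count of retained subgames with $N_{X,\psi}$: one must check that the purely syntactic equivalence $\cong$ used inside $\reduce$ genuinely refines the structural equivalence $\cong_{X,\psi}$ defining $N_{X,\psi}$, so that $\cong_{X,\psi}$-equivalent expansions are correctly recognized as redundant and non-equivalent ones are never collapsed. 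This rests on the fact (noted after Definition~\ref{def:equivalent-games}, as the reduce-analogue of the bisimulation remark for $\eval$) that $\mathcal G_1 \cong \mathcal G_2$ implies $\reduce(\mathcal G_1) \cong \reduce(\mathcal G_2)$, which I would invoke at exactly this point.
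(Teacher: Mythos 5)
Your proposal is correct and follows essentially the same route as the paper's own proof: induction over the structure of $\varphi$, with the Boolean case bounded additively and the quantifier case bounded by $1 + N_{X,\psi}\cdot(\text{max reduced subgame size})$ via Lemma~\ref{lem:number}, then absorbed into $\exp^{\qr(\varphi)+1}((|X|+1)^{O(\|\varphi\|)})$. The only difference is that you spell out the iterated-exponential arithmetic and the (essentially definitional) identification of the retained-subgame count with $N_{X,\psi}$, which the paper states tersely as ``since equivalent subgames are ignored.''
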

\begin{proof}
We use induction over the structure of~$\varphi$.
If $\varphi$ is an atomic or negated atomic formula, then $\mathcal G = \emcg(\ms A, X, \varphi)$
contains only a single position and $\reduce(\mathcal G) \in \{\true, \false, \mathcal G\}$.

If $\varphi = \psi_1 \wedge \psi_2$ or $\varphi = \psi_1 \vee \psi_2$, let, for $i \in \{1,2\}$,
be $\mathcal G_{\psi_i} = \emcg(\ms A, X, \psi_i)$. 
By the induction hypothesis,
$|\reduce(\mathcal G_{\psi_i})| = \exp^{\qr(\varphi)+1}((|X|+1)^{O(\|\psi_i\|)})$
where $\qr(\psi_i) \le \qr(\varphi)$ and $\|\psi_1\| + \|\psi_2\| \le \|\varphi\|$, 
and therefore,
\begin{multline*}
|\reduce(\mathcal G)| \le 1 + |\reduce(\mathcal G_{\psi_1})| +
|\reduce(\mathcal G_{\psi_2})| \\
\le \exp^{\qr(\varphi)+1}((|X|+1)^{O(\|\varphi\|)}).
\end{multline*}

If otherwise  $\varphi \in \{\forall c \psi, \exists c \psi, \forall
R \psi, \exists R \psi\}$, then $\qr(\psi) =
\qr(\varphi) - 1$ and $\|\psi\| < \|\varphi\|$.
Since equivalent subgames are ignored,
\begin{multline*}
|\reduce(\mathcal G)| \le 1 + N_{X, \psi} \cdot \exp^{\qr(\psi)+1}((|X|+1)^{O(\|\psi\|)}) \\
\le \exp^{\qr(\varphi)+1}((|X|+1)^{O(\|\varphi\|)}).
\end{multline*}
\end{proof}

\section{Combining and Extending Games}

In this section, we show how model checking games on structures can be computed inductively.
We will introduce two algorithms:  Algorithm~\ref{alg:combine} will be used when
structures are \emph{combined}, i.e., taking the union of two compatible structures.
This happens at \emph{join} and \emph{introduce} nodes of the tree decomposition.
Algorithm~\ref{alg:forget} will be used when objects are removed from the set~$X$, 
which happens at \emph{forget} nodes of the tree decomposition.
We first will study the case of combining games.  The next lemma
is required for technical reasons.

\begin{algorithm}[tb]
Algorithm $\combine(\mathcal G_1, \mathcal G_2)$
\halign{# \hfil&#\hfil\cr
Input: 	& Two games $\mathcal G_i = (P_i, M_i, p_i)$ with %
	$p_i = (\ms H_i, X_i, \varphi)$, \cr
	& where $\ms H_1$ and $\ms H_2$ are compatible $\tau$-structures, \cr
	& $X_i \subseteq H_i$, and $\varphi \in \mso(\tau)$. \cr
}
\medskip \activatealgo \obeylines
Let $p_0 := (\ms H_1 \cup \ms H_2, X_1 \cup X_2, \varphi)$, $P := \{p_0\}$ and $M := \emptyset$.
[for] each $(p_1', p_2') \in \next(p_1) \times \next(p_2)$ [do]
    \> Let $p_1' = (\ms H_1', X_1, \psi_1)$ and $p_2' = (\ms H_2', X_2, \psi_2)$.
    \> [if] $\psi_1 = \psi_2$ [and] $\ms H_1'$ and $\ms H_2'$ are compatible [then]
	\>\> Let $(P', M', p_0') = \combine(\subgame_{\mathcal G_1}(p_1'), \subgame_{\mathcal G_2}(p_2'))$.
	\>\> Update $P := P \cup P'$ and $M := M \cup M' \cup \{(p_0, p_0')\}$.
[return] $\reduce((P, M, p_0))$ 

\caption{\label{alg:combine} Combining two games.}
\end{algorithm}

\begin{lemma}
\label{lem:matches-compatible}
Let $\ms A_1$ and $\ms A_2$ be compatible $\tau$-structures, $\varphi \in \mso(\tau)$ and
let $X_1 \subseteq A_1$ and $X_2 \subseteq A_2$ with $A_1 \cap A_2 = X_1 \cap X_2$.
Let, for $i \in \{1,2\}$, $\mathcal R_i = \reduce(\emcg(\ms A_i, X_i, \varphi)) \notin 
\{\true,\false\}$ and $\mathcal G_i = (P_i, M_i, p_i) \cong \mathcal R_i$,
where $p_i = (\ms H_i, X_i, \varphi)$.
Then $\ms H_1$ and $\ms H_2$ are compatible.
\end{lemma}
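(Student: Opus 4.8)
The plan is to verify directly the two defining conditions of compatibility for $\ms H_1$ and $\ms H_2$, after first pinning down these structures up to a harmless renaming of the elements outside the bags. First I would observe that $\reduce$ never alters the root position of a game (see Algorithm~\ref{alg:reduce}): in the case $\mathcal R_i \notin \{\true,\false\}$ the algorithm returns a game whose root is exactly its input root. Hence the root of $\mathcal R_i = \reduce(\emcg(\ms A_i, X_i, \varphi))$ is the root of $\emcg(\ms A_i, X_i, \varphi)$, namely $(\ms A_i[X_i \cup \bar c^{\ms A_i}], X_i, \varphi)$ with $\bar c = \nullary(\tau)$. Since $\mathcal G_i \cong \mathcal R_i$, Definition~\ref{def:equivalent-games} yields $p_i \cong (\ms A_i[X_i \cup \bar c^{\ms A_i}], X_i, \varphi)$, i.e.\ an isomorphism $h_i$ between $\ms H_i$ and $\ms A_i[X_i \cup \bar c^{\ms A_i}]$ that fixes every element of $X_i$. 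From this I record the facts used below: $\interpreted(\ms H_i) = \interpreted(\ms A_i)$; for each $c \in \interpreted(\ms A_i)$ with $c^{\ms A_i} \in X_i$ we have $c^{\ms H_i} = c^{\ms A_i}$ because $h_i$ fixes $X_i$; and every element of $H_i \setminus X_i$ is the interpretation $c^{\ms H_i}$ of some nullary symbol with $c^{\ms A_i} \notin X_i$, which I identify with the genuine interpretation $c^{\ms A_i} \in A_i \setminus X_i$. In particular $H_i \subseteq A_i$.

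Next I would check the first compatibility condition, the agreement of shared interpreted nullary symbols. Let $c \in \interpreted(\ms H_1) \cap \interpreted(\ms H_2)$, which by the previous paragraph equals $\interpreted(\ms A_1) \cap \interpreted(\ms A_2)$. Since $\ms A_1$ and $\ms A_2$ are compatible, $c^{\ms A_1} = c^{\ms A_2} =: v$, and the hypothesis gives $v \in A_1 \cap A_2 = X_1 \cap X_2$. As $v \in X_i$ and $h_i$ fixes $X_i$, we obtain $c^{\ms H_i} = v$ for $i \in \{1,2\}$, hence $c^{\ms H_1} = c^{\ms H_2}$. This step already exploits $A_1 \cap A_2 = X_1 \cap X_2$: it is what forces the common interpretation $v$ into both bags and thus to be fixed by both isomorphisms.

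For the second condition I would first compute the overlap $H_1 \cap H_2$. Because $H_i \subseteq A_i$, we have $H_1 \cap H_2 \subseteq A_1 \cap A_2 = X_1 \cap X_2$, while the reverse inclusion is immediate from $X_1 \cap X_2 \subseteq X_i \subseteq H_i$; hence $H_1 \cap H_2 = X_1 \cap X_2 = A_1 \cap A_2$. On this common set the induced structures satisfy $\ms H_i[H_1 \cap H_2] = \ms A_i[A_1 \cap A_2]$, and compatibility of $\ms A_1$ and $\ms A_2$ states precisely that the identity is an isomorphism between $\ms A_1[A_1 \cap A_2]$ and $\ms A_2[A_1 \cap A_2]$. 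Together with the nullary agreement established above, this shows that $\ms H_1$ and $\ms H_2$ are compatible.

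The main obstacle is the third paragraph: correctly identifying the elements of $H_i$ outside the bag $X_i$ and showing that the overlap $H_1 \cap H_2$ collapses to $X_1 \cap X_2$. This is exactly where the assumption $A_1 \cap A_2 = X_1 \cap X_2$ is indispensable, since without it the non-bag parts of $\ms H_1$ and $\ms H_2$ could overlap outside the bags and destroy the identity isomorphism on the intersection. Some care is therefore needed to argue that the elements of $H_i \setminus X_i$ may be taken to be the actual interpretations $c^{\ms A_i} \in A_i$ of the remaining nullary symbols, rather than arbitrary renamings, so that the key inclusion $H_i \subseteq A_i$ holds; the remaining verifications are then routine.
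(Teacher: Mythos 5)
Your proof is correct and takes essentially the same route as the paper's: game equivalence yields isomorphisms $h_i$ fixing $X_i$, every nullary symbol interpreted in both $\ms H_1$ and $\ms H_2$ has its interpretation forced into $A_1 \cap A_2 = X_1 \cap X_2$ and hence fixed by both $h_i$, and the relational condition on $H_1 \cap H_2$ reduces to the compatibility of $\ms A_1$ and $\ms A_2$. You are in fact slightly more thorough than the paper, which checks only the nullary-symbol condition and silently uses the inclusion $H_1 \cap H_2 \subseteq A_1 \cap A_2$ --- exactly the identification $H_i \subseteq A_i$ that you make explicit and flag as the main obstacle.
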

\begin{proof}
Let $\bar c = \nullary(\tau)$.
Since $\mathcal G_i \cong \mathcal R_i$, we have, by Definition~\ref{def:equivalent-games}, 
$\ms H_i \cong \ms A_i[X_i \cup \bar c^{\ms A_i}]$ for an isomorphism
$h_i$ with $h_i(a) = a$ for all $a \in X_i$.

By definition, $\bar c^{\ms A_i} = \{\,c^{\ms A_i} \mid c \in \bar c
\cap \interpreted(\ms A_i)\,\}$, and therefore $c \in \interpreted(\ms
A_i)$ if and only if $c \in \interpreted(\ms H_i)$.

If $c^{\ms H_i} \in H_1 \cap H_2$, then in particular $c^{\ms H_i} \in A_1 \cap A_2 \subseteq X_i$.
Hence, $c^{\ms H_i} = h_i(c^{\ms H_i}) = c^{\ms A_i}$.
Since $\ms A_1$ and $\ms A_2$ are compatible, $c^{\ms A_1} = c^{\ms A_2}$ for all $c^{\ms A_i} \in A_1 \cap A_2$,
and therefore $c^{\ms H_1} = c^{\ms H_2}$ for all $c^{\ms H_i} \in H_1 \cap H_2$.

Accordingly, $\ms H_1$ and $\ms H_2$ are compatible.
\end{proof}

We now prove that for a structure $\ms A$ with $\ms A = \ms A_1
\cup \ms A_2$ the reduced model checking game $\reduce(\emcg(\ms
A, X, \varphi))$ can, up to equivalence, be computed from $\mathcal
R_1 = \reduce(\emcg(\ms A_1, X, \varphi))$ and $\mathcal R_2 =
\reduce(\emcg(\ms A_2, X, \varphi))$.  Here, $\combine(\mathcal
R_1, \mathcal R_2)$ essentially computes the Cartesian product of
plays in the games over $\ms A_1$ and $\ms A_2$, respectively.
This is possible because each set $U \subseteq A$ can be split into
$U \cap A_1$ and $U \cap A_2$, such that
$(\ms A_1, U \cap A_1) \cup (\ms A_2, U \cap A_2) = (\ms A, U)$.
Similarly, each interpretation of a nullary symbol is either $\nil$, or
contained in $A_1 \cap A_2$, in $A_1 \setminus A_2$, or in
$A_2 \setminus A_1$ (cf., Figure~\ref{fig:join}).
These cases can be reconstructed from the respective subgames on $\ms A_1$ and
$\ms A_2$.

\begin{lemma}
\label{lem:combine}
Let $\ms A_1$ and $\ms A_2$ be compatible $\tau$-structures, $\varphi \in \mso(\tau)$ and
let $X_1 \subseteq A_1$ and $X_2 \subseteq A_2$ with $A_1 \cap A_2 = X_1 \cap X_2$.
Let, for $i \in \{1,2\}$, $\mathcal R_i = \reduce(\emcg(\ms A_i, X_i, \varphi)) \notin 
\{\true,\false\}$ and $\mathcal G_i \cong \mathcal R_i$.
Then
$$
\reduce(\emcg(\ms A_1 \cup \ms A_2, X_1 \cup X_2, \varphi)) \cong \combine(\mathcal G_1, \mathcal G_2).
$$
\end{lemma}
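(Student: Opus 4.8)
The plan is to prove the statement by induction over the structure of $\varphi$, guided by the decomposition principle sketched just before the lemma: every $U \subseteq A_1 \cup A_2$ splits as $U = (U \cap A_1) \cup (U \cap A_2)$ with $(\ms A_1, U \cap A_1) \cup (\ms A_2, U \cap A_2) = (\ms A_1 \cup \ms A_2, U)$, and every interpretation $u \in (A_1 \cup A_2) \cup \{\nil\}$ of a nullary symbol arises from a compatible pair $(u_1, u_2) \in (A_1 \cup \{\nil\}) \times (A_2 \cup \{\nil\})$. This sets up a correspondence between the subgames of $\emcg(\ms A_1 \cup \ms A_2, X_1 \cup X_2, \varphi)$ and the compatible pairs of subgames combined by $\combine$, which I would turn into the bijection $\pi$ required by Definition~\ref{def:equivalent-games}. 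Throughout I would exploit that both sides finish with a call to $\reduce$, so it suffices to match the two collections of subgames up to $\cong$ and let the final reduction collapse duplicates and perform the early returns to $\true$/$\false$.

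For the root equivalence I would argue that $p_0 = (\ms H_1 \cup \ms H_2, X_1 \cup X_2, \varphi)$ on the $\combine$ side is $\cong$ to $((\ms A_1 \cup \ms A_2)[(X_1 \cup X_2) \cup \bar c^{\ms A_1 \cup \ms A_2}], X_1 \cup X_2, \varphi)$ on the other side: the isomorphisms $h_i$ witnessing $\mathcal G_i \cong \mathcal R_i$ fix $X_i$ and hence agree on $A_1 \cap A_2 = X_1 \cap X_2$, so they glue to an isomorphism fixing $X_1 \cup X_2$, while Lemma~\ref{lem:matches-compatible} guarantees that $\ms H_1$ and $\ms H_2$ are compatible so $\ms H_1 \cup \ms H_2$ is defined. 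The base case ($\varphi$ atomic or negated atomic) then reduces to checking that relation-membership and interpretedness of the constants are preserved on passing to the union, which follows from $R^{\ms H_1 \cup \ms H_2} = R^{\ms H_1} \cup R^{\ms H_2}$ and the union rule for nullary symbols; here the hypothesis $\mathcal R_i \notin \{\true,\false\}$ is what makes both single-position games undetermined in the first place. The Boolean case $\varphi \in \{\psi_1 \wedge \psi_2, \psi_1 \vee \psi_2\}$ is immediate: the test $\psi_1 = \psi_2$ in $\combine$ pairs the $\psi_j$-subgame of $\mathcal G_1$ with the $\psi_j$-subgame of $\mathcal G_2$, and the induction hypothesis identifies each combined subgame with the corresponding subgame of the union game.

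For the quantifier cases $\varphi \in \{\forall R \psi, \exists R \psi, \forall c \psi, \exists c \psi\}$, I would fix a subgame of the union game, say the one for $U \subseteq A_1 \cup A_2$ (resp.\ for $u$), and set $U_i = U \cap A_i$. If both parts $\reduce(\emcg((\ms A_i, U_i), X_i, \psi))$ are undetermined, they are kept in $\mathcal G_i$, the pair is compatible, and the induction hypothesis gives $\combine(\cdots) \cong \reduce(\emcg((\ms A_1 \cup \ms A_2, U), X_1 \cup X_2, \psi))$, matching this subgame. If some part is determined, then by Lemma~\ref{lem:eval-equals-reduce} and Lemma~\ref{lem:emcg-eval-union} the corresponding union subgame is determined with the same value, so it is either dropped (a ``losing'' subgame for the moving player) or forces an early return, and I would verify that $\combine$ performs exactly the same action. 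The decisive point is that the hypothesis $\mathcal R_i \notin \{\true,\false\}$ forbids precisely the dangerous determined parts: for existential $\varphi$ no part can be $\true$ (else $\mathcal R_i = \true$), and for universal $\varphi$ no part can be $\false$; hence every union subgame that could trigger an early return has both parts undetermined and is therefore reached by $\combine$ through represented pairs.

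The main obstacle I anticipate is the coverage/congruence bookkeeping, since $\combine$ ranges only over the \emph{representative} pairs surviving in the reduced games $\mathcal G_1, \mathcal G_2$ rather than over all $(U_1, U_2)$. To close this, I would establish that replacing a part by a $\cong_{X_i,\psi}$-equivalent one does not change the $\cong$-class of the union game — a congruence property I can extract from the induction hypothesis itself: Lemma~\ref{lem:combine} for $\psi$ states $\combine(\mathcal H_1, \mathcal H_2) \cong \reduce(\emcg((\ms A_1 \cup \ms A_2, U), \ldots, \psi))$ for \emph{any} $\mathcal H_i \cong \reduce(\emcg((\ms A_i, U_i), X_i, \psi))$, so the $\cong$-class of the union subgame depends only on the $\cong$-classes of its two parts. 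Feeding in the representatives $U_i'$ kept in $\mathcal G_i$ then shows that every $\cong$-class of undetermined union subgame is produced by $\combine$ and, conversely, every combined subgame equals some union subgame up to $\cong$; together with the matching of early returns this yields $\pi$ and completes the induction. I would also double-check the object case, where a single $u \in A_1 \cap A_2$ is produced by several compatible pairs $((u,\nil),(\nil,u),(u,u))$ — but these all yield the same union structure, hence $\cong$ combined subgames that the final $\reduce$ identifies, so the redundancy is harmless.
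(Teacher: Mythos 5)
Your proposal is correct and follows essentially the same route as the paper's proof: structural induction over $\varphi$, gluing the isomorphisms that fix $X_1 \cup X_2$ (together with Lemma~\ref{lem:matches-compatible}) to equate the root positions, splitting set and nullary interpretations across $\ms A_1$ and $\ms A_2$, showing via Lemma~\ref{lem:eval-equals-reduce} and the determinacy carry-over results that every undetermined union subgame has two undetermined (hence represented) parts, and exploiting the $\cong$-generality of the induction hypothesis to handle the fact that $\combine$ only sees one representative per equivalence class. One nitpick: the carry-over step you attribute to Lemma~\ref{lem:emcg-eval-union} does not literally apply (it requires $X = A \cap B$, whereas here $X_1$ may properly contain $A_1 \cap A_2$); the paper instead uses Corollary~\ref{cor:introduce-join} followed by Lemma~\ref{lem:emcg-eval-forget}, which is the precise form of the claim you need.
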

\begin{proof}
The proof is an induction over the structure of~$\varphi$.  Let $\ms A = \ms A_1 \cup \ms
A_2$, $X = X_1 \cup X_2$, and $\bar c = \nullary(\tau)$.  Let $\mathcal
R = (P_{\mathcal R}, M_{\mathcal R}, p_{\mathcal R}) =
\reduce(\emcg(\ms A, X, \varphi))$ and $\mathcal G = (P_{\mathcal
G}, M_{\mathcal G}, p_{\mathcal G}) = \combine(\mathcal G_1,
\mathcal G_2)$.
Let, for $i \in \{1,2\}$, $\mathcal G_i = (P_{\mathcal G_i}, M_{\mathcal G_i}, p_{\mathcal G_i})$ and
$p_{\mathcal G_i} = (\ms H_i, X_i, \varphi)$.

By Lemma~\ref{lem:matches-compatible}, $\ms H_1$ and $\ms
H_2$ are compatible. 
Furthermore, $\ms A_i[X_i \cup \bar c^{\ms A_i}] \cong \ms H_i$, and
thus $\ms A[X \cup \bar c^{\ms A}] = \ms A_1[X_1
\cup \bar c^{\ms A_1}] \cup \ms A_2[X_2 \cup \bar c^{\ms A_2}] \cong
\ms H_1 \cup \ms H_2$.

If $\mathcal R \notin \{\true, \false\}$, then $p_{\mathcal R} =
(\ms A[X \cup \bar c^{\ms A}], X, \varphi)$.  Therefore, $$ p_{\mathcal
R} = (\ms A[X \cup \bar c^{\ms A}], X, \varphi) \cong (\ms H_1 \cup
\ms H_2, X_1 \cup X_2, \varphi) = p_{\mathcal G}.  $$

Let $\varphi$ be an atomic or negated atomic formula.  If $\mathcal
R \notin \{\true, \false\}$ the lemma already holds with above
considerations.  Therefore consider the case $\mathcal R \in \{\true,
\false\}$, say~$\mathcal R = \true$.  Then $\eval(\emcg(\ms A, X, \varphi)) =
\mathcal R = \true$ by Lemma~\ref{lem:eval-equals-reduce}.
Therefore, $\mathcal R = \true$ if and only if
the verifier wins the play $(p_0)$, where
$p_0$ is the initial position of $\emcg(\ms A[X \cup \bar c^{\ms A}], X, \varphi)$.
The claim then follows, since
$p_0 = (\ms A[X \cup \bar c^{\ms A}], X, \varphi) \cong p_{\mathcal G}$,
where in particular
$\ms A[X \cup \bar c^{\ms A}] \cong \ms H_1 \cup \ms H_2$ and $X = X_1 \cup X_2$.

For the induction step, we distinguish the following cases.

\paragraph{Case $\varphi = \psi_1 \wedge \psi_2$ or $\varphi
= \psi_1 \vee \psi_2$}~

Let, for $\psi \in \{\psi_1, \psi_2\}$, $\mathcal
R_{\psi} = \reduce(\emcg(\ms A, X, \psi))$ and,
for each $i \in \{1,2\}$, be
$\mathcal R_{i,\psi} = \reduce(\emcg(\ms A_i, X_i, \psi))$.

Consider $\psi \in \{\psi_1, \psi_2\}$ with $\mathcal R_{\psi}
\notin \{\true, \false\}$ and suppose there was $i \in \{1,2\}$,
say $i = 1$, with $\mathcal R_{1,\psi} \in \{\true, \false\}$.
Let
$\mathcal U_{1,\psi} = \emcg(\ms A_1, X_1, \psi)$ and
$\mathcal U_{\psi} = \emcg(\ms A, X_1 \cup A_2, \psi)$.  By Lemma~\ref{lem:eval-equals-reduce}, $\eval(\mathcal
U_{1,\psi}) \in \{\true, \false\}$, and therefore by
Corollary~\ref{cor:introduce-join}, $\eval(\mathcal U_{\psi}) \in
\{\true, \false\}$.  Since $X_1 \cup X_2 \subseteq X_1 \cup A_2$,
also $\eval(\ms A, X_1 \cup X_2, \psi)
\in \{\true, \false\}$.  This contradicts $\mathcal R_\psi \notin
\{\true, \false\}$ via Lemma~\ref{lem:eval-equals-reduce}.

Therefore, we have $\mathcal R_{i,\psi} \notin \{\true, \false\}$
for each $i \in \{1,2\}$, which implies $\mathcal R_i \notin \{\true, \false\}$.
Since $\mathcal G_i \cong \mathcal R_i$ for $i \in \{1,2\}$, there
is $\mathcal G_{i,\psi} \in \subgames(\mathcal G_i)$ with $\mathcal G_{i,\psi} \cong \mathcal R_{i,\psi}$.
The algorithm $\combine(\mathcal G_1, \mathcal
G_2)$ will eventually call $\combine(\mathcal G_{1,\psi}, \mathcal G_{2,\psi})$.
Then, by the induction hypothesis, 
$\subgames(\combine(\mathcal G_1, \mathcal G_2))$ contains the required subgame
$\combine(\mathcal G_{1,\psi}, \mathcal G_{2,\psi}) \cong \mathcal R_{\psi}$.

Conversely, let $\psi \in \{\psi_1, \psi_2\}$ and
$(\mathcal G_{1,\psi}, \mathcal G_{2,\psi}) \in \subgames(\mathcal G_1) \times \subgames(\mathcal G_2)$
such that $\combine(\mathcal G_1, \mathcal G_2)$ recursively calls
$\combine(\mathcal G_{1,\psi}, \mathcal G_{2,\psi})$.
From $\mathcal G_i \cong \mathcal R_i$ we get $\mathcal G_{i,\psi} \cong \mathcal R_{i,\psi}$.
Then $\combine(\mathcal G_{1,\psi}, \mathcal G_{2,\psi}) \cong \mathcal R_\psi$
 by the induction hypothesis.

Together, the statement of the lemma follows.

\paragraph{Case $\varphi = \forall R \psi$ or $\varphi =
\exists R \psi$}~

Consider an arbitrary $U \subseteq A$ and let $\mathcal R' = \reduce(\emcg(\ms A', X, \psi))$, where
$\ms A' = (\ms A, U)$ with $R^{\ms A'} = U$.
For $i \in \{1,2\}$, let $U_i = U \cap A_i$ and $\mathcal R_i' = \reduce(\emcg(\ms A_i', X_i, \psi))$, where
$\ms A_i' = (\ms A_i, U_i)$.
If $\mathcal R' \notin \{\true, \false\}$, then $\mathcal R_i'
\notin \{\true, \false\}$ for each $i \in \{1,2\}$ by using a combination
of Lemma~\ref{lem:eval-equals-reduce} and
Corollary~\ref{cor:introduce-join}. 
Therefore, $\mathcal R_i \notin
\{\true, \false\}$.  Since $\mathcal G_i \cong \mathcal R_i$,
there is $\mathcal G_i' = (P_i', M_i', p_i') \in \subgames(\mathcal G_i)$ with $\mathcal
G_i' \cong \mathcal R_i'$.
Let $p_i' = (\ms H_i', X_i, \psi)$.
Since $\ms A_1'$ and $\ms A_2'$ are compatible
and $\mathcal G_i' \cong \mathcal R_i'$, we by
Lemma~\ref{lem:matches-compatible} have that $\ms H_1'$ and $\ms
H_2'$ are compatible. 
Therefore, the algorithm eventually recursively calls $\combine(\mathcal G_1', \mathcal G_2')$.
By the induction hypothesis, $\combine(\mathcal G_1', \mathcal G_2') \cong \mathcal R'$.

Conversely, assume the algorithm recursively calls
$\combine(\mathcal G_1', \mathcal G_2')$,
where $\mathcal G_i' = (P_i', M_i', p_i') \in \subgames(\mathcal G_i)$ for
each $i \in \{1,2\}$.
From $\mathcal G_i \cong \mathcal R_i$ we get $\mathcal G_i' \cong 
\reduce(\emcg(\ms A_i', X_i, \psi))$, where $\ms A_i' = (\ms A_i, U_i)$ for
some $U_i \subseteq A_i$.
Let $p_i' = (\ms H_i', X_i, \psi)$.
Since $\ms H_1$ and $\ms H_2$ are compatible
and $A_1 \cap A_2 \subseteq X_i \subseteq H_i$, also
$\ms A_1'$ and $\ms A_2'$ are compatible.
Therefore, the induction hypothesis implies
$\combine(\mathcal G_1', \mathcal G_2') \cong \reduce(\emcg(\ms A', X, \psi))$,
where $\ms A' = (\ms A, U_1 \cup U_2)$ with $R^{\ms A'} = U_1 \cup U_2$.

Together, the statement of the lemma follows.

\paragraph{Case $\varphi = \forall c \psi$ or $\varphi = \exists c \psi$}~

Consider a $(\tau, c)$-expansion $\ms A'$ of $\ms A$
and let $\mathcal R' = \reduce(\emcg(\ms A', X, \psi))$.
Let, for $i \in \{1,2\}$, $\ms A_i' = \ms A'[A_i]$ be the $(\tau, c)$-expansion of
$\ms A_i$ with $c^{\ms A_i'} = c^{\ms A'}$ if $c^{\ms A'} \in A_i$, and
$c^{\ms A_i'} = \nil$ otherwise.
Let $\mathcal R_i' = \reduce(\emcg(\ms A_i', X_i, \psi))$.
If $\mathcal R' \notin \{\true, \false\}$, then
$\mathcal R_i' \notin \{\true, \false\}$ by a combination
of Lemma~\ref{lem:eval-equals-reduce} and
Corollary~\ref{cor:introduce-join}. 
Therefore, $\mathcal R_i \notin \{\true, \false\}$. 
Since $\mathcal G_i \cong \mathcal R_i$,
there is $\mathcal G_i' = (P_i', M_i', p_i') \in \subgames(\mathcal G_i)$ with
$\mathcal G_i' \cong \mathcal R_i'$.
Let $p_i' = (\ms H_i', X_i, \psi)$.
Since $\ms A_1'$ and $\ms A_2'$ are compatible and $\mathcal
G_i' \cong \mathcal R_i'$, Lemma~\ref{lem:matches-compatible} implies
that $\ms H_1'$ and $\ms H_2'$ are compatible.
The algorithm therefore eventually calls
$\combine(\mathcal G_1', \mathcal G_2')$.
By the induction hypothesis, $\combine(\mathcal G_1', \mathcal G_2') \cong \mathcal R'$.

Conversely, assume the algorithm recursively calls
$\combine(\mathcal G_1', \mathcal G_2')$,
where $\mathcal G_i' = (P_i', M_i', p_i') \in \subgames(\mathcal G_i)$ for
each $i \in \{1,2\}$.
From $\mathcal G_i \cong \mathcal R_i$ we get $\mathcal G_i' \cong \reduce(\ms A_i', X_i, \psi)$
for some $(\tau, c)$-expansion~$\ms A_i'$ of~$\ms A_i$.
Since $\ms H_1$ and $\ms H_2$ are compatible
and $A_1 \cap A_2 \subseteq X_i \subseteq H_i$, also
$\ms A_1'$ and $\ms A_2'$ are compatible.
By the induction hypothesis,
$\combine(\mathcal G_1', \mathcal G_2') \cong \reduce(\emcg(\ms A', X, \psi))$,
where $\ms A' = \ms A_1' \cup \ms A_2'$.

Together, the statement of the lemma follows.
\end{proof}

\begin{algorithm}[tb]
Algorithm $\forget(\mathcal G, x)$
\halign{# \hfil&#\hfil\cr
Input: 	& A game $\mathcal G = (P, M, p_0)$ with $p_0 = (\ms H, X, \varphi)$ and  $x \in X$ \cr
}
\medskip \activatealgo \obeylines
[if] there is $c \in \interpreted(\ms H)$ with $c^{\ms H} = x$ [then]
    \> let $p_0' = (\ms H, X \setminus \{x\}, \varphi)$
[else] let $p_0' = (\ms H[H \setminus \{x\}], X \setminus \{x\}, \varphi)$.
Let $P' = \{p_0'\}$ and $M' = \emptyset$.
[for] each $\mathcal G' \in \subgames(\mathcal G)$ [do]
    \> Let $(P'', M'', p_0'') = \forget(\mathcal G')$.
    \> Set $P' := P' \cup P''$ and $M' := M' \cup M''$.
[return] $\reduce((P', M', p_0'))$
\caption{\label{alg:forget} Forgetting an object.}
\end{algorithm}

\begin{lemma}
\label{lem:forget}
Let $\ms A$ be a $\tau$-structure, $X \subseteq A$ and $x \in X$.
Let $\varphi \in \mso(\tau)$ and
$\mathcal G \cong \reduce(\emcg(\ms A, X, \varphi)) \notin \{\true,\false\}$.
Then
$$
\reduce(\emcg(\ms A, X \setminus \{x\}, \varphi)) \cong \forget(\mathcal G, x).
$$
\end{lemma}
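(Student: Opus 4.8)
The plan is to induct on the structure of $\varphi$, in the same style as Lemma~\ref{lem:combine}, showing that relabelling every position of $\mathcal G$ via $\forget$ and then re-reducing yields exactly (up to $\cong$) the game obtained by building and reducing $\emcg(\ms A, X \setminus \{x\}, \varphi)$ from scratch. Two facts drive the argument. First, $\emcg(\ms A, X, \varphi)$ and $\emcg(\ms A, X \setminus \{x\}, \varphi)$ have the same positions, moves, and player assignments up to relabelling — only the $\ms H$- and $X$-components of the labels differ — exactly as observed in the proof of Lemma~\ref{lem:emcg-eval-forget}. Hence $\eval$ returns the same verdict ($\true$, $\false$, or undetermined) on both, and by Lemma~\ref{lem:eval-equals-reduce} so does $\reduce$; in particular the hypothesis $\reduce(\emcg(\ms A, X, \varphi)) \notin \{\true, \false\}$ guarantees that the target $\reduce(\emcg(\ms A, X \setminus \{x\}, \varphi))$ and every kept subgame are undetermined as well. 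Second, since the isomorphism witnessing $\cong$ on positions with distinguished set $X \setminus \{x\}$ need only fix $X \setminus \{x\}$ rather than $X$, the relation $\cong$ over $X \setminus \{x\}$ is coarser than over $X$; this is what the final $\reduce$ call inside $\forget$ exploits.

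For the root relabelling (and the base case), let $p_0 = (\ms H, X, \varphi)$ be the initial position of $\mathcal G$, so $\ms H \cong \ms A[X \cup \bar c^{\ms A}]$ via an isomorphism fixing $X$, where $\bar c = \nullary(\tau)$. I would split exactly as the algorithm does. If some $c \in \interpreted(\ms H)$ has $c^{\ms H} = x$, then $x \in \bar c^{\ms A}$, so $(X \setminus \{x\}) \cup \bar c^{\ms A} = X \cup \bar c^{\ms A}$ and the induced structure does not change; the algorithm correctly keeps $\ms H$ and the witnessing isomorphism (which fixes $X \supseteq X \setminus \{x\}$) still works. Otherwise $x$ is not a constant value, hence $x \notin \bar c^{\ms A}$, the isomorphism restricts to $H \setminus \{x\} \to (X \cup \bar c^{\ms A}) \setminus \{x\}$ fixing $X \setminus \{x\}$, and $\ms H[H \setminus \{x\}] \cong \ms A[(X \setminus \{x\}) \cup \bar c^{\ms A}]$, as required. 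For atomic or negated atomic $\varphi$ this already finishes the case, since the game is a single position whose verdict depends solely on $\ms A$ (not on $X$) and thus coincides for both distinguished sets.

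For the inductive step I would treat the representative case $\varphi = \exists R \psi$; the cases $\forall R \psi$, $\forall c \psi$, $\exists c \psi$, and the binary connectives are analogous, with object quantifiers additionally triggering the constant/non-constant split at the level of subgames (handled automatically by the recursion). By hypothesis $\mathcal G$ keeps, for undetermined subgames, one representative $\mathcal G_j \cong \reduce(\emcg((\ms A, U_j), X, \psi))$ per $\cong$-class over $X$, and no subgame is $\true$ (else $\mathcal G = \true$). Applying the induction hypothesis to each $\mathcal G_j$ gives $\forget(\mathcal G_j, x) \cong \reduce(\emcg((\ms A, U_j), X \setminus \{x\}, \psi))$, so every game collected by $\forget$ is a genuine subgame of the target. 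For the converse inclusion I would take an arbitrary $U \subseteq A$ whose target subgame $\reduce(\emcg((\ms A, U), X \setminus \{x\}, \psi))$ is undetermined; by verdict-preservation its $X$-counterpart is undetermined too, hence $\cong$-equivalent over $X$ to some kept $\mathcal G_j$, and since $\forget$ respects $\cong$ the target subgame over $U$ is $\cong \forget(\mathcal G_j, x)$. Determined target subgames are $\false$ (never $\true$, by the same verdict argument) and are dropped on both sides. Thus $\forget(\mathcal G, x)$ and the target have the same subgames up to $\cong$; the concluding $\reduce$ inside $\forget$ collapses precisely those $\mathcal G_j$ that became $\cong$-equivalent only after passing to the coarser $X \setminus \{x\}$-equivalence, matching the representatives chosen by the target's own reduction.

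The main obstacle is this last bookkeeping: proving that forgetting followed by re-reduction yields neither too few nor too many equivalence classes. The delicate points are (i) that changing the distinguished set from $X$ to $X \setminus \{x\}$ never turns an undetermined (sub)game into a determined one nor vice versa — which I reduce to the structural identity of the two game trees already used in Lemma~\ref{lem:emcg-eval-forget}, so that no needed undetermined class is silently discarded by the reductions — and (ii) that $\forget$ respects $\cong$, so that the arbitrary choice of representatives $\mathcal G_j$ in $\mathcal G$ and of $U$-indexed subgames in the target lead to the same classes after the coarser reduction. Everything else is the routine relabelling verified in the base case and propagated by the induction.
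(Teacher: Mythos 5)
Your proposal is correct and follows essentially the same route as the paper's own proof: structural induction on $\varphi$, the observation (from Lemma~\ref{lem:emcg-eval-forget}) that the $X$- and $X\setminus\{x\}$-games are structurally identical up to relabelling so that determinacy verdicts transfer via Lemma~\ref{lem:eval-equals-reduce}, a base case split on whether $x$ is the value of an interpreted constant, and a two-directional correspondence between kept subgames and target subgames via the induction hypothesis. If anything, you are more explicit than the paper about the final bookkeeping step --- that the concluding $\reduce$ inside $\forget$ collapses representatives that become equivalent only under the coarser $X\setminus\{x\}$-equivalence --- which the paper leaves implicit in its closing ``Together, the statement of the lemma follows.''
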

\begin{proof}
We use induction over the structure of~$\varphi$.
Let $\bar c = \nullary(\tau)$, $X' = X \setminus \{x\}$, $\mathcal
R' = (P_{\mathcal R'}, M_{\mathcal R'}, p_{\mathcal R'}) =
\reduce(\ms A, X \setminus \{x\}, \varphi)$.
Let $\mathcal G = (P_{\mathcal G}, F_{\mathcal G}, p_{\mathcal G})$ with $p_{\mathcal G} = (\ms H, X, \varphi)$.
Here, $\ms H \cong \ms A[X \cup c^{\ms A}]$, since $\mathcal G \cong 
\reduce(\emcg(\ms A, X, \varphi))$.

If $\varphi$ is an atomic or negated atomic formula and $\mathcal R' \notin \{\true, \false\}$,
the statement holds since $p_{\mathcal R'} = (\ms A[X' \cup c^{\ms A}], X', \varphi)$
by definition.

If otherwise $\varphi$ is an atomic or negated atomic formula and
$\mathcal R' \in \{\true, \false\}$, 
let $\ms H' = \ms H[H \setminus \{x\}]$ if $c^{\ms H} \neq x$ for all $c \in \interpreted(\ms H)$,
and $\ms H' = \ms H$ otherwise.
If $\mathcal R' \in \{\true, \false\}$, then
$\eval(\emcg(\ms A, X', \varphi)) = \mathcal R'$ by Lemma~\ref{lem:eval-equals-reduce}.
Since $\ms H' \cong \ms A[X' \cup c^{\ms A}]$, we
have $\eval(\emcg(\ms A, X', \varphi)) = \eval(\emcg(\ms H', X', \varphi'))$
and $$
	\eval(\emcg(\ms H', X', \varphi')) = \reduce(\emcg(\ms H', X', \varphi)) = \forget(\mathcal G, x).
$$

For the induction step, let $\mathcal G' \in \subgames(\mathcal G)$ be an arbitrary subgame of~$\mathcal G$.
Since $\mathcal G \cong \reduce(\emcg(\ms A, X, \varphi))$, we know that
$\mathcal G' \cong \reduce(\emcg(\ms A', X, \psi))$ for some expansion $\ms A'$ of $\ms A$
and subformula $\psi$ of $\varphi$.  By the induction hypothesis,
$$
\forget(\mathcal G', x) \cong \reduce(\emcg(\ms A', X \setminus \{x\}, \psi)).
$$

Conversely, if $\mathcal R'' = \reduce(\emcg(\ms A', X \setminus \{x\}, \psi))$ is a subgame of~$\mathcal R'$, 
then
$\mathcal R'' \notin \{\true, \false\}$.  This implies
$\reduce(\emcg(\ms A', X, \psi)) \notin \{\true, \false\}$ by
Lemmas~\ref{lem:emcg-eval-forget} and~\ref{lem:eval-equals-reduce}.
Therefore, there is $\mathcal G' \in \subgames(\mathcal G)$ with $\mathcal G' \cong \reduce(\emcg(\ms A', X, \psi))$.
By the induction hypothesis, $\mathcal R'' \cong \forget(\mathcal G', x)$.

Together, the statement of the lemma follows.
\end{proof}

Finally, we come back to Algorithm~\ref{alg:convert} and show that
its correctness translates to reduced games.

\begin{lemma}
\label{lem:convert}
Let $\ms A$ be a fully interpreted $\tau$-structure, $X \subseteq A$, and $\varphi \in \mso(\tau)$.
Let $\mathcal G \cong \reduce(\emcg(\ms A, X, \varphi))$.
Then
$$
\eval(\mcg(\ms A, \varphi)) = \eval(\convert(\mathcal G)).
$$
\end{lemma}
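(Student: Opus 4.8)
The plan is to first strip the ``$\cong$'' from the hypothesis and then run an induction on the structure of~$\varphi$, splitting at each step according to whether the underlying extended game is determined. Since $\ms A$ is fully interpreted, the classical game $\mcg(\ms A, \varphi)$ is well-founded, so $\eval(\mcg(\ms A, \varphi)) \in \{\true, \false\}$, and the goal is to show the right-hand side agrees with this value. Because $\eval$ and $\convert$ are defined structurally and respect~$\cong$ (equivalent games are converted to equivalent games, and equivalent games have equal $\eval$-value, cf.\ the bisimulation remark following Definition~\ref{def:equivalent-games}), it suffices to treat the case $\mathcal G = \reduce(\emcg(\ms A, X, \varphi))$. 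I would also invoke Lemma~\ref{lem:convert-me}, namely $\mcg(\ms A, \varphi) = \convert(\emcg(\ms A, X, \varphi))$, to phrase everything in terms of~$\convert$.

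The easy case is $\eval(\emcg(\ms A, X, \varphi)) \in \{\true, \false\}$. Here Lemma~\ref{lem:eval-equals-reduce} gives $\reduce(\emcg(\ms A, X, \varphi)) = \eval(\emcg(\ms A, X, \varphi))$, a game in $\{\true, \false\}$ that $\convert$ leaves unchanged; hence $\eval(\convert(\mathcal G)) = \eval(\emcg(\ms A, X, \varphi))$, and this equals $\eval(\mcg(\ms A, \varphi))$ by Lemma~\ref{lem:emcg-determines-mcg}. No induction is needed.

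The substance is the case where $\emcg(\ms A, X, \varphi)$ is undetermined, which I would handle by induction on~$\varphi$. The reduced game is then a nontrivial $(P', M', p_0)$, and I would inspect its children by the top connective of~$\varphi$ (say $\exists c\,\psi$; the universal, $\vee/\wedge$, and set-quantifier cases are dual or simpler). Three observations drive the argument. First, $\reduce$ drops exactly the determined children; by Lemma~\ref{lem:eval-equals-reduce} and Lemma~\ref{lem:emcg-determines-mcg} a fully interpreted child dropped as $\false$ corresponds to a $\mcg$-child of value $\false$, which is irrelevant at an existential node, and no child is $\true$ (else $\reduce$ would have returned $\true$). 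Second, $\reduce$ keeps one representative per $\cong$-class; since the isomorphism in Definition~\ref{def:equivalent-games} must preserve interpretedness of nullary symbols, no $\cong$-class mixes fully interpreted children with $\nil$-children, so representative selection never replaces a genuine move by a $\nil$-move. Third, $\convert$ retains precisely the fully interpreted children, which are exactly the expansions $(\ms A, c \mapsto a)$ (fully interpreted because $\ms A$ is), while discarding the $\nil$-child, and the induction hypothesis applies to each retained child, giving $\eval(\convert(\reduce(\emcg((\ms A, c \mapsto a), X, \psi)))) = \eval(\mcg((\ms A, c \mapsto a), \psi))$. Combining these yields that $\eval(\convert(\mathcal G)) = \true$ iff some retained representative $a$ has $\eval(\mcg((\ms A, c \mapsto a), \psi)) = \true$, iff some $a \in A$ has this property, iff $\eval(\mcg(\ms A, \varphi)) = \true$.

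The delicate point, and the step I expect to be the main obstacle, is verifying that $\eval(\convert(\mathcal G))$ actually returns a \emph{determined} value even though $\emcg(\ms A, X, \varphi)$ was undetermined. This hinges on the fact that once $\convert$ removes the $\nil$-children every remaining child is fully interpreted, so by the induction hypothesis its $\eval$ is determined (as $\mcg$ on a fully interpreted structure is well-founded); the node-rules of $\eval$ then force a determined outcome, in particular returning $\false$ at an existential node whose only undetermined child was the discarded $\nil$-child, via the rule for an empty set of subgames. Once this is in place, the same bookkeeping carried out dually for universal nodes, and routinely for $\vee$, $\wedge$, $\exists R$, and $\forall R$ (where all children are already fully interpreted, so $\convert$ drops nothing and no $\nil$-subtlety arises), completes the induction.
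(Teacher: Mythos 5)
Your proposal is correct and takes essentially the same route as the paper's proof: induction over the structure of~$\varphi$, with the determined case dispatched by Lemmas~\ref{lem:eval-equals-reduce} and~\ref{lem:emcg-determines-mcg}, and the undetermined case handled by classifying each child as retained (apply the induction hypothesis), dropped because it was determined (then its $\mcg$-value is forced via Lemmas~\ref{lem:eval-equals-reduce} and~\ref{lem:convert-me}), or the $\nil$-child that $\convert$ removes. The only difference is bookkeeping: you strip the~$\cong$ at the outset by arguing that $\eval$ and $\convert$ respect equivalence, whereas the paper keeps the $\cong$-tolerant statement as the induction hypothesis, which absorbs the representative-selection issue without needing those auxiliary claims; both resolutions work.
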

\begin{proof}
We prove the statement by induction over the structure of~$\varphi$.
Recall that $\mathcal M = \mcg(\ms A, X, \varphi)$ is determined and
hence $\eval(\mathcal M) \in \{\true, \false\}$.

If $\mathcal G \in \{\true, \false\}$, then $\mathcal G =
\convert(\mathcal G)$.  We get $\mathcal G = \eval(\emcg(\ms A, X,
\varphi))$ from Lemma~\ref{lem:eval-equals-reduce} and therefore, using
Lemmma~\ref{lem:emcg-determines-mcg} for the first equality,
$$
\eval(\mcg(\ms A, \varphi)) = \eval(\emcg(\ms A, X, \varphi)) = \mathcal G = \eval(\mathcal G) = \eval(\convert(\mathcal G)).
$$

Let therefore $\mathcal G = (P, M, p_0) \notin \{\true, \false\}$ with
$p_0 = (\ms H, X, \varphi)$ and suppose $\eval(\mcg(\ms A, \varphi)) =
\true$ (the case $\false$ is shown analogously).  For atomic or negated
atomic formulas, the statement holds since, by definition, $\mathcal G
\cong \reduce(\emcg(\ms A, X, \varphi)) = \emcg(\ms A, X, \varphi)$,
and hence $\mcg(\ms A, \varphi) = \convert(\mathcal G)$
by Lemma~\ref{lem:convert-me}.

If $\varphi \in \{\forall R \psi, \exists R \psi\}$, say $\varphi = \forall R \psi$,
consider $U \subseteq A$ and let $\ms A' = (\ms A, U)$
with $R^{\ms A'} = U$.
If there is $\mathcal G' \in \subgames(\mathcal G)$ with
$\mathcal G' \cong \reduce(\emcg(\ms A', X, \psi))$,
then $\eval(\mcg(\ms A', \varphi)) = \eval(\convert(\mathcal G'))$ by the induction hypothesis.
If otherwise there is no such $\mathcal G'$ in $\subgames(\mathcal G)$,
then $\reduce(\emcg(\ms A', X, \psi)) = \true$ by definition of~$\reduce()$, since
$\mathcal G \notin \{\true, \false\}$.
By Lemmas~\ref{lem:eval-equals-reduce} and~\ref{lem:convert-me}, we then conclude
$\eval(\mcg(\ms A', \psi)) = \true$.
Together, the lemma follows.

Similarly, if $\varphi \in \{\psi_1 \wedge \psi_2, \psi_1 \vee
\psi_2\}$, then for $\psi \in \{\psi_1, \psi_2\}$ either there is
$\mathcal G' \in \subgames(\mathcal G)$, such that $\mathcal G' \cong
\reduce(\ms A, X, \psi)$, or there is no such $\mathcal G'$ contained
in~$\subgames(\mathcal G)$.  In the former case we again obtain
$\eval(\mcg, \psi) = \eval(\convert(\mathcal G'))$ by the induction
hypothesis, and in the latter case we can again argue that $\mathcal G'
\cong \reduce(\emcg(\ms A, X, \psi)) \in \{\true, \false\}$.

Finally, let $\varphi \in \{\forall c \psi, \exists c \psi\}$.
For any $a \in A$ and $\ms A' = (\ms A, a)$, where $c^{\ms A'} = a$,
we argue analogously to the previous cases that either there is $\mathcal G' \in
\subgames(\mathcal G)$, such that $\mathcal G' \cong \reduce(\ms A', X,
\psi)$, or there is no such $\mathcal G'$ contained in~$\subgames(\mathcal
G)$, which implies $\mathcal G' \cong \reduce(\emcg(\ms A', X, \psi))
\in \{\true, \false\}$.

Hence, consider the $(\tau, c)$-expansion $\ms A'$ of~$\ms A$ with $c^{\ms A'} = \nil$.
If there is $\mathcal G' = (P', M', p_0') \in \subgames(\mathcal G)$ with 
$\mathcal G' \cong \reduce(\emcg(\ms A', X, \psi))$, then $\mathcal G' \notin \{\true, \false\}$.
In particular, $\mathcal G' = (\ms H', X, \psi)$, where $\ms H'$ is not fully interpreted.
Therefore, $\convert(\mathcal G)$ removes the subgame~$\mathcal G'$ from~$\mathcal G$.
In either case, $\convert(\mathcal G)$ does only contain subgames where~$c$ has been
interpreted as an object in~$A$, as considered above.
Together, the statement of the lemma then follows.
\end{proof}

\section{Courcelle's Theorem}
\label{sec:algorithm}

We can now reprove Courcelle's Theorem for \linmso-definable
optimization problems. 
Throughout this section, we shall abbreviate
$\redgame(\ms A, X, \varphi) := \reduce(\emcg(\ms A, X, \varphi))$.

\begin{theorem}
\label{thm:main}
Fix a relational vocabulary $\tau$, a set
$\bar R = \{R_1, \ldots, R_l\} \subseteq \unaryrelations(\tau)$ of unary
relation symbols, and $\tau' = \tau \setminus \bar R$.
Let $\varphi \in \mso(\tau)$, and $w, \alpha_1, \ldots, \alpha_l \in \mathbf Z$ be
constants.
Given a $\tau'$-structure $\ms A$ together with a tree decomposition
$(\mathcal T, \mathcal X)$ of $\ms A$ having width at most~$w$, where
$\mathcal T = (T, F)$ and $\mathcal X = (X_i)_{i \in T}$, one
can compute 
$$\linmsoopt$$
in time $O(|T|)$.
\end{theorem}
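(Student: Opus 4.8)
The plan is to solve the optimization problem by a single bottom-up pass over the nice tree decomposition, maintaining at each node $i$ a \emph{weighted} version of the reduced game. Concretely, at node $i$ I would store a partial function $W_i$ that assigns to each reduced game $\mathcal G$ (up to $\cong$) the value
\[
W_i(\mathcal G) = \min\Bigl\{\,\textstyle\sum_{k=1}^l \alpha_k\,|R_k^{\ms A_i'}| \;\Big|\; \ms A_i' \text{ a $\tau$-expansion of } \ms A_i,\ \redgame(\ms A_i', X_i, \varphi) \cong \mathcal G \,\Bigr\},
\]
with $\min\emptyset = \infty$, where $\ms A_i$ is the $\tau'$-substructure associated with $i$ and an expansion ranges over all interpretations of the free relations $\bar R$ on $A_i$. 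The key structural fact making this efficient is that, by Lemma~\ref{lem:number}, the number of $\cong$-inequivalent reduced games over structures containing a bag $X_i$ with $|X_i|\le w+1$ is bounded by $N_{X_i,\varphi} \le \exp^{\qr(\varphi)+1}((w+2)^{O(\|\varphi\|)})$, a constant, while by Lemma~\ref{lem:size} each such game has constant size. Hence every table $W_i$ is of constant size and is computable from the children's tables in constant time.

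I would establish the invariant on $W_i$ by induction over the decomposition, using the combine and forget lemmas to justify the transitions and the monotonicity results for determined games. For a \emph{leaf}, $\ms A_i$ is empty and $X_i=\emptyset$, so $W_i$ has the single entry $\redgame(\ms A_i,\emptyset,\varphi)\mapsto 0$. At a \emph{forget} node forgetting $x$, Lemma~\ref{lem:forget} gives $\redgame(\ms A_i,X_i,\varphi)\cong\forget(\mathcal G,x)$ for $\mathcal G\cong\redgame(\ms A_j,X_j,\varphi)$; since $\ms A_i=\ms A_j$ and the $\bar R$-interpretation is unchanged, I map each class $\mathcal G$ to $\forget(\mathcal G,x)$ keeping the minimum weight, and propagate determined classes unchanged via Lemma~\ref{lem:emcg-eval-forget}. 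At an \emph{introduce} node adding $a$ I write $\ms A_i=\ms A_j\cup\ms A_i[X_i]$ and, for each child class $\mathcal G$ of value $v$ and each choice $S\subseteq\{1,\dots,l\}$ of the $\bar R$-membership of $a$, I combine $\mathcal G$ with the constant-size reduced bag game on $\ms A_i[X_i]$ expanded consistently with $\mathcal G$ on $X_j$ and with $a\in R_k\iff k\in S$, recording weight $v+\sum_{k\in S}\alpha_k$; Lemma~\ref{lem:combine} (with compatibility supplied by Lemma~\ref{lem:matches-compatible}) certifies that this yields $\redgame(\ms A_i',X_i,\varphi)$ for the corresponding expansion, and Corollary~\ref{cor:introduce-join} handles determined classes. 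At a \emph{join} node I combine each compatible pair of child classes, i.e.\ those whose games agree on $\bar R\cap X_i$ (readable from the root positions), via $\combine$, recording weight $v_1+v_2-\sum_{k=1}^l\alpha_k\,|R_k\cap X_i|$ to cancel the doubly counted bag contribution; Lemma~\ref{lem:combine} again certifies correctness, and in every case the minimum weight per resulting class is retained.

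Finally, at the root $r$ I assume without loss of generality $X_r=\emptyset$, so that $\ms A_r=\ms A$ and every $\tau$-expansion $\ms A_r'$ is fully interpreted (as $\tau$ is relational). For each stored class $\mathcal G$ I compute $\convert(\mathcal G)$ and evaluate it: by Lemma~\ref{lem:convert}, $\eval(\convert(\mathcal G))=\eval(\mcg(\ms A_r',\varphi))$, which is $\true$ precisely when $\ms A_r'\models\varphi$. The desired optimum is therefore $\min\{\,W_r(\mathcal G)\mid \eval(\convert(\mathcal G))=\true\,\}$, returning $\infty$ when no such class exists. Because each table is of constant size and every node is processed in constant time, the total running time is $O(|T|)$.

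I expect the main obstacle to be the weight bookkeeping rather than the structural recursion: the combine and forget lemmas already guarantee that the reduced games are computed correctly, so the genuine work is proving the invariant on $W_i$. In particular I must argue that keeping a single minimum weight per $\cong$-class is sound — that the optimization \emph{commutes} with reduction, because $\cong_{X,\varphi}$-equivalent expansions are interchangeable as ``witnesses'' for the future — and that the updates neither under- nor over-count. The join step is the delicate one, since the bag objects are shared by both branches and their $\bar R$-contribution must be subtracted exactly once, which is only possible because the shared interpretation of $\bar R$ on $X_i$ can be recovered from the root positions of the games being combined.
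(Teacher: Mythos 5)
Your proposal is correct and follows essentially the same route as the paper's proof: a bottom-up dynamic program storing one representative per $\cong$-class of reduced games together with a minimum weight, transitions justified by Lemmas~\ref{lem:combine} and~\ref{lem:forget} plus the monotonicity results for determined games, constant table sizes and constant-time operations from Lemmas~\ref{lem:number} and~\ref{lem:size}, and the root test via $\convert$/$\eval$ (Lemma~\ref{lem:convert}). The differences are only bookkeeping conventions --- the paper keys its tables by the pair (bag restriction $\bar U$, game class) and counts only contributions of already-forgotten elements (adding bag weights at forget nodes), whereas you key by class alone (recovering $\bar U$ from root positions, which is legitimate since $\cong$ fixes $X_i$ pointwise) and count all elements (subtracting the shared bag weight at joins) --- and the interchangeability obligation you flag is exactly what the paper discharges in the exchange arguments for part~(III) of Invariant~\ref{inv:alg}.
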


The remainder of this section is devoted to the proof of this theorem.
We give an algorithm that essentially works as follows:
In a first phase, the algorithm uses dynamic programming on the
tree decomposition (based on Lemmas~\ref{lem:alg-leaf}--\ref{lem:alg-join})
to compute the reduced extended model checking games
$\mathcal G \cong \reduce(\ms A_i', \emptyset, \varphi)$ and the
values $\sum_{k=1}^l \alpha_k |U_k|$ for \emph{all} structures $\ms A_i' = (\ms A, U_1, \ldots, U_l)$
where $U_i \subseteq A$ for $1 \le i \le l$.
Note that by the previous sections the algorithm does not need to
distinguish between equivalent games.
In a second phase, the algorithm tests whether the verifier has
a winning strategy on $\convert(G)$, or, in other words (Lemma~\ref{lem:convert}),
whether $(\ms A, U_1, \ldots, U_l) \models \varphi$.
The algorithm then collects the values
$\sum_{k=1}^l \alpha_k |U_k|$ for all $\ms A_i' = (\ms A,
U_1, \ldots, U_l)$ with $\ms A_i' \models \varphi$ and outputs the optimal one.
Since most of the games considered are equivalent (Lemma~\ref{lem:number}),
we can obtain the desired run time bounds.

Without loss of generality, we assume $X_{\root(\mathcal T)} = \emptyset$.
Recall that for each $i \in T$, $\ms A_i$ is the substructure of
$\ms A$ induced by those objects that appear at or below $i$ in
the tree decomposition.
Let, for $i \in T$, 
$$
\mathcal {AR}_i = \Pow(A_i)
	\times \cdots \times \Pow(A_i) = \Pow(A_i)^l
$$
be the set of possible interpretations of the free relation symbols
$(R_1, \ldots, R_l)$ in $\ms A_i$,
$$
\mathcal{EXP}_i = \{\,(\ms A_i, U_1, \ldots, U_l) \mid (U_1, \ldots, U_l) \in
	\mathcal{AR}_i\,\}
$$
be the set of their corresponding $\tau$-expansions of $\ms A_i$, 
where for each $1 \le j \le l$ the symbol $R_j$ is interpreted as $U_j$,
and
$$
\mathcal {RED}_i = \{\, \redgame(\ms A_i', X_i, \varphi) \mid 
\ms A_i' \in \mathcal{EXP}_i\,\}
$$
be the corresponding extended model checking games in their reduced form.  We let
$(U_1, \ldots, U_l) \cap X_i := (U_1 \cap X_i, \ldots, U_l \cap X_i)$ and
$$
\mathcal{AR}_i \cap X_i = \{\,(U_1, \ldots, U_l) \cap X_i \mid
	(U_1, \ldots, U_l) \in \mathcal{AR}_i\,\}
$$
be the restriction of $\mathcal{AR}_i$ to $X_i$, and let,
for $\bar U = (U_1, \ldots, U_l) \in \mathcal{AR}_i \cap X_i$, 
$$
\mathcal{EXP}_i(\bar U) = \{\,
\ms A_i' \in \mathcal{EXP}_i \mid
R_j^{\ms A_i'} \cap X_i = U_j \text{ for $1 \le j \le l$}
\,\}
$$
be the set of $\tau$-expansions of~$\ms A$ that ``match'' $\bar U$ on $X_i$.
Let
$$
\mathcal{RED}_i(\bar U) =
\{\,\redgame(\ms A_i', X_i, \varphi)
\mid \ms A_i' \in \mathcal{EXP}_i(\bar U)
\,\}
$$
be the corresponding games, and, for arbitrary games $\mathcal R$,
$$
\mathcal{EXP}_i(\bar U, \mathcal R) = \{\,\ms A_i' \in \mathcal{EXP}_i(\bar U) \mid
\mathcal R \cong \redgame(\ms A_i', X_i, \varphi)\,\}
$$
and
$$
\mathcal{RED}_i(\bar U, \mathcal R) = 
\{\,\mathcal R' \in \mathcal{RED}_i(\bar U) \mid \mathcal R \cong \mathcal R'\,\}.
$$
Finally, we let, for $\bar U = (U_1, \ldots, U_l) \in \mathcal{AR}_i$,
$$
\ms A_i(\bar U) = (\ms A_i, U_1, \ldots, U_l)[X_i],
$$
where $R_i^{\ms A_i(\bar U)} = U_i \cap X_i$ for each $1 \le i \le l$, and
$$
\mathcal R(\bar U) = \redgame(\ms A_i(\bar U), X_i, \varphi).
$$

\subsection{The Algorithm}

We use dynamic programming on the tree decomposition as follows.
As usual, we associate with each node $i \in T$ of the tree decomposition a \emph{table} $S_i$ that
contains feasible, \emph{partial} solutions and their corresponding \emph{value} $\val_i$ under the optimization function.

Formally, we let $S_i\colon \mathcal{AR}_i \cap X_i \to \Pow(\mathcal{RED}_i \setminus \{\false\})$
map tuples $\bar U \in \mathcal{AR}_i \cap X_i$ to sets of \emph{feasible} games over $\ms A_i$, i.e.,
games $\mathcal R$ with $\mathcal R \neq \false$, and let $\val_i\colon
\mathcal{RED}_i \to \mathbf Z_\infty$ be the corresponding values,
where $\mathbf Z_\infty =  \mathbf Z \cup \{\infty\}$.

Initially, we let $S_i(\bar U) := \emptyset$ for all $\bar U \in \mathcal{AR}_i \cap X_i$ and
$\val_i(\mathcal R) := \infty$ for all $\mathcal R \in \mathcal{RED}_i$.

\paragraph{Phase 1}
The algorithm traverses the tree decomposition bottom-up. 
Recall that each node $i \in T$ is either a leaf, or of one of the three types
\emph{introduce}, \emph{forget}, or \emph{join}.  The algorithm
distinguishes these four cases as follows.

\begin{description}
\item[leaf] Let $X_i = \{x\}$.  For all $\bar U = (U_1, \ldots, U_l)
\in \mathcal{AR}_i \cap X_i$ the algorithm considers 
$\mathcal R(\bar U) = \redgame(\ms A_i(\bar U), X_i, \varphi)$.
If $\mathcal R(\bar U) \neq \false$, then the algorithm
sets
$$
S_i(\bar U) := \{\mathcal R(\bar U)\} \qquad \text{and} \qquad
\val_i(\mathcal R(\bar U)) := 0.
$$

\item[introduce] Let $j$ be the unique child of $i$ and
$X_i = X_j \cup \{x\}$ for $x \notin A_j$.

For each $\bar U_j = (U_{j,1}, \ldots, U_{j,l}) \in \mathcal{AR}_j \cap X_j$, and
each $\bar U_i = (U_{i,1}, \ldots, U_{i,l}) \in \mathcal{AR}_i \cap X_i$ such that
$(U_{j,1}, \ldots, U_{j,l}) = (U_{i,1} \cap X_j, \ldots, U_{i,l} \cap X_j)$,
the algorithm considers each $\mathcal R_j \in S_j(\bar U_j)$.

Let
$$
\mathcal R_i = \begin{cases}
\true & \text{if } \mathcal R_j = \true \text{ and}\\
\combine(\mathcal R_j, \mathcal R(\bar U_i)) & \text{otherwise.} \\
\end{cases}
$$
If there is $\mathcal R_i' \in S_i(\bar U_i)$ with $\mathcal R_i' \cong \mathcal R_i$,
then let $\mathcal R_i := \mathcal R_i'$ instead.

If $\mathcal R_i \neq \false$, the algorithm sets
$$
S_i(\bar U) := S_i(\bar U) \cup \{\mathcal R_i\}
\qquad \text{and} \qquad
\val_i(\mathcal R_i) := \min\{\val_i(\mathcal R_i), \val_j(\mathcal R_j)\}.
$$

\item[forget] Let $j$ be the unique child of $i$ and
$X_i \cup \{x\} = X_j$ for $x \notin A_i$.

For each $\bar U_j = (U_{j,1}, \ldots, U_{j,l}) \in \mathcal{AR}_j \cap X_j$
the algorithm considers each $\mathcal R_j \in S_j(\bar U_j)$.
Let $\bar U_i = (U_{i,1}, \ldots, U_{i,l}) =
(U_{j,1} \cap X_i, \ldots, U_{j,l} \cap X_i)$ and
$$
\mathcal R_i = \begin{cases}
\true & \text{if } \mathcal R_j = \true \text{ and}\\
\forget(\mathcal R_j, x) & \text{otherwise.} \\
\end{cases}
$$
If there is $\mathcal R_i' \in S_i(\bar U_i)$ with $\mathcal R_i' \cong \mathcal R_i$,
then let $\mathcal R_i := \mathcal R_i'$ instead.
If now $\mathcal R_i \neq \false$,
the algorithm sets $S_i(\bar U_i) := S_i(\bar U_i) \cup \{\mathcal R_i\}$
and
$$
\val_i(\mathcal R_i) := \min\left\{\val_i(\mathcal R_i),
\val_j(\mathcal R_j) + \sum_{k=1}^l \alpha_k (x \in U_{i, k})\right\}.
$$
where $(x \in U_{j,k}) \in \{0,1\}$ as defined in Section~\ref{sec:prel}.

\item[join] Let $j_1, j_2$ be the children of $i$.  Then $X_i = X_{j_1} = X_{j_2}$.

For each $\bar U = (U_1, \ldots, U_l) \in \mathcal{AR}_i \cap X_i$
the algorithm considers each pair
$(\mathcal R_{j_1}, \mathcal R_{j_2}) \in S_{j_1}(\bar U) \times S_{j_2}(\bar U)$.
Let 
$$
\mathcal R_i = \begin{cases}
\true & \text{if } \mathcal R_{j_1} = \true \text{ or } \mathcal R_{j_2} = \true \text{ and} \\
\combine(\mathcal R_{j_1}, \mathcal R_{j_2}) & \text{otherwise.} \\
\end{cases}
$$
If there is $\mathcal R_i' \in S_i(\bar U_i)$ with $\mathcal R_i' \cong \mathcal R_i$,
then let $\mathcal R_i := \mathcal R_i'$ instead.
If now $\mathcal R_i \neq \false$,
the algorithm sets $S_i(\bar U_i) := S_i(\bar U_i) \cup \{\mathcal R_i\}$
and
$$
\val_i(\mathcal R_i) := \min\left\{\val_i(\mathcal R_i), \val_{j_1}(\mathcal R_{j_1})
	+ \val_{j_2}(\mathcal R_{j_2})\right\}.
$$
\end{description}

\paragraph{Phase~2}
Let $r = \root(\mathcal T)$ and
$$\bar U_r = (\emptyset, \ldots, \emptyset) \in \mathcal{AR}_r \cap X_r = \mathcal{AR}_r \cap \emptyset.$$
The algorithm starts with $\OPT := \infty$ and
considers each $\mathcal R_r \in S_r(\bar U_r)$.
If $\eval(\convert(\mathcal R_r)) = \true$, then the algorithm updates
$$\OPT := \min\{\OPT, \val_r(\mathcal R_r)\}.$$
Finally, the algorithm outputs $\OPT$.

\subsection{Proofs}

In order to show that the algorithm is correct and computes the optimal solution, we use induction
over the structure of the tree decomposition to show the following
invariant.

\begin{invariant}
\label{inv:alg}
After the algorithm has processed a node $i \in T$ in Phase~1,
for each $\bar U = (U_1, \ldots, U_l) \in \mathcal{AR}_i \cap X_i$
we have that
\begin{itemize}[(III)]
\item[(I)] for each $\ms A_i' \in \mathcal{EXP}_i(\bar U)$
	with $\mathcal R = \redgame(\ms A_i', X_i,
	\varphi) \neq \false$ there is exactly one $\mathcal R' \in S_i(\bar U)$
	with $\mathcal R' \cong \mathcal R$,
\item[(II)] for each game $\mathcal R \in S_i(\bar U)$
	we have $\mathcal R \neq \false$ and
	$\mathcal{RED}_i(\bar U, \mathcal R) \neq \emptyset$, and
\item[(III)] for each $\mathcal R \in S_i(\bar U)$ we have
	\begin{align*}
		\val_i(\mathcal R) = \min\Biggl\{
			\,\sum_{k=1}^{l} \alpha_k |R_k^{\ms A_i'} \setminus X_i|
 	\Biggm| {} & \ms A_i' \in \mathcal{EXP}_i(\bar U, \mathcal R) \wedge {} \\
		& \reduce(\ms A_i', X_i, \varphi) \neq \false
	\,\Biggr\}.
	\end{align*}
\end{itemize}
\end{invariant}

Here, (I) guarantees that $S_i$ is \emph{complete}, i.e., $S_i(\bar U)$
contains games for all feasible partial solutions, (II) guarantees that
all games in $S_i(\bar U)$ do, in fact, correspond to a reduced game over some
$\tau$-expansion of $\ms A_i$, and (III) guarantees that we also compute the correct
solution, i.e., $\val_i(\mathcal R)$ is optimal for $\mathcal{RED}_i(\bar
U, \mathcal R)$.  Note that the ``exactly one'' in (I) is required for
the claimed running time, but not for the correctness of the solution.

\begin{lemma}
\label{lem:alg-leaf}
Invariant~\ref{inv:alg} holds for leafs of the tree decomposition.
\end{lemma}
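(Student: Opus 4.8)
The plan is to observe that at a leaf $i$ the associated substructure is trivial, so that all three parts of Invariant~\ref{inv:alg} reduce to direct computations. First I would record the key structural fact: since a leaf has no nodes below it, the set $A_i$ of objects appearing at or below $i$ equals the bag, i.e.\ $A_i = X_i = \{x\}$. Consequently $\ms A_i = \ms A[X_i]$, and the induced structure $\ms A_i(\bar U) = (\ms A_i, U_1, \ldots, U_l)[X_i]$ already coincides with the full expansion $(\ms A_i, U_1, \ldots, U_l)$, because every $U_k \subseteq A_i = X_i$ satisfies $U_k \cap X_i = U_k$.

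Next I would analyse the relevant index sets. Because $A_i = X_i$, restriction to the bag is the identity, so $\mathcal{AR}_i \cap X_i = \mathcal{AR}_i$. More importantly, for any $\bar U \in \mathcal{AR}_i \cap X_i$ the matching set $\mathcal{EXP}_i(\bar U)$ is a \emph{singleton}: an expansion $\ms A_i'$ lies in $\mathcal{EXP}_i(\bar U)$ iff $R_j^{\ms A_i'} \cap X_i = U_j$ for all $j$, and since $R_j^{\ms A_i'} \subseteq A_i = X_i$ this forces $R_j^{\ms A_i'} = U_j$, i.e.\ $\ms A_i' = \ms A_i(\bar U)$. Thus the only reduced game attached to $\bar U$ is $\mathcal R(\bar U) = \redgame(\ms A_i(\bar U), X_i, \varphi)$, which is exactly the game the algorithm computes and stores in the leaf case.

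With these observations the three parts follow almost immediately. For (I): the unique feasible expansion is $\ms A_i' = \ms A_i(\bar U)$, and if $\redgame(\ms A_i', X_i, \varphi) \neq \false$ the algorithm sets $S_i(\bar U) = \{\mathcal R(\bar U)\}$, giving exactly one representative $\cong \mathcal R(\bar U)$, so completeness and the ``exactly one'' requirement hold trivially (and vacuously when the game reduces to $\false$). For (II): any game placed into $S_i(\bar U)$ equals $\mathcal R(\bar U) \neq \false$, and $\mathcal R(\bar U)$ itself witnesses $\mathcal{RED}_i(\bar U, \mathcal R(\bar U)) \neq \emptyset$. For (III): since $A_i = X_i$, every free relation satisfies $R_k^{\ms A_i'} \setminus X_i = \emptyset$, so the objective $\sum_{k=1}^l \alpha_k |R_k^{\ms A_i'} \setminus X_i|$ evaluates to $0$ for the unique feasible expansion; this matches the value $\val_i(\mathcal R(\bar U)) = 0$ assigned by the algorithm.

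I expect no genuine obstacle here, as this is merely the base case of the induction that drives Theorem~\ref{thm:main}; the only point demanding care is the bookkeeping between the several index sets ($\mathcal{AR}_i$, $\mathcal{EXP}_i(\bar U)$, $\mathcal{RED}_i(\bar U, \mathcal R)$) and confirming that the whole argument hinges solely on the identity $A_i = X_i$. In particular, the convention $X_i = \{x\}$ is immaterial: if leaves are instead taken to have empty bags, the identical reasoning applies to the unique empty expansion, again yielding value $0$.
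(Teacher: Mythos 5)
Your proof is correct and takes essentially the same approach as the paper's: both arguments rest on the observation that at a leaf $A_i = X_i$, so restriction to the bag is the identity, $\mathcal{EXP}_i(\bar U)$ collapses to the single expansion $\ms A_i(\bar U)$, and the objective value is $0$ because $R_k^{\ms A_i'} \setminus X_i = \emptyset$. The paper compresses this into one line (``(I) and (II) clearly hold''), whereas you make the singleton argument explicit, but the content is identical.
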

\begin{proof}
Let $i \in T$ be a leaf and $\bar U = (U_1, \ldots, U_l) \in
\mathcal{AR}_i \cap X_i = \mathcal{AR}_i$.  Since~$i$ is a leaf, we have
$$
\mathcal{RED}_i(\bar U) =
\{\,\ms A_i(\bar U) \mid \bar U \in \mathcal{AR}_i \cap X_i\,\},
$$
such that (I) and (II) clearly hold.  Furthermore, 
$\mathcal R_j^{\ms A_i(\bar U)} \setminus X_i = \emptyset$ for all $1 \le j \le l$,
since $A_i \setminus X_i = \emptyset$,
and therefore $\val_i(\mathcal R) = 0$ for all $\mathcal R \in \mathcal{RED}_i$.
\end{proof}

\begin{lemma}
\label{lem:alg-introduce}
Let $i \in T$ be an introduce node of the tree decomposition and $j \in T$
be the unique child of $i$.
If Invariant~\ref{inv:alg} holds for~$j$ before the algorithm processes~$i$,
then it also holds for~$i$.
\end{lemma}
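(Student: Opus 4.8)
The plan is to exploit the fact that an introduce node adds a single fresh object $x \notin A_j$ to the bag, so that every expansion of $\ms A_i$ decomposes cleanly. Writing $A_i = A_j \cup \{x\}$ and $X_i = X_j \cup \{x\}$, I would first establish the structural identity $\ms A_i' = \ms A_j' \cup \ms A_i(\bar U_i)$ for each $\ms A_i' \in \mathcal{EXP}_i(\bar U_i)$, where $\ms A_j' = \ms A_i'[A_j] \in \mathcal{EXP}_j(\bar U_j)$ and $\bar U_j = \bar U_i \cap X_j$. The two factors are compatible and their universes meet in exactly $A_j \cap X_i = X_j$. The only point requiring the tree-decomposition axioms is that no relation tuple of $\ms A_i$ straddles $x$ and $A_j \setminus X_j$: a tuple of $R^{\ms A_i}$ lies in a common bag $X_m$ by the second axiom; if that tuple contains $x$ then $X_m$ contains $x$, and since $x \notin A_j$ the only at-or-below-$i$ bag containing $x$ is $X_i$, while the connectivity axiom ($X_{i'} \cap X_m \subseteq X_i$ whenever $i$ separates $i'$ from $m$) forces each coordinate into $X_i$. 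Hence $R^{\ms A_i} = R^{\ms A_j} \cup R^{\ms A_i[X_i]}$, and since the free symbols $R_1, \ldots, R_l$ are unary the same splitting holds for them, giving the identity.

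The core of the argument is then an application of Lemma~\ref{lem:combine}. For a fixed $\bar U_i$ and a contributing game $\mathcal R_j \in S_j(\bar U_j)$ with $\mathcal R_j \notin \{\true,\false\}$, I would invoke that lemma on $\ms A_j'$ and $\ms A_i(\bar U_i)$ (whose universes meet in $X_j = A_j \cap X_i$) to conclude $\redgame(\ms A_i', X_i, \varphi) \cong \combine(\mathcal R_j, \mathcal R(\bar U_i)) = \mathcal R_i$. To apply the lemma I must know both factors are non-trivial; I would rule out $\mathcal R(\bar U_i) = \false$ and $\mathcal R_j = \false$ by combining Lemma~\ref{lem:eval-equals-reduce} with Corollary~\ref{cor:introduce-join} and Lemma~\ref{lem:emcg-eval-forget}, observing that a $\false$ factor would propagate to $\redgame(\ms A_i', X_i, \varphi) = \false$. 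For completeness (I) I take any $\ms A_i'$ with $\mathcal R := \redgame(\ms A_i', X_i, \varphi) \neq \false$, split it as above, use the same propagation to see the child game $\mathcal R_j$ is non-$\false$, apply part (I) of the induction hypothesis at $j$ to locate the unique representative of $\mathcal R_j$ in $S_j(\bar U_j)$, and then the combine identity produces a representative of $\mathcal R$ in $S_i(\bar U_i)$; the algorithm's ``reuse an equivalent $\mathcal R_i'$'' step guarantees \emph{exactly} one. Soundness (II) is the reverse: every $\mathcal R_i \in S_i(\bar U_i)$ comes from some $\mathcal R_j \in S_j(\bar U_j)$, part (II) at $j$ supplies an $\ms A_j'$ realizing $\mathcal R_j$, and extending it over $x$ according to $\bar U_i$ yields $\ms A_i' \in \mathcal{EXP}_i(\bar U_i)$ realizing $\mathcal R_i$ via the combine identity, so $\mathcal{RED}_i(\bar U_i, \mathcal R_i) \neq \emptyset$; $\mathcal R_i \neq \false$ holds since the algorithm inserts only non-$\false$ games.

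For the value statement (III) I would use $A_i \setminus X_i = A_j \setminus X_j$, immediate from $A_i = A_j \cup \{x\}$ and $X_i = X_j \cup \{x\}$. As the free symbols are unary and $\ms A_j' = \ms A_i'[A_j]$, this gives $R_k^{\ms A_i'} \setminus X_i = R_k^{\ms A_j'} \setminus X_j$ for every $k$, hence $\sum_k \alpha_k |R_k^{\ms A_i'} \setminus X_i| = \sum_k \alpha_k |R_k^{\ms A_j'} \setminus X_j|$: the objective is invariant under the bijection $\ms A_i' \leftrightarrow \ms A_j'$. The update $\val_i(\mathcal R_i) := \min\{\val_i(\mathcal R_i), \val_j(\mathcal R_j)\}$, taken over all contributing $\mathcal R_j$ (all with the single $\bar U_j = \bar U_i \cap X_j$), then realizes precisely the minimum demanded by (III) once part (III) of the induction hypothesis at $j$ is substituted.

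The main obstacle I expect lies in the degenerate ($\true$) cases rather than in the combine step. The algorithm special-cases $\mathcal R_j = \true$ by outputting $\true$, and I must verify this matches the semantics: a verifier win on $\ms A_j'$ lifts to $\ms A_i'$ by Lemma~\ref{lem:emcg-eval-introduce}/Corollary~\ref{cor:introduce-join}, so indeed $\redgame(\ms A_i', X_i, \varphi) = \true$. The subtler situation is the symmetric one, $\mathcal R(\bar U_i) = \true$ with $\mathcal R_j \notin \{\true,\false\}$, where $\combine$ receives a trivial second argument; here I would argue, again through Corollary~\ref{cor:introduce-join} and Lemma~\ref{lem:emcg-eval-forget}, that the substructure $\ms A_i(\bar U_i)$ over $X_i$ already forces $\redgame(\ms A_i', X_i, \varphi) = \true$, so that $\combine$ with a $\true$ factor must return $\true$. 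Reconciling these degenerate cases with the algorithm's case distinction, while preserving the ``exactly one representative'' accounting of (I) and the value-invariance of (III), is where the proof requires the most care.
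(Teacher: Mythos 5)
Your proposal is correct and follows essentially the same route as the paper's proof: split each expansion as $\ms A_i' = \ms A_i'[A_j] \cup \ms A_i(\bar U_i)$, propagate non-$\false$ downward to the child via Lemma~\ref{lem:emcg-eval-introduce} and Lemma~\ref{lem:eval-equals-reduce}, invoke Lemma~\ref{lem:combine} for the combine identity to establish (I) and (II), and obtain (III) by transferring optimality through the value-preserving correspondence $\ms A_i' \leftrightarrow \ms A_i'[A_j]$ (which the paper phrases as a proof by contradiction rather than as your explicit bijection, using the same identity $A_i \setminus X_i = A_j \setminus X_j$). If anything, you are more careful than the paper in two places it leaves implicit: the tree-decomposition argument that no relation tuple straddles $x$ and $A_j \setminus X_j$ (needed for the union decomposition), and the degenerate case $\mathcal R(\bar U_i) \in \{\true, \false\}$, where the paper applies Lemma~\ref{lem:combine} without checking its hypothesis that both factors be nontrivial.
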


\begin{proof}
Let $X_i = X_j \cup \{x\}$, where $x \notin A_j$.
Let $\bar U_i = (U_{i,1}, \ldots, U_{i,l}) \in \mathcal{AR}_i \cap X_i$ and
$\bar U_j = (U_{j,1}, \ldots, U_{j,l}) \in \mathcal{AR}_j \cap X_j$ with
$(U_{j,1}, \ldots, U_{j,l}) = (U_{i,1} \cap X_j, \ldots, U_{i,l} \cap X_j)$.

Consider $\ms A_i' \in \mathcal{EXP}_i(\bar U_i)$ and let $\ms A_j' = \ms A_i'[A_j]$.
If $\mathcal R_i = \redgame(\ms A_i', X_i, \varphi) \neq \false$, then
also $\mathcal R_j = \redgame(\ms A_j', X_j, \varphi) \neq \false$ by
Lemma~\ref{lem:emcg-eval-introduce} and Lemma~\ref{lem:eval-equals-reduce}.
By Invariant~\ref{inv:alg}, $S_j(\bar U_j)$ therefore contains
exactly one game $\mathcal R_j'$ with $\mathcal R_j' \cong \mathcal R_j$.
If $\mathcal R_j' = \true$, then $\mathcal R_i = \true$ by
Lemma~\ref{lem:emcg-eval-introduce} and Lemma~\ref{lem:eval-equals-reduce}.
Otherwise, the algorithm computes
$\mathcal R_i' = \combine(\mathcal R_j', \mathcal R(\bar U_i))$.
By Lemma~\ref{lem:combine}, $\mathcal R_i' \cong \mathcal R_i$, which implies part~(I)
of the invariant.

Conversely, consider $\mathcal R_i \in S_i(\bar U_i)$. 
Then either $\mathcal R_i = \true$ and there is $\mathcal R_j \in S_j(\bar U_j)$
with $\mathcal R_j = \true$, or there is $\mathcal R_j \in S_j(\bar U_j)$ with
$\mathcal R_i \cong \combine(\mathcal R_j, \mathcal R_i(\bar U_i))$.
By the invariant for~$j$, $\mathcal{RED}_j(\mathcal R_j) \neq \emptyset$.
From this we get there is $\mathcal R_j'  \in \mathcal{RED}_j(\bar U_j, \mathcal R_j)$ such that
$\mathcal R_j' \cong \mathcal R_j$ and $\mathcal R_j' = \redgame(\ms A_j', X_j, \varphi)$
for some $\ms A_j' \in \mathcal{EXP}_j(\bar U_j)$.
Let $\ms A_i' \in \mathcal{EXP}_i(\bar U_i)$, chosen
in a way such that $(R_1^{\ms A_j'}, \ldots, R_l^{\ms A_j'}) = (R_1^{\ms A_i'} \cap A_j, \ldots, R_l^{\ms A_i'} \cap A_j)$.

If $\mathcal R_j = \true$, then, by Lemma~\ref{lem:emcg-eval-introduce} and
Lemma~\ref{lem:eval-equals-reduce}, $\redgame(\ms A_i', X_i, \varphi) = \true \in S_i(\bar U_i)$.
Otherwise, $\redgame(\ms A_i, X_i, \varphi) \cong \combine(\mathcal R_j,
\mathcal R(\bar U_i))$ by Lemma~\ref{lem:combine}.  Either case implies~(II).

Finally, let $\mathcal R_i \in S_i(\bar U_i)$ and 
$\ms O_i \in \mathcal{EXP}_i(\bar U_i, \mathcal R_i)$ with
$\reduce(\ms O_i, X_i, \varphi) \neq \false$ and
\begin{multline*}
\sum_{k=1}^{l} \alpha_k |R_k^{\ms O_i} \setminus X_i|
= \min\Biggl\{\,\sum_{k=1}^{l} \alpha_k |R_k^{\ms A_i'} \setminus X_i|
 \Biggm| \\
 \ms A_i' \in \mathcal{EXP}_i(\bar U_i, \mathcal R_i)
 \wedge \reduce(\ms A_i', X_i, \varphi) \neq \false
 \,\Biggr\}.
\end{multline*}
Let $\ms O_j = \ms O_i[A_j]$.  By Lemmas \ref{lem:emcg-eval-introduce}
and~\ref{lem:eval-equals-reduce},
$\mathcal R_j = \redgame(\ms O_j, X_j, \varphi) \neq \false$.
Therefore, either $\mathcal R_j = \redgame(\ms O_j, X_j, \varphi) = \mathcal R_i = \true$,
or otherwise 
$\combine(\mathcal R_j, \mathcal R(\bar U_i)) \cong \redgame(\ms O_j \cup \ms A_i(\bar U_i),
 X_i, \varphi) \cong \mathcal R_i$ by Lemma~\ref{lem:combine}.

We need that $\ms O_j$ is optimal for $\mathcal{EXP}_j(\bar U_j, \mathcal R_j)$.
To this end, assume there was $\ms A_j' \in \mathcal{EXP}_j(\bar U_j, \mathcal R_j)$
with $\mathcal R_j' = \redgame(\ms A_j', X_j, \varphi)$, such that either $\mathcal R_j' = \true$ or
$\mathcal R_i \cong \combine(\mathcal R_j', \ms A_i(\bar U_i))$, and furthermore
$$
\sum_{k=1}^{l} \alpha_k |R_k^{\ms O_j} \setminus X_j|
> 
\sum_{k=1}^{l} \alpha_k |R_k^{\ms A_j'} \setminus X_j|.
$$
Since, $\mathcal R_j' \cong \mathcal R_j$, we have, by Lemma~\ref{lem:combine},
$$
\mathcal R_i' \cong \begin{cases}
\true & \text{if } \mathcal R_j = \true \text{ and}\\
\combine(\mathcal R_j, \mathcal R(\bar U_i)) & \text{otherwise}, \\
\end{cases}
$$
where $\mathcal R_i' = \redgame(\ms A_i', X_i, \varphi)$ and
$\ms A_i' = \ms A_j' \cup \ms A_i(\bar U_i)$.
Therefore, 
$$
\sum_{k=1}^{l} \alpha_k |R_k^{\ms O_i} \setminus X_i|
> 
\sum_{k=1}^{l} \alpha_k |R_k^{\ms A_i} \setminus X_i|,
$$
a contradiction to the minimality of $\ms O_i$.
We conclude that $\ms O_j$ is optimal for $\mathcal{EXP}_j(\bar U_j, \mathcal R_j)$.
From this we get that
$$
\val_j(\mathcal R_j) = \sum_{k=1}^{l} \alpha_k |R_k^{\ms O_j} \setminus X_j|
$$
by the invariant for~$j$, which implies~(III).
\end{proof}


\begin{lemma}
\label{lem:alg-forget}
Let $i \in T$ be a forget node of the tree decomposition and $j \in T$
be the unique child of $i$.
If Invariant~\ref{inv:alg} holds for~$j$ before the algorithm processes~$i$,
then it also holds for~$i$.
\end{lemma}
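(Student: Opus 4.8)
The plan is to mirror the proof of Lemma~\ref{lem:alg-introduce}, exploiting the fact that a \textbf{forget} node leaves the underlying structure untouched. First I would record the setup: since $i$ is a forget node with child $j$ and $X_i = X_j \setminus \{x\}$, every object of $X_j$ already lies at or below $j$, so $A_i = A_j$ and hence $\ms A_i = \ms A_j$. Consequently each expansion in $\mathcal{EXP}_i$ is literally an expansion in $\mathcal{EXP}_j$ of the same structure, and the two families differ only in the index by which they are grouped: writing $\bar U_j := (R_1^{\ms A_i'} \cap X_j, \ldots, R_l^{\ms A_i'} \cap X_j)$ for $\ms A_i' \in \mathcal{EXP}_i(\bar U_i)$ gives $\bar U_j \cap X_i = \bar U_i$, so
$$\mathcal{EXP}_i(\bar U_i) = \biguplus_{\bar U_j \,:\, \bar U_j \cap X_i = \bar U_i} \mathcal{EXP}_j(\bar U_j).$$
I would also note at the outset that the move relation and player assignment of $\emcg$ do not depend on the bag $X$ (as the proof of Lemma~\ref{lem:emcg-eval-forget} makes explicit, $X$ enters only through position labels), so for $\ms A_j' = \ms A_i'$ the games $\emcg(\ms A_j', X_j, \varphi)$ and $\emcg(\ms A_i', X_i, \varphi)$ have the same truth value; combined with Lemma~\ref{lem:eval-equals-reduce} this means $\redgame(\ms A_j', X_j, \varphi) = \true$ (resp.\ $\false$) iff $\redgame(\ms A_i', X_i, \varphi) = \true$ (resp.\ $\false$).

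For part~(I) I would take $\ms A_i' \in \mathcal{EXP}_i(\bar U_i)$ with $\mathcal R_i = \redgame(\ms A_i', X_i, \varphi) \neq \false$, pass to its index $\bar U_j$, and view it as $\ms A_j' \in \mathcal{EXP}_j(\bar U_j)$. By the bag-independence just noted, $\mathcal R_j := \redgame(\ms A_j', X_j, \varphi) \neq \false$, so Invariant~\ref{inv:alg}(I) for $j$ yields exactly one $\mathcal R_j' \in S_j(\bar U_j)$ with $\mathcal R_j' \cong \mathcal R_j$. If $\mathcal R_j' = \true$, then $\mathcal R_i = \true$ and the algorithm inserts $\true$; otherwise $\mathcal R_j' \notin \{\true,\false\}$ and Lemma~\ref{lem:forget} gives $\forget(\mathcal R_j', x) \cong \redgame(\ms A_j', X_i, \varphi) = \mathcal R_i$, so a game equivalent to $\mathcal R_i$ is inserted. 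The explicit duplicate check in the \textbf{forget} branch keeps a single representative, giving the ``exactly one'' clause. Part~(II) runs in the converse direction: each $\mathcal R_i \in S_i(\bar U_i)$ comes from some $\mathcal R_j \in S_j(\bar U_j)$ with $\bar U_j \cap X_i = \bar U_i$, and Invariant~\ref{inv:alg}(II) for $j$ supplies a witness $\ms A_j' \in \mathcal{EXP}_j(\bar U_j)$, which, regarded in $\mathcal{EXP}_i(\bar U_i)$, witnesses $\mathcal{RED}_i(\bar U_i, \mathcal R_i) \neq \emptyset$ via Lemma~\ref{lem:forget} (or via bag-independence when $\mathcal R_j = \true$).

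The main work, and the step I expect to be the real obstacle, is part~(III). Its engine is the combinatorial identity that for $\ms A_i' = \ms A_j'$ with $X_i = X_j \setminus \{x\}$ the only newly forgotten object is $x$, so
$$|R_k^{\ms A_i'} \setminus X_i| = |R_k^{\ms A_j'} \setminus X_j| + (x \in U_{j,k}),$$
using $(x \in R_k^{\ms A_j'}) = (x \in U_{j,k})$ because $x \in X_j$. (This is exactly why the value update must attach the offset $\sum_k \alpha_k (x \in U_{j,k})$, indexed by the child's tuple, rather than the identically-zero $(x \in U_{i,k})$.) I would then partition $\mathcal{EXP}_i(\bar U_i, \mathcal R_i)$ first by the induced $\bar U_j$ and, within each, by the class $\mathcal R_j = \redgame(\ms A_j', X_j, \varphi)$. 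Bag-independence together with Lemma~\ref{lem:forget} shows the $\ms A_j'$ contributing to $\mathcal R_i$ are precisely those whose $\mathcal R_j$ satisfies $\forget(\mathcal R_j, x) \cong \mathcal R_i$ (or $\mathcal R_j = \mathcal R_i = \true$), while the classes with $\mathcal R_j = \false$ yield $\redgame(\ms A_j', X_i, \varphi) = \false$ and are excluded on both sides. Applying Invariant~\ref{inv:alg}(III) for $j$ to the innermost minimum collapses the nested minimum to
$$\min_{\bar U_j,\,\mathcal R_j}\Bigl(\val_j(\mathcal R_j) + \sum_{k=1}^{l} \alpha_k (x \in U_{j,k})\Bigr),$$
which is exactly what the \textbf{forget} branch computes for $\val_i(\mathcal R_i)$. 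The delicate points I would check carefully are that this partition is genuinely disjoint and exhaustive, that ranging over $S_j(\bar U_j)$ realizes each equivalence class exactly once (by Invariant~\ref{inv:alg}(I) for $j$), and that the constant offset is correctly pinned to each $\bar U_j$; once these are verified, assembling parts (I)--(III) establishes the invariant for~$i$.
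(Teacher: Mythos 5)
Your proof is correct and takes essentially the same route as the paper's: both exploit $\ms A_i = \ms A_j$ to identify parent and child expansions, use Lemma~\ref{lem:forget} together with Lemmas~\ref{lem:emcg-eval-forget} and~\ref{lem:eval-equals-reduce} for parts (I) and (II), and rest part (III) on the identity $|R_k^{\ms A'} \setminus X_i| = |R_k^{\ms A'} \setminus X_j| + (x \in U_{j,k})$ --- the only difference being that the paper phrases (III) as an exchange argument (the restriction of an optimum at $i$ is optimal at $j$, else a better child solution would lift), whereas you partition the feasible set and collapse the nested minima, which is an equivalent bookkeeping. Your side remark that the value update must use the child-indexed offset $\sum_k \alpha_k (x \in U_{j,k})$ rather than the identically-zero $(x \in U_{i,k})$ matches the paper's own proof and its trailing text, so you have correctly identified the displayed $U_{i,k}$ in the algorithm as a typo.
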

\begin{proof}
Let $j$ be the unique child of $i$ and $X_i \cup \{x\} = X_j$ for $x \notin X_i$.
Note that $\ms A_i = \ms A_j$.
Let $\bar U_i = (U_{i,1}, \ldots, U_{i,l}) \in \mathcal{AR}_i \cap X_i$.

Consider $\ms A_i' \in \mathcal{EXP}_i(\bar U_i)$ with
$\mathcal R_i = \redgame(\ms A_i', X_i, \varphi) \neq \false$ and let
$\bar U_j = (U_{j,1}, \ldots, U_{j,l}) = (R_1^{\ms A_i'}, \ldots, R_l^{\ms A_i'}) \cap X_j
\in \mathcal{AR}_j \cap X_j$ and $\mathcal R_j = \redgame(\ms A_i', X_j, \varphi)$.
Then, by Lemma~\ref{lem:emcg-eval-introduce} and Lemma~\ref{lem:eval-equals-reduce},
$\mathcal R_j \neq \false$.
Therefore, by the invariant for~$j$, there is $\mathcal R_j' \in S_j(\bar U_j)$ with
$\mathcal R_j' \cong \mathcal R_j$.
If $\mathcal R_j' = \mathcal R_j = \true$, then, by Lemma~\ref{lem:emcg-eval-forget},
also $\mathcal R_i = \true$.  Otherwise, the algorithm computes
$\mathcal R_i' = \forget(\mathcal R_j', x) \cong \mathcal R_j$.  Either case implies~(I).

Conversely, consider $\mathcal R_i \in S_i(\bar U_i)$. 
Then either $\mathcal R_i = \true$ and there is $\bar U_j \in \mathcal{AR}_j \cap X_j$ and
$\mathcal R_j \in S_j(\bar U_j)$ with $\mathcal R_j = \true$ and $\bar U_i = \bar U_j \cap X_i$,
or
there is $\bar U_j \in \mathcal{AR}_j \cap X_j$ and $\mathcal R_j \in S_j(\bar U_j)$, such that
$\bar U_i = \bar U_j \cap X_i$ and
$\mathcal R_i \cong \forget(\mathcal R_j, x)$.
By the invariant for~$j$, in either case $\mathcal{RED}_j(\mathcal R_j) \neq \emptyset$.
Therefore, there is $\mathcal R_j'  \in \mathcal{RED}_j(\bar U_j, \mathcal R)$, where
$\mathcal R_j' \cong \mathcal R_j$ and $\mathcal R_j' = \redgame(\ms A_j', X_j, \varphi)$,
for some $\ms A_j' \in \mathcal{EXP}_j(\bar U_j)$.

Let $\ms A_i' = \ms A_j'$.  If $\mathcal R_j = \true$, then, by Lemmas
\ref{lem:emcg-eval-forget} and~\ref{lem:eval-equals-reduce},
$\redgame(\ms A_i', X_i, \varphi) = \true \in S_i(\bar U_i)$.
Otherwise, $\redgame(\ms A_i', X_i, \varphi) \cong \forget(\mathcal R_j, x) \cong \mathcal R_i$
according to
Lemma~\ref{lem:forget}.  Either case implies~(II).

Finally, consider $\mathcal R_i \in S_i(\bar U_i)$ and 
let $\ms O_i \in \mathcal{EXP}_i(\bar U_i, \mathcal R_i)$ such that
\begin{multline*}
\sum_{k=1}^{l} \alpha_k |R_k^{\ms O_i} \setminus X_i|
= 
\min\Biggl
\{\,\sum_{k=1}^{l} \alpha_k |R_k^{\ms A_i'} \setminus X_i|
 \Biggm|
\\ 
 \ms A_i' \in \mathcal{EXP}_i(\bar U_i, \mathcal R_i) \wedge 
 \reduce(\ms A_i', X_i, \varphi) \neq \false \,\Biggr\}
\end{multline*}
and $\reduce(\ms O_i, X_i, \varphi) \neq \false$.
Let $\ms O_j = \ms O_i$. Then, by Lemmas~\ref{lem:emcg-eval-forget}
and~\ref{lem:eval-equals-reduce}, $\mathcal R_j = \redgame(\ms O_j, X_j, \varphi) \neq \false$.
By (II), there is $\mathcal R_j' \in S_j(\bar U_j)$ with
$\mathcal R_j' \cong \mathcal R_j$, where
$\bar U_j = (R_1^{\ms O_j} \cap X_j, \ldots, R_l^{\ms O_j} \cap X_j)$.
Analogue to the previous case, we obtain that $\ms O_j$ is optimal in
$\mathcal{RED}_i(\bar U_j, \mathcal R_j)$.  Therefore, by the induction hypothesis,
$$
\val_j(\mathcal R_j) = \sum_{k=1}^{l} \alpha_k |R_k^{\ms O_j} \setminus X_j|
= \sum_{k=1}^{l} \alpha_k |R_k^{\ms O_j} \setminus X_i| - \sum_{k=1}^{l} \alpha_k (x \in R_k^{\ms O_i} \setminus X_i),
$$
which implies~(III).

\end{proof}


\begin{lemma}
\label{lem:alg-join}
Let $i \in T$ be a join node of the tree decomposition 
with children~$j_1, j_2 \in T$.
If Invariant~\ref{inv:alg} holds for~$j_1$ and $j_2$ before the algorithm processes~$i$,
then it also holds for~$i$.
\end{lemma}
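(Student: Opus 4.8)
The plan is to prove Lemma~\ref{lem:alg-join} by mirroring the structure of the introduce case (Lemma~\ref{lem:alg-introduce}), replacing the introduce lemma (Lemma~\ref{lem:emcg-eval-introduce}) by the join/union lemma (Lemma~\ref{lem:emcg-eval-union}) and exploiting that at a join node the two children's universes overlap exactly in the bag. Concretely, I would first record the structural facts: since $i$ is a join node with children $j_1, j_2$, we have $X_i = X_{j_1} = X_{j_2}$, the induced substructures satisfy $\ms A_i = \ms A_{j_1} \cup \ms A_{j_2}$, and $A_{j_1} \cap A_{j_2} = X_i$ by the connectivity axiom of tree decompositions. Fix $\bar U = (U_1, \ldots, U_l) \in \mathcal{AR}_i \cap X_i$. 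The key bookkeeping observation is that every expansion $\ms A_i' \in \mathcal{EXP}_i(\bar U)$ splits as $\ms A_i' = \ms A_{j_1}' \cup \ms A_{j_2}'$ with $\ms A_{j_s}' := \ms A_i'[A_{j_s}]$; these restrictions are compatible, satisfy $A_{j_1}' \cap A_{j_2}' = X_i$, and each lies in $\mathcal{EXP}_{j_s}(\bar U)$ because $R_k^{\ms A_{j_s}'} \cap X_i = R_k^{\ms A_i'} \cap X_i = U_k$. Conversely, any two compatible expansions from $\mathcal{EXP}_{j_1}(\bar U)$ and $\mathcal{EXP}_{j_2}(\bar U)$ recombine by union into an element of $\mathcal{EXP}_i(\bar U)$. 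This two-way correspondence is exactly what Lemma~\ref{lem:combine} is tailored to, since its hypothesis $A_1 \cap A_2 = X_1 \cap X_2$ is met here.

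For parts~(I) and~(II) I would argue as in the introduce case. Given $\ms A_i' \in \mathcal{EXP}_i(\bar U)$ with $\mathcal R_i = \redgame(\ms A_i', X_i, \varphi) \neq \false$, the contrapositive of Lemma~\ref{lem:emcg-eval-union} together with Lemma~\ref{lem:eval-equals-reduce} shows that neither restriction can reduce to $\false$, so $\mathcal R_{j_s} = \redgame(\ms A_{j_s}', X_i, \varphi) \neq \false$; by the invariant for $j_s$ each has a unique representative in $S_{j_s}(\bar U)$. If one representative is $\true$, then Lemma~\ref{lem:emcg-eval-union} forces $\mathcal R_i = \true$, matching the algorithm's $\true$-branch; otherwise Lemma~\ref{lem:combine} gives $\combine(\mathcal R_{j_1}, \mathcal R_{j_2}) \cong \redgame(\ms A_{j_1}' \cup \ms A_{j_2}', X_i, \varphi) = \mathcal R_i$, and the algorithm's equivalence check keeps exactly one such game, yielding~(I). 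Soundness~(II) is the converse reading: a game $\mathcal R_i \in S_i(\bar U)$ arises from a pair $(\mathcal R_{j_1}, \mathcal R_{j_2})$, each witnessed by some $\ms A_{j_s}' \in \mathcal{EXP}_{j_s}(\bar U)$ via the children's invariant; these are compatible (their free relations agree on $X_i = A_{j_1} \cap A_{j_2}$, and Lemma~\ref{lem:matches-compatible} handles the nullary symbols), so their union $\ms A_i'$ lies in $\mathcal{EXP}_i(\bar U)$ and Lemma~\ref{lem:combine} (or, in the $\true$-branch, Lemma~\ref{lem:emcg-eval-union}) shows $\redgame(\ms A_i', X_i, \varphi) \cong \mathcal R_i$, so $\mathcal{RED}_i(\bar U, \mathcal R_i) \neq \emptyset$ and $\mathcal R_i \neq \false$.

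For the value bookkeeping~(III), the essential new point is that the objective decomposes additively across the join. Because $A_{j_1} \cap A_{j_2} = X_i$, for each unary $R_k$ the sets $R_k^{\ms A_{j_1}'} \setminus X_i$ and $R_k^{\ms A_{j_2}'} \setminus X_i$ are disjoint with union $R_k^{\ms A_i'} \setminus X_i$, whence $\sum_k \alpha_k |R_k^{\ms A_i'} \setminus X_i| = \sum_k \alpha_k |R_k^{\ms A_{j_1}'} \setminus X_i| + \sum_k \alpha_k |R_k^{\ms A_{j_2}'} \setminus X_i|$. I would then show that minimizing the left-hand side over $\mathcal{EXP}_i(\bar U, \mathcal R_i)$ is equivalent to independently minimizing each summand over the corresponding children's classes and summing, ranging over all pairs $(\mathcal R_{j_1}, \mathcal R_{j_2})$ with $\combine(\mathcal R_{j_1}, \mathcal R_{j_2}) \cong \mathcal R_i$. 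Taking an optimal $\ms O_i$ and splitting it shows each half must be optimal for its class, by an exchange argument: improving one half and recombining via Lemma~\ref{lem:combine} would yield a cheaper $\ms A_i'$ in the same class, contradicting optimality. Hence $\val_{j_1}(\mathcal R_{j_1}) + \val_{j_2}(\mathcal R_{j_2})$ attains the optimum, and since the algorithm takes the minimum of this sum over all feasible pairs, (III) follows from the invariant for $j_1$ and $j_2$.

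I expect part~(III) to be the main obstacle: one must verify that the decomposition into independently optimizable halves respects the equivalence classes, i.e., that every pair of children's classes combining to $\mathcal R_i$ is actually examined by the algorithm and that no cross-term coupling is hidden in the reduction step. The additive split of $|R_k^{\ms A_i'} \setminus X_i|$ is clean only because $A_{j_1} \cap A_{j_2} = X_i$, and the exchange argument then relies on Lemma~\ref{lem:combine} to guarantee that swapping in an optimal half keeps the combined reduced game in the class $\mathcal R_i$ — which is precisely the compatibility-preservation already exploited in parts~(I) and~(II).
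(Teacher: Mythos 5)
Your proposal is correct and follows essentially the same route as the paper's proof: parts~(I) and~(II) via Lemma~\ref{lem:emcg-eval-union} together with Lemma~\ref{lem:eval-equals-reduce} and the combine correspondence of Lemma~\ref{lem:combine}, and part~(III) via splitting an optimal expansion $\ms O_i$ at the bag and an exchange argument that recombines an improved half through Lemma~\ref{lem:combine} to contradict minimality. Your only addition is making the additive decomposition $\sum_k \alpha_k |R_k^{\ms A_i'} \setminus X_i| = \sum_k \alpha_k |R_k^{\ms A_{j_1}'} \setminus X_i| + \sum_k \alpha_k |R_k^{\ms A_{j_2}'} \setminus X_i|$ explicit, which the paper uses implicitly in its final inequality.
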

\begin{proof}
Note that $X_i = X_{j_1} = X_{j_2}$.
Let $\bar U = (U_1, \ldots, U_l) \in \mathcal{AR}_i \cap X_i = \mathcal{AR}_{j_1} \cap X_{j_1} = 
\mathcal{AR}_{j_2} \cap X_{j_2}$.

Consider $\ms A_i' \in \mathcal{EXP}_i(\bar U)$ and let, for $j \in \{j_1, j_2\}$, 
be $\ms A_j' = \ms A_i'[A_j]$.
If $\mathcal R_i = \redgame(\ms A_i', X_i, \varphi) \neq \false$, then, for $j \in \{j_1, j_2\}$,
also $\mathcal R_j = \redgame(\ms A_j', X_j, \varphi) \neq \false$ by
Lemma~\ref{lem:emcg-eval-union} and Lemma~\ref{lem:eval-equals-reduce}.
By the invariant for $j \in \{j_1, j_2\}$, $S_j(\bar U)$ therefore
contains exactly one~$\mathcal R_j'$ with $\mathcal R_j' \cong \mathcal R_j$.  
If $\mathcal R_j' = \true$, then $\mathcal R_i = \true$ by
Lemma~\ref{lem:emcg-eval-union} and Lemma~\ref{lem:eval-equals-reduce}.
Otherwise, the algorithm computes
$\mathcal R_i' = \combine(\mathcal R_{j_1}, \mathcal R_{j_2})$.
By Lemma~\ref{lem:combine}, $\mathcal R_i' \cong \mathcal R_i$, which implies~(I).

Conversely, consider $\mathcal R_i \in S_i(\bar U)$. 
Then either $\mathcal R_i = \true$ and there is $j \in \{j_1, j_2\}$ and
$\mathcal R_j \in S_j(\bar U)$ with $\mathcal R_j = \true$, or
there is $(\mathcal R_{j_1}, \mathcal R_{j_2}) \in S_{j_1}(\bar U) \times S_{j_2}(\bar U)$,
such that $\mathcal R_i \cong \combine(\mathcal R_{j_1}, \mathcal R_{j_2})$.
By the invariant for $j \in \{j_1, j_2\}$, we have
$\mathcal{RED}_j(\mathcal R_j) \neq \emptyset$, and therefore there is
$R_j'  \in \mathcal{RED}_j(\bar U, \mathcal R_j)$ with $\mathcal R_j' \cong \mathcal R_j$
and $\mathcal R_j' = \redgame(\ms A_j', X_j, \varphi)$,
where $\ms A_j' = (\ms A_j, R_1^{\ms A_j}, \ldots, R_l^{\ms A_j}) \in \mathcal{EXP}_j(\bar U)$.
Let $\ms A_i' = (\ms A_i, R_1^{\ms A_i}, \ldots, R_l^{\ms A_i}) \in \mathcal{EXP}_i(\bar U)$, such that
$R_k^{\ms A_i} = R_k^{\ms A_{j_1}} \cup R_k^{\ms A_{j_2}}$ for all $1 \le k \le l$.

If $\true \in \{\mathcal R_{j_1}, \mathcal R_{j_2}\}$, then, by Lemmas
\ref{lem:emcg-eval-union} and~\ref{lem:eval-equals-reduce},
$\redgame(\ms A_i', X_i, \varphi) = \true \in S_i(\bar U)$.
Otherwise, $\redgame(\ms A_i, X_i, \varphi) \cong \combine(\mathcal R_{j_1},
\mathcal R_{j_2})$ by Lemma~\ref{lem:combine}.  Either case implies~(II).

Now consider $\mathcal R_i \in S_i(\bar U)$ and
$\ms O_i \in \mathcal{EXP}_i(\bar U, \mathcal R_i)$ with $\reduce(\ms O_i, X_i, \varphi) \neq \false$ and
\begin{multline*}
\sum_{k=1}^{l} \alpha_k |R_k^{\ms O_i} \setminus X_i|
= \min\Biggl\{\,\sum_{k=1}^{l} \alpha_k |R_k^{\ms A_i'} \setminus X_i|
 \Biggm| \\
 \ms A_i' \in \mathcal{EXP}_i(\bar U, \mathcal R_i)
 \wedge \reduce(\ms A_i', X_i, \varphi) \neq \false
 \,\Biggr\}.
\end{multline*}
Let, for $j \in \{j_1, j_2\}$, $\ms O_j = \ms O_i[A_j]$. Then, by Lemma~\ref{lem:emcg-eval-union} and
Lemma~\ref{lem:eval-equals-reduce}, $\mathcal R_j = \redgame(\ms O_j, X_j, \varphi) \neq \false$.
Therefore, either $\mathcal R_j = \redgame(\ms O_j, X_j, \varphi) = \mathcal R_i = \true$ for
some $j \in \{j_1, j_2\}$, or $\combine(\mathcal R_{j_1}, \mathcal R_{j_2}) \cong
\redgame(\ms O_{j_1} \cup \ms O_{j_2}, X_i, \varphi) \cong \mathcal R_i$ by Lemma~\ref{lem:combine}.

Assume there were $j \in \{j_1, j_2\}$, say $j = j_1$,
and $\ms A_j' \in \mathcal{EXP}_j(\bar U, \mathcal R_j)$
with $\mathcal R_j' = \redgame(\ms A_j', X_j, \varphi)$, such that
$$
\sum_{k=1}^{l} \alpha_k |R_k^{\ms O_j} \setminus X_j|
> 
\sum_{k=1}^{l} \alpha_k |R_k^{\ms A_j'} \setminus X_j|
$$
and either
$\mathcal R_j' = \true$ or
$\mathcal R_i \cong \combine(\mathcal R_j', \mathcal R_{j_2}')$ for some
$\mathcal R_{j_2}' \in \mathcal{RED}_{j_2}(\bar U, \mathcal R_{j_2})$.
Since $A_{j_1} \cap A_{j_2} = X_i$, structures
$(\ms A_{j_1}', \ms A_{j_2}') \in
\mathcal{RED}_{j_1}(\bar U) \times \mathcal{RED}_{j_2}(\bar U)$
are compatible.
By the invariant, part~(II), we have $\mathcal R_{j_2}' \cong \redgame(\ms A_{j_2}', X_{j_2}, \varphi)$ for some
$\ms A_{j_2}' \in \mathcal{EXP}_{j_2}(\bar U)$.
Without loss of generality, we assume $\ms A_{j_2}' = \ms O_{j_2}$, since each $\ms A_{j_2}'$ with
$$
\sum_{k=1}^{l} \alpha_k |R_k^{\ms O_{j_2}} \setminus X_j|
\ge
\sum_{k=1}^{l} \alpha_k |R_k^{\ms A_{j_2}'} \setminus X_j|
$$
yields the same contradiction.
Therefore, since $\mathcal R_j' \cong \mathcal R_j$, we have
$$
\mathcal R_i' \cong \begin{cases}
\true & \text{if } \mathcal R_j' = \true \text{ or } \mathcal R_{j_2} = \true \text{ and}\\
\combine(\mathcal R_j', \mathcal R_{j_2}) & \text{otherwise}, \\
\end{cases}
$$
by Lemma~\ref{lem:combine},
where $\mathcal R_i' = \redgame(\ms A_j' \cup \ms O_{j_2}, X_i, \varphi)$.
Therefore, 
$$
\sum_{k=1}^{l} \alpha_k |R_k^{\ms O_i} \setminus X_i|
> 
\sum_{k=1}^{l} \alpha_k |R_k^{\ms A_j'} \setminus X_i|
+ \sum_{k=1}^{l} \alpha_k |R_k^{\ms O_{j_2}} \setminus X_i|
$$
a contradiction to the minimality of $\ms O_i$.
Therefore, for $j \in \{j_1, j_2\}$, $\ms O_j$ is optimal in
$\mathcal{EXP}_i(\bar U, \mathcal R_j)$, and
$$
\val_j(\mathcal R_j) = \sum_{k=1}^{l} \alpha_k |R_k^{\ms O_j} \setminus X_j|
$$
by the invariant for~$j$.  By (II), there is
$\mathcal R_j' \in S_j(\bar U)$ with $\mathcal R_j' \cong \mathcal R_j$,
which then implies~(III).
\end{proof}

\begin{lemma}
\label{lem:alg-root}
Let $r = \root(\mathcal T)$ be the root of the tree decomposition, where
$X_r = \emptyset$, and let Invariant~\ref{inv:alg} hold for~$r$.
Let $\bar U = (\emptyset, \ldots, \emptyset)$ and
$$\OPT = \linmsoopt$$ be an optimal solution for the \linmso-problem.
Then 
$$
\OPT = \min\{\,\val_r(\mathcal R) \mid \mathcal R \in S_r(\bar U) \wedge \eval(\convert(\mathcal R)) = \true\,\}.
$$
\end{lemma}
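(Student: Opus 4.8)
The plan is to identify, at the root, the abstract minimization defining $\OPT$ with the concrete minimization over the table $S_r(\bar U)$, by routing satisfaction of $\varphi$ through $\convert$ and then grouping expansions by their reduced game. First I would record the situation at $r$. Since every object of $\ms A$ appears at or below $r$, we have $\ms A_r = \ms A$; since $X_r = \emptyset$, the only tuple in $\mathcal{AR}_r \cap X_r$ is $\bar U = (\emptyset,\ldots,\emptyset)$, and $\mathcal{EXP}_r(\bar U) = \mathcal{EXP}_r$ is exactly the set of all expansions $\ms A_r' = (\ms A, U_1,\ldots,U_l)$ with $U_k \subseteq A$ --- precisely the structures ranged over in the definition of $\OPT$. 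Moreover $\tau$ is relational, so each such $\ms A_r'$ is fully interpreted, and $|R_k^{\ms A_r'} \setminus X_r| = |U_k|$ because $X_r = \emptyset$.

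Next I would connect satisfaction to the algorithm's test. For a fully interpreted $\ms A_r'$ the classical game property gives $\ms A_r' \models \varphi$ if and only if $\eval(\mcg(\ms A_r', \varphi)) = \true$, and Lemma~\ref{lem:convert} (with $X = \emptyset$, which is stated for \emph{any} game $\mathcal R \cong \redgame(\ms A_r', \emptyset, \varphi)$) yields $\eval(\mcg(\ms A_r', \varphi)) = \eval(\convert(\mathcal R))$. I would then observe two consequences of Invariant~\ref{inv:alg}. If $\ms A_r' \models \varphi$, then $\redgame(\ms A_r', \emptyset, \varphi) \neq \false$: otherwise Lemma~\ref{lem:eval-equals-reduce} would give $\eval(\emcg(\ms A_r', \emptyset, \varphi)) = \false$ and then Lemma~\ref{lem:emcg-determines-mcg} would force $\eval(\mcg(\ms A_r', \varphi)) = \false$, a contradiction. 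Hence, by part~(I), every satisfying expansion maps to a unique representative $\mathcal R \in S_r(\bar U)$ with $\mathcal R \cong \redgame(\ms A_r', \emptyset, \varphi)$; and by part~(II) each $\mathcal R \in S_r(\bar U)$ comes from a nonempty class $\mathcal{EXP}_r(\bar U, \mathcal R)$. Crucially, for every member $\ms A_r'$ of that class the value $\eval(\convert(\mathcal R))$ is the same (it equals $\eval(\mcg(\ms A_r', \varphi))$), so satisfaction of $\varphi$ is constant across the class and is correctly decided by the single test $\eval(\convert(\mathcal R)) = \true$.

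I would then regroup the minimization defining $\OPT$ by equivalence class. Since every satisfying expansion lands in exactly one class $\mathcal R \in S_r(\bar U)$ and carries $\reduce(\ms A_r', \emptyset, \varphi) \cong \mathcal R \neq \false$, part~(III) of the invariant (using $X_r = \emptyset$, so $|R_k^{\ms A_r'} \setminus X_r| = |U_k|$) gives that $\val_r(\mathcal R)$ is exactly the minimum of $\sum_{k=1}^{l} \alpha_k |U_k|$ over the members of that class, the side condition $\reduce \neq \false$ being automatically met. Restricting the grouping to classes with $\eval(\convert(\mathcal R)) = \true$ (equivalently, the satisfying ones) therefore yields $\OPT = \min\{\,\val_r(\mathcal R) \mid \mathcal R \in S_r(\bar U) \wedge \eval(\convert(\mathcal R)) = \true\,\}$, with the convention $\min \emptyset := \infty$ covering the unsatisfiable case on both sides.

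The main obstacle is the passage from a single structure to a whole equivalence class: justifying that $\eval(\convert(\mathcal R))$ decides $\ms A_r' \models \varphi$ uniformly for all $\ms A_r'$ in the class represented by $\mathcal R$, and that the minimum in part~(III) is taken over precisely this class. This rests on Lemma~\ref{lem:convert} being formulated for an arbitrary game equivalent to the reduced game, together with the partition of $\{\ms A_r' \mid \redgame(\ms A_r', \emptyset, \varphi) \neq \false\}$ induced by parts~(I) and~(II). Once these are in hand, the remaining computation is routine bookkeeping.
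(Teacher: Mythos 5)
Your proposal is correct and follows essentially the same route as the paper's proof: both hinge on Lemma~\ref{lem:convert} to identify $\ms A' \models \varphi$ with $\eval(\convert(\mathcal R)) = \true$ for any representative $\mathcal R$ of the class of $\ms A'$, and on parts (I)--(III) of Invariant~\ref{inv:alg} to match $\val_r$ with the class-wise minima. The only difference is presentational — you regroup the minimization over satisfying expansions by equivalence class in one pass, while the paper proves the two inequalities separately via explicit optimal witnesses — and your explicit checks (that satisfying expansions have $\redgame \neq \false$, and the $\min\emptyset = \infty$ edge case) are points the paper leaves implicit.
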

\begin{proof}
Note that $\ms A = \ms A_r$. 
Let $\ms A'$ be optimal, i.e., let $\ms A'$ be a
$\tau$-expansion of~$\ms A$, such that $\ms A' \models \varphi$
and
$$
\sum_{k=1}^{l} \alpha_k |R_k^{\ms A'} \setminus X_r| = \sum_{k=1}^{l} \alpha_k |R_k^{\ms A'}| = \OPT.
$$
Let $\mathcal R = \redgame(\ms A', X_r, \varphi)$.
We have $\eval(\mcg(\ms A', \varphi)) = \true$ since $\ms A' \models \varphi$, and therefore
$$
\eval(\convert(\mathcal R)) = \eval(\mcg(\ms A', \varphi)) = \true$$
by Lemma~\ref{lem:convert}.
Note that $X_r = \emptyset$ and therefore $\mathcal{AR}_r \cap X_r = \{(\emptyset, \ldots, \emptyset)\}$.
By Invariant~\ref{inv:alg}, part~(I), there is $\mathcal R' \in S_j(\bar U)$, such that
$\mathcal R' \cong \mathcal R$, which implies $\OPT = \val_r(\mathcal R')$
by part~(III) and the optimality of~$\ms A'$ for $\mathcal{EXP}_r(\bar U, \mathcal R)$.
Since $\eval(\convert(\mathcal R')) = \true$, we also have
$$
\OPT = 
\val_r(\mathcal R') 
\ge \min\{\,\val_r(\mathcal R'') \mid \mathcal R'' \in S_r(\bar U) \wedge \eval(\convert(\mathcal R'')) = \true\,\}.
$$
Conversely, let $\mathcal R \in S_r(\bar U)$, such that $\eval(\convert(\mathcal R)) = \true$ and
$$
\val_r(\mathcal R) = 
	\min\{\,\val_r(\mathcal R') \mid \mathcal R' \in S_r(\bar U) \wedge \eval(\convert(\mathcal R')) = \true\,\}.
$$
By part~(II) of the invariant, there is a $\tau$-expansion $\ms A''$ of~$\ms A$, such that
$\mathcal R \cong \reduce(\ms A'', X_r, \varphi)$.
Since $\eval(\convert(\mathcal R)) = \true$, we have $\ms A'' \models \varphi$ by
Lemma~\ref{lem:convert}.
Without loss of generality, we can assume by part~(III), 
that $\ms A''$ is optimal for $\mathcal{EXP}_r(\bar U, \mathcal R)$,
i.e., 
$\val_r(\mathcal R) = \sum_{k=1}^{l} \alpha_k |R_k^{\ms A''} \setminus X_r|$.
We then directly conclude
$$
\val_r(\mathcal R) = \sum_{k=1}^{l} \alpha_k |R_k^{\ms A''}| \ge \OPT
 = \sum_{k=1}^{l} \alpha_k |R_k^{\ms A'}|.
$$
\end{proof}

We can now prove Theorem~\ref{thm:main}.

\begin{proof}[Proof of~Theorem~\ref{thm:main}]
Using induction over the structure of the tree decomposition and Lemmas~\ref{lem:alg-leaf}--\ref{lem:alg-join}
for the respective nodes, we know that Invariant~\ref{inv:alg} holds for the root node of the tree decomposition
after the algorithm has finished Phase~1.
By Lemma~\ref{lem:alg-root}, the algorithm outputs the correct
solution in Phase~2.

For the running time, consider $i \in T$.  We have $|\mathcal{AR}_i \cap X_i| = O(2^{|X_i|l})$,
which for constant $l \le |\tau|$ and $|X_i|\le w+1$ is a constant.
For $\bar U \in \mathcal{AR}_i \cap X_i$, consider the set $S_i(\bar U)$.  Since the algorithm
only inserts games into $S_i(\bar U)$, if $S_i(\bar U)$ does not already contain 
an equivalent game, 
$$
|S_i(\bar U)| \le N_{X_i, \varphi} \leq \exp^{\qr(\varphi)+1}((|X_i|+1)^{O(\|\varphi\|)}),
$$
by Lemma~\ref{lem:number}, which for bounded $|X_i|$ is constant.  Furthermore, by Lemma~\ref{lem:size}, 
for each $\mathcal R \in S_i(\bar U)$, 
$$
|\mathcal R| \le \exp^{\qr(\varphi)+1}((|X_i|+1)^{O(\|\varphi\|)}),
$$
again a constant.  Finally, each position of each game is of the form $(\ms H, X_i, \psi)$,
where $\|\psi\| \le \|\varphi\|$ and $\|\ms H\| = O(|X_i| + \|\varphi\|)$, where
$\|\ms H\|$ denotes the size of a suitable encoding of~$\ms H$.
All operations on games, i.e., $\reduce()$, $\eval()$,
$\combine()$, $\forget()$, and $\convert()$,
therefore take constant time.

In total, at a node $i \in T$, a constant number of entries or pairs, respectively, is considered,
and each operation takes constant time.  The running time is therefore $O(|T|)$.
\end{proof}

\subsection{Extensions}
\label{sec:ext-linmso}

\paragraph{Semiring Homomorphisms}
Note that the algorithm implicitly used a homomorphism
$$
h\colon (U_1, \ldots, U_l) \mapsto \sum_{k=1}^l \alpha_k|U_k|
$$
from the semiring $(\Pow(\mathcal{AR}_r), \hat \uplus, \cup, \hat
\emptyset, \emptyset)$ into the semiring $(\mathbf Z_\infty, +, \min,
0, \infty)$.  Here, $\Pow(\mathcal{AR}_r)$ is the set of all possible
interpretations of the free relation symbols (i.e., a set of tuples of sets), $\hat \uplus$ is a
component-wise, disjoint union with neutral element $\hat \emptyset
= (\emptyset, \ldots, \emptyset)$, and
$\cup$ is the regular union of sets. 
The extension to other semiring homomorphisms, e.g., to count the
number of interpretations satisfying the MSO property $\varphi$,
is rather straightforward.  See~\cite{CM93} for a list of
many interesting semirings.

\paragraph{Many-sorted Structures}
In this article, we considered one-sorted structures, i.e., structures
whose universe contains objects of a single sort only.  The
corresponding theory is also called $\rm MS_1$-theory in the
literature and is strictly less powerful than corresponding logics
for multi-sorted structures.
For instance, recall from Example~\ref{ex:taugraph} that
a graph $G = (V, E)$ can in a natural way be identified with a structure
over the vocabulary $\taugraph = (\adj)$, where $V$ is identified with
the one-sorted universe of \emph{vertices}, and $\adj$ is interpreted as~$E$.
The \textsc{Hamiltonian Path} problem for graphs
cannot be expressed in $\mso(\taugraph)$, since this requires the
use of edge-set quantification (see~\cite{EF99}, for instance).

Fortunately, this poses no restriction in algorithmic applications.
Firstly, it is not hard to extend the techniques in this paper to
many-sorted structures.  Courcelle's original works~\cite{Cou90,Cou90a}
were already proven for many-sorted structures.  Secondly, one can
easily simulate many-sorted structures by introducing relation
symbols that distinguish the respective objects in a common universe
accordingly.  For example, one can consider the incidence graph of
a graph and introduce unary relation symbols $V$ and $E$, which
allow to distinguish objects of sort ``vertex'' or ``edge'', and
a new binary relation symbol $\inc$ for the incidence relation.
Transforming a structure and a corresponding tree decomposition
accordingly can be done efficiently and does not increase the width
of the decomposition.
Graphs with multi-edges can be represented similarly.

\section{Solving Concrete Problems}
\label{sec:concrete-problems}

In the analysis of the running time of the algorithm, we were rather
pessimistic w.r.t.\ the constants hidden in the $O(|T|)$.  Recall
that unless $\rm P = NP$, these cannot be bounded by an elementary
function, i.e., the running time of the algorithm cannot be
$O(f(\|\varphi\|, w) n)$ for a fixed function $f\colon \mathbf N
\times \mathbf N \to \mathbf N$ that is a nesting of exponentials
of bounded depth~\cite{FG04}.

The picture changes dramatically once we assume the \emph{problem}
is fixed, i.e., the problem description consisting of the vocabulary
$\tau$, a formula $\varphi \in \mso(\tau)$ and the integers $\alpha_1,
\ldots, \alpha_l \in \mathbf Z$ are constants.  Specialized and
comparably efficient algorithms exist for many problems, e.g., of
running time $O(2^w\poly(w) n)$ for the \textsc{Minimum Vertex Cover}
problem, or of $O(3^w \poly(w) n)$ for \textsc{Minimum Dominating Set}
and \textsc{3-Colorability}, cf.~\cite{TP97,RBR09},
where $\poly(w)$ is a fixed polynomial in~$w$.  Recent results
furthermore indicate that better running times are improbable~\cite{LMS10}.
Assuming small treewidth, such algorithms might still turn out to
be feasible in many practical applications, cf.~\cite{Bod98}.

In this section, we estimate the running times of our generic
approach for the three aforementioned problems.  Let $(\mathcal T,
\mathcal X)$ be a tree decomposition of the input graph structure $\ms A$ over
$\taugraph$,
where $\mathcal T = (T, F)$ and $\mathcal X = (X_i)_{i \in T}$ with
$|X_i| \le w$ for all $i \in T$, i.e., $\ms A$ has treewidth at
most~$w-1$.

\subsection{\textsc{Minimum Vertex Cover}}

\begin{figure}[tbp]
\begin{center}
\includegraphics[scale=0.8]{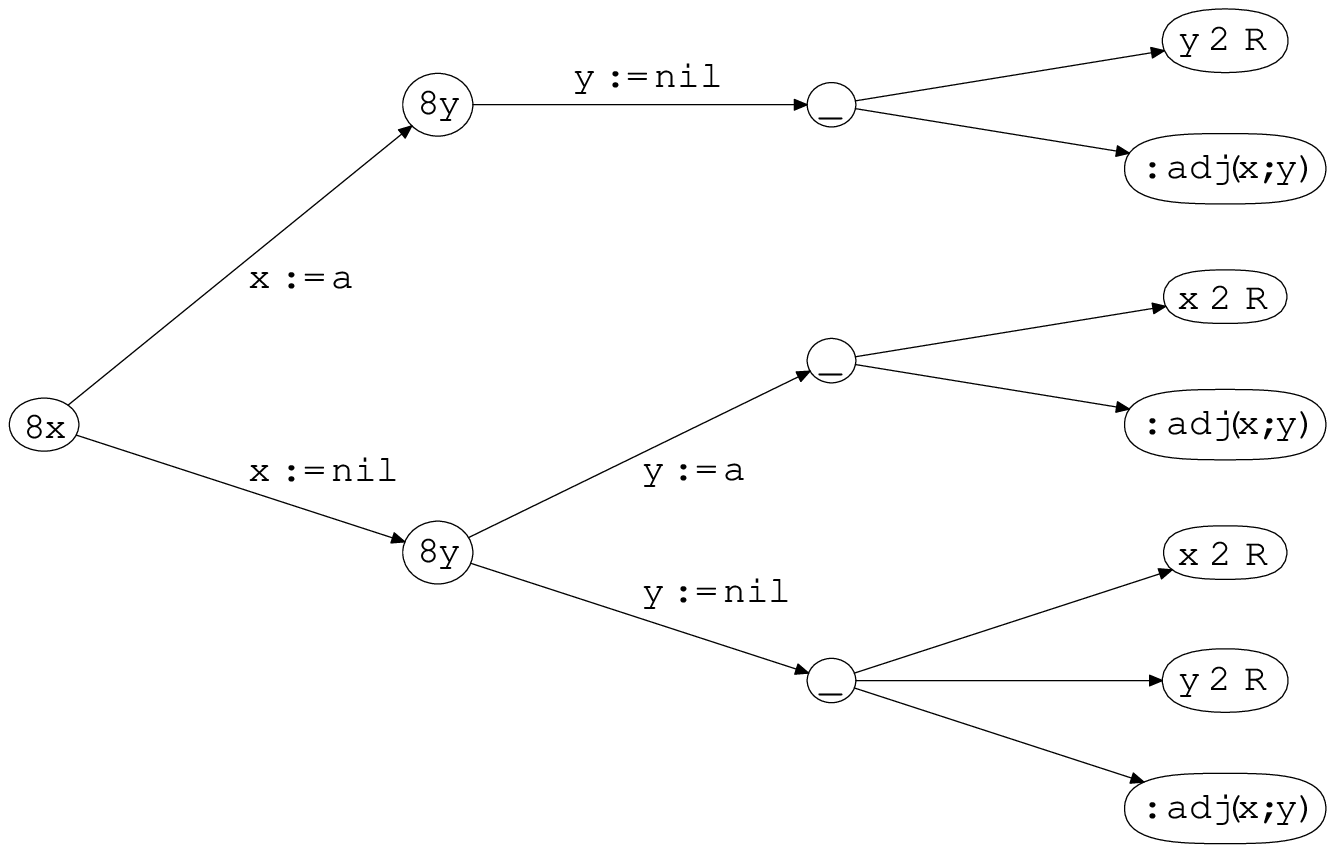}
\end{center}
\caption{Simplified schematic of $\reduce(\ms A, X, \vc)$,
where $\ms A$ has universe $A = \{a\}$, $X = A$ and $a \notin R^{\ms A}$.
If any of the symbols $x$ or $y$ remains uninterpreted
(cases $x := \nil$ and $y := \nil$ in the figure),
then some of the plays in $\emcg(\ms A, \emptyset, \vc)$ end with
a draw and still persist in the reduced game.  If $\ms A = \ms A_i$ and $X = X_i$
for a node $i$ of a tree decomposition, then this essentially means that it is
still open whether nodes in the ``future'' of~$i$ will be adjacent or whether
they will be contained in~$R$.\label{fig:vc-unfolding}}
\end{figure}

Recall from Example~\ref{ex:mso-examples} that the formula
$$
\vc = \forall x \forall y (\neg \adj(x,y) \vee x \in R \vee y \in R) \in \mso(\taugraph \cup \{R\}) 
$$
is true on a $(\taugraph, R)$-structure $(\ms G, U)$ if and only
if $U \subseteq G$ is a vertex cover for the graph~$\ms G$. 
Using the notation from the previous section, we claim
that for each $i \in T$ and for all $\bar U \in \mathcal{AR}_i \cap
X_i$, the set $S_i(\bar U)$ contains at most one entry $\mathcal
R$, and if $\mathcal R \in S_i(\bar U)$ for some $\bar U$, then
$|\mathcal R| = \poly(w)$.  To this end, consider arbitrary $\bar
U \in \mathcal{AR}_i \cap X_i$ and let $\ms A_i' \in \mathcal{EXP}_i(\bar
U)$.

For any $a \in A_i$, such that $a \in R^{\ms A_i'}$, the verifier
has a winning strategy on $\mathcal G = \emcg(\ms A_i'', X_i, \forall
y\ldots)$, where $\ms A_i'' = (\ms A_i', a)$ with $x^{\ms A_i''} = a$,
since the atomic formula~$x
\in R$ is always satisfied for all~$y$. 
Therefore, $\eval(\mathcal
G) = \redgame(\mathcal G) = \true$ and $\reduce()$ removes the
subgame $\mathcal G$ from $\emcg(\ms A_i', X_i, \vc)$.

Consider now a subgame $\emcg(\ms A_i'', X_i, \forall y \ldots)$,
where $\ms A_i'' = (\ms A_i', a)$ with $a \notin R^{\ms A_i''}$. 
If there is $b \in A_i$, such that $(a, b) \in \adj^{\ms A_i''}$ and $b \notin
R^{\ms A_i''}$, then the falsifier has a winning strategy on
$\emcg(\ms A_i', X_i, \vc)$ and consequently $\redgame(\ms A_i',
X_i, \vc) = \false$.
If otherwise for all $b \in A_i$ either $b \in R^{\ms A_i''}$ or 
$(a, b) \notin \adj^{\ms A_i''}$, then
we get $\redgame((\ms A_i', a, b), X_i, \ldots)) = \true$, and the
corresponding
subgame will be removed by $\reduce()$.
Therefore only the subgame on $\ms A_i''$ with $y^{\ms A_i''} = \nil$ remains
undetermined.  We conclude
$\emcg((\ms A_i', b_1), X_i, \forall y \ldots) \cong \emcg((\ms A_i', b_2), X_i, \forall y \ldots)$
for all $b_1,b_2 \in A_i \setminus X_i$.

Due to the symmetry of $x$ and $y$ in the vertex cover formula, we can argue
analogously for the cases where the roles of $x$ and $y$ have been interchanged. 
Therefore, $\mathcal R_1 \cong \mathcal R_2$ for all
$\mathcal R_1, \mathcal R_2 \in \mathcal{RED}_i(\bar U)$, from which we conclude
$|S_i(\bar U)| \le 1$.
Each game is of size $|\mathcal R| = O(w)$, since by above considerations
$$
|\subgames(\reduce(\ms A_i'', X_i, \forall y \ldots))| \le
\begin{cases}
|X_i| + 1 + 1	& \text{if $x^{\ms A_i''} = \nil$} \\
1		& \text{if $x^{\ms A_i''} \in A_i$}
\end{cases}
$$
and
$|\subgames(\reduce(\ms A_i', X_i, \vc))| \le |X_i| + 1 + 1$:
In both cases, we have $|X_i|$ subgames for the vertices in $X_i$,
one subgame for all vertices in $A_i \setminus X_i$
(since all of them are equivalent), and
one subgame for the case that $x$ and $y$, respectively, remain uninterpreted.
See Figure~\ref{fig:vc-unfolding} for an example.

It is not hard to see that $\reduce(\mathcal R)$, $\eval(\mathcal R)$,
$\convert(\mathcal R)$, $\forget(\mathcal R_1)$
and
$\combine(\mathcal R_1, \mathcal R_2)$
can be implemented in a way such that they
run in time polynomial in $|\mathcal R|$ and
$|\mathcal R_1| + |\mathcal R_2|$.
Hence, we immediately find that the generic algorithm introduced
in this article reaches, up to factors polynomial in~$w$, the running
time of $O(2^w n)$ of the specialized algorithm, since $|\mathcal{AR}_i
\cap X_i| = 2^{|X_i|}$ for all $i \in T$.

\subsection{\textsc{Minimum Dominating Set}}

The formula
$$
\ds = \forall x (x \in R \vee \exists y (y \in R \wedge \adj(x,y)))
\in \mso(\taugraph \cup \{R\})
$$
holds in $(\ms G, U)$ if and only if $U \subseteq G$ is a dominating set for
the graph~$\ms G$.
Let for each $i \in T$ and $\bar U = (U_1) \in \mathcal{AR}_i \cap X_i$
be $k = |X_i| - |U_1|$.  We claim that $|S_j(\bar U)| \le 2^k$.
To this end, let again $\ms A_i' \in \mathcal{EXP}_i(\bar U)$ and
$\mathcal R = \redgame(\ms A_i', X_i, \ds)$.  Let $U \subseteq A_i$ be such
that $\ms A_i' = (\ms A_i, U)$.

If $U$ dominates $a \in A_i$, then either $a \in U$ and
$\redgame((\ms A_i, a), X_i, {x \in R}) = \true$, or there is $b \in
U$ that is adjacent to~$a$, and $\redgame((\ms A_i, a), X_i, \exists
y \ldots) = \true$.  In both cases we get $\mathcal R' = \redgame((\ms
A_i, a), X_i, x \in R \vee \exists y \ldots) = \true$, and therefore
$\mathcal R' \notin \subgames(\mathcal R)$.

\begin{figure}[tbp]
\begin{center}
\includegraphics[scale=0.8]{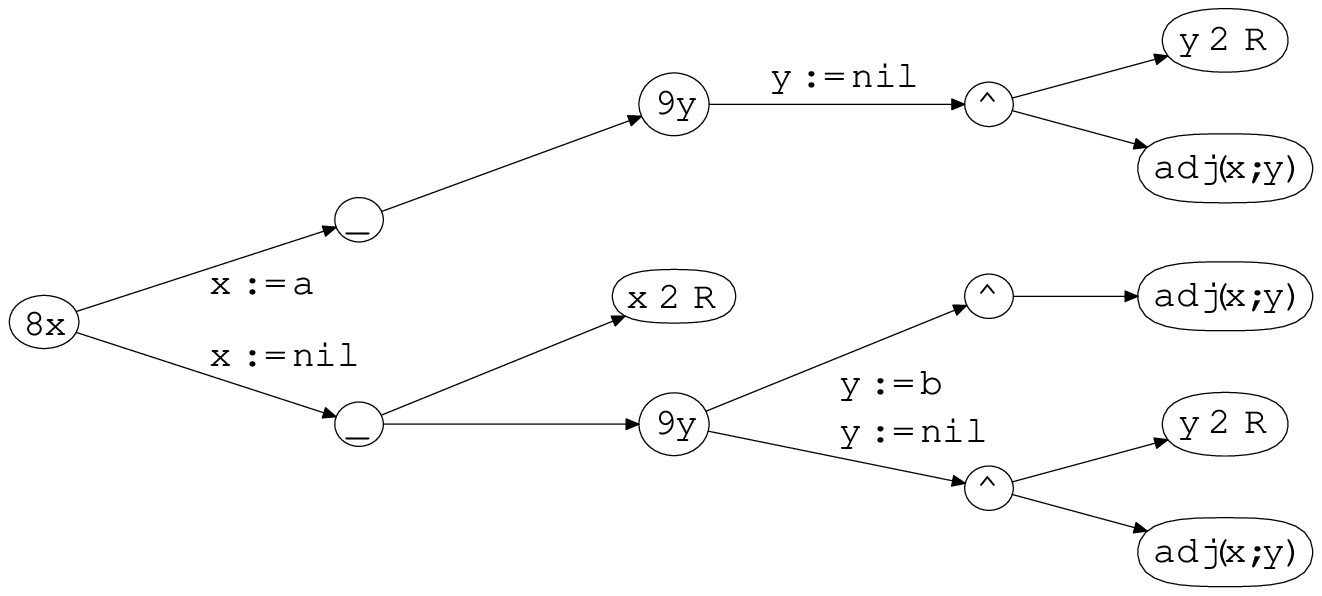}
\end{center}
\caption{Simplified schematic of $\reduce(\ms A, X, \ds)$, where
$\ms A$ has universe $A = \{a,b\}$, $X = A$, and $R^{\ms A} = \{b\}$, such
that $a$ and $b$ are not adjacent.
Then $a$ might still be dominated by a ``future'' vertex; the corresponding
plays (following the upper $y := \nil$ branch in the figure) end
with a draw and therefore persist in the reduced game.  Similarly,
the branch $x := \nil$ corresponds to the case that ``future''
vertices are chosen as interpretations for~$x$.  Such vertices can
also be dominated by $b$, which is represented by the $y := b$
branch in the figure.  \label{fig:ds-unfolding}}
\end{figure}

If $a \in A_i$ is not dominated by~$U$, then $\redgame((\ms A_i',
a), X_i, x \in R) = \false$ and $\redgame(\ms A_i'', X_i, y \in R \wedge
\adj(x, y)) = \false$ for all $\ms A_i''$ with $x^{\ms A_i''} = a$
and $y^{\ms A_i''} \in A_i$. 
These games are therefore removed by~$\reduce()$.
Only the game $\redgame(\ms A_i'', X_i, y \in R \wedge \adj(x, y)$
with $x^{\ms A_i''} = a$ and $y^{\ms A_i''} = \nil$ remains
undetermined.  Thus for all $a_1, a_2 \in
A_i \setminus X_i$ that are not dominated by~$U$ we have $\redgame((\ms
A_i', a_1), X_i, x \in R \vee \exists y \ldots) \cong \redgame((\ms
A_i', a_2), X_i, x \in R \vee \exists y \ldots)$.

For $\ms A_i''$ with $x^{\ms A_i''} = \nil$ the game
$\reduce(\ms A_i'', X_i, x \in R)$ remains undetermined.
For all $b \in A_i \setminus U$ we have $\redgame((\ms A_i'', b), X_i, 
y \in R \wedge \adj(x,y)) = \false$ due to the subformula $y \in R$;
the corresponding subgame is therefore removed from
$\emcg(\ms A_i'', X_i, \exists y \ldots)$.
For all $b_1, b_2 \in A_i \cap U$ we again have
$\redgame((\ms A_i'', b_1), X_i, y \in R \wedge \adj(x,y)) \cong
\redgame((\ms A_i'', b_2), X_i, y \in R \wedge \adj(x,y))$.

All in all, either two games $\mathcal R_1, \mathcal R_2 \in
\mathcal{RED}_i(\bar U)$ only differ w.r.t.\ the subset of
\emph{undominated} nodes in $X_i$.  Since there are $k$ nodes in
$X_i$ that are not contained in $U$, this bounds $|S_j(\bar U)| \le
2^k$.   For each of them, we have
$|\subgames(\reduce(\ms A_i'', X_i, \forall x
\ldots))| \le |X_i| + 1 + 1$ corresponding to at most $|X_i|$
undominated nodes in~$X_i$, at most one undominated node in $A_i
\setminus X_i$ and the subgame for $\ms A_i''$ with $x^{\ms A_i''}
= \nil$.
Furthermore, 
$|\subgames(\reduce(\ms A_i'', X_i, {x \in R} \vee \exists y \ldots))| = O(1)$
for $\ms A_i''$ with $x^{\ms A_i''} \neq \nil$ and
$|\subgames(\reduce(\ms A_i'', X_i, {x \in R} \vee \exists y \ldots))| \le |X_i| + 1 + 1$.
We conclude that $|\mathcal R| = O(|X_i|)$.
See Figure~\ref{fig:ds-unfolding} for an example.

In total, at a node $i \in T$, there are therefore at most
$$
\sum_{k = 0}^{w} \binom w k 2^k = 3^w
$$
entries stored, and each entry has size $|\mathcal R | = O(w)$.
Nodes $i \in T$ of type \emph{leaf}, \emph{forget}  are therefore processed
in time $O(3^w\poly(w))$.  
For \emph{join} nodes $i \in T$ with children $j_1, j_2$, every pair
in $S_{j_1}(\bar U) \times S_{j_2}(\bar U)$ is considered.  Therefore, at most
$$
\sum_{\bar U \in \mathcal{AR}_i \cap X_i} |S_{j_1}(\bar U)| \cdot |S_{j_2}(\bar U)| \le
\sum_{k = 0}^{w} \binom w k 2^k 2^k = 5^w
$$
entries are considered, which yields a running time of $O(5^w \poly(w) n)$.
This does not yet match the best specialized algorithm for the
\textsc{Minimum Dominating Set} problem~\cite{RBR09} with a running
time of $O(3^w \poly(w) n )$, but is still faster than combining all pairs
with a running time of $\Theta(9^w \poly(w) n)$.
We note that both the $O(3^w \poly(w) n)$ bound from~\cite{RBR09} and the
$O(4^w n)$ bound from~\cite{AN02} exploit a
certain ``monotonicity'' property of domination like problems, which
does not hold for all problems that are expressible in MSO (\textsc{Independent Dominating Set}
being an example).

\subsection{\textsc{3-Colorability}}

\def\Part{{\it part}}
\def\IS{{\it is}}

The formula
\begin{multline*}
\col = \exists R_1 \exists R_2 \exists R_3 \Bigg( 
\forall x \bigg(\bigvee_{i=1}^3(x \in R_i) \wedge \bigwedge_{i\neq j}
	(\neg x \in R_i \vee \neg x \in R_j)\bigg) \wedge \\
	\forall x \forall y\bigg(\neg \adj(x, y) \vee
	\bigwedge_{i=1}^3 (\neg x \in R_i \vee \neg y \in R_i)\bigg)\Bigg) \in \mso(\taugraph)
\end{multline*}
defining the \textsc{3-Colorability} problem has
no free symbols.  
Therefore $\mathcal{AR}_i = \{()\}$, where $()$ is the empty tuple, and
the table $S_j$ contains at most one entry
$\mathcal R = \redgame(\ms A_i, X_i, \col)$.  We estimate the size of $\mathcal R$.
For, let $\col = \exists R_1 \exists R_2 \exists R_3 \varphi$, where $\varphi = \Part \wedge \IS$.
Here, $\Part = \forall x \ldots$ expresses that the $R_i$ are a partition of the
universe, and
$\IS = \forall x \forall y \ldots$ ensures that each $R_j$ is an independent set.

If $\bar U = (U_1, U_2, U_3) \in \Pow(A_i)^3$ is not a partition of $A_i$, then
the falsifier wins $\emcg((\ms A_i, U_1, U_2, U_3), X_i, \Part)$, and therefore
$\redgame((\ms A_i, \bar U), X_i, \varphi) \notin \subgames(\mathcal R)$.
Otherwise, $\emcg((\ms A_i, \bar U, a), X_i, \Part) = \true$ for all
$a \in A_i$ and undetermined when $x$ remains uninterpreted.
Using the same arguments as for the similar vertex cover formula~$\vc$, we have
$\mathcal R_1 \cong \mathcal R_2$ for all
$\bar U_j = (U_{j,1}, U_{j,2}, U_{j,3}) \in \Pow(A_i)^3$ with
$\bar U_1 \cap X_i = \bar U_2 \cap X_i$
and $\mathcal R_j = \redgame((\ms A_i, \bar U_j, X_i, \IS) \neq \false$,
$1 \le j \le 2$.
This implies $\redgame((\ms A_i, \bar U_1), X_i, \varphi) \cong \redgame((\ms A_i, \bar U_2), X_i, \varphi)$.
Thus, $\subgames(\mathcal R)$ contains at most $O(3^w)$ subgames
$\mathcal R_i = \redgame((\ms A_i, \bar U), X_i, \ldots) \neq \false$,
which bounds $|\mathcal R| = O(3^w\poly(w))$.

Thus, assuming $\combine(\mathcal R_1, \mathcal R_2)$ requires time
$\Theta(|\mathcal R_1| \cdot |\mathcal R_2| \cdot (\|\varphi\| + |X_i|))$, we only can bound the total running time
by $O(9^w\poly(n))$.
This can probably be improved to $O(3^w\poly(n))$ using a similar approach as for the tables
$S_j(\bar U)$.

\section{Practical Experiments and Conclusion}
\label{sec:conclusion}

\begin{table}
\begin{center}
\hfil
\vbox{%
\tabskip=0pt\def\0{\phantom{0}}
\halign to \hsize{# \hfil & # \hfil \tabskip=2em plus 2em&
	\tabskip=0pt plus 2em\hfil # & \hfil # & \hfil # \tabskip=2em plus 2em&
	\tabskip=0pt plus 2em\hfil # & \hfil # & \hfil #\tabskip=0pt\cr
\omit \hfill \textsc{Minimum Vertex Cover} \hfill \span \span \span \span \span \span \span \cr
\noalign{\medskip \hrule height 0.4pt \smallskip} 
&& \omit \hfill time in seconds \hfill \span \span & \omit \hfill memory in MB \hfill \span \span \cr
dimension & runs & min & max & median & min & max & median \cr
\noalign{\smallskip \hrule height 0.4pt \smallskip } 
$\01 \times 200$ & 40 & 0.2 & 0.2 & 0.2 & 1 & 1 & 1 \cr
$\02 \times 100$ & 40 & 0.3 & 0.4 & 0.3 & 1 & 1 & 1 \cr
$\03 \times 66$ & 40 & 0.5 & 0.9 & 0.6 & 1 & 1 & 1 \cr
$\04 \times 50$ & 40 & 0.1 & 1.0 & 0.1 & 1 & 1 & 1 \cr
$\05 \times 40$ & 40 & 0.2 & 0.4 & 0.3 & 1 & 2 & 2 \cr
$\06 \times 33$ & 40 & 0.3 & 0.9 & 0.5 & 2 & 3 & 2 \cr
$\07 \times 28$ & 40 & 0.6 & 1.9 & 1.0 & 2 & 5 & 3 \cr
$\08 \times 25$ & 40 & 1.2 & 4.6 & 2.3 & 3 & 9 & 5 \cr
$\09 \times 22$ & 40 & 1.9 & 13.6 & 5.2 & 5 & 18 & 10 \cr
$10 \times 20$ & 40 & 4.4 & 41.4 & 13.7 & 9 & 36 & 19.5 \cr
$11 \times 18$ & 40 & 11.3 & 156.4 & 46.2 & 16 & 62 & 39 \cr
$12 \times 16$ & 40 & 28.2 & 642.4 & 185.2 & 27 & 128 & 76 \cr
$13 \times 15$ & 40 & 61.3 & 2644.9 & 679.4 & 42 & 268 & 145.5 \cr
$14 \times 14$ & 40 & 308.7 & 10257.3 & 3017.1 & 80 & 468 & 283 \cr
\noalign{\smallskip \hrule height 0.4pt \medskip}
\omit \hfill \textsc{Minimum Dominating Set} \hfill \span \span \span \span \span \span \span \cr
\noalign{\medskip \hrule height 0.4pt \smallskip} 
&& \omit \hfill time in seconds \hfill \span \span & \omit \hfill memory in MB \hfill \span \span \cr
dimension & runs & min & max & median & min & max & median \cr
\noalign{\smallskip \hrule height 0.4pt \smallskip } 
$1 \times 200$ & 40 & 0.3 & 0.3 & 0.3 & 1 & 1 & 1 \cr
$2 \times 100$ & 40 & 0.8 & 1.0 & 0.9 & 1 & 1 & 1 \cr
$3 \times 66$ & 40 & 0.2 & 0.3 & 0.2 & 1 & 1 & 1 \cr
$4 \times 50$ & 40 & 0.6 & 0.9 & 0.8 & 2 & 3 & 2.5 \cr
$5 \times 40$ & 40 & 2.2 & 3.2 & 2.8 & 4 & 6 & 6 \cr
$6 \times 33$ & 40 & 8.3 & 12.8 & 11.6 & 11 & 17 & 15 \cr
$7 \times 28$ & 40 & 40.3 & 85.4 & 71.0 & 27 & 47 & 42 \cr
$8 \times 25$ & 40 & 238.9 & 681.2 & 493.7 & 68 & 137 & 112 \cr
$9 \times 22$ & 35 & 1605.7 & 8588.2 & 5235.3 & 170 & 386 & 332 \cr
\noalign{\smallskip \hrule height 0.4pt \medskip}
\omit \hfill \textsc{3-Colorability} \hfill \span \span \span \span \span \span \span \cr
\noalign{\medskip \hrule height 0.4pt \smallskip} 
&& \omit \hfill time in seconds \hfill \span \span & \omit \hfill memory in MB \hfill \span \span \cr
dimension & runs & min & max & median & min & max & median \cr
\noalign{\smallskip \hrule height 0.4pt \smallskip } 
$1 \times 200$ & 20 & 0.5 & 0.6 & 0.5 & 1 & 1 & 1 \cr
$2 \times 100$ & 20 & 0.2 & 0.2 & 0.2 & 1 & 1 & 1 \cr
$3 \times 66$ & 20 & 0.7 & 1.6 & 0.9 & 2 & 2 & 2 \cr
$4 \times 50$ & 20 & 3.3 & 6.3 & 4.8 & 5 & 5 & 5 \cr
$5 \times 40$ & 20 & 15.3 & 38.3 & 29.3 & 10 & 15 & 14 \cr
$6 \times 33$ & 20 & 99.6 & 317.4 & 233.0 & 26 & 45 & 42 \cr
$7 \times 28$ & 20 & 771.6 & 2702.9 & 2262.0 & 70 & 139 & 123 \cr
$8 \times 25$ & 15 & 4029.2 & 26841.6 & 14032.5 & 146 & 373 & 268 \cr
\noalign{\smallskip \hrule height 0.4pt \smallskip } 
}}\hfil
\end{center}
\caption{Running times and memory usage on random subgraphs of grids with about 200 vertices%
\label{tab:runningtimes1}}
\end{table}

\begin{table}
\begin{center}
\hfil
\vbox{%
\tabskip=0pt\def\hw{===\hidewidth}%
\halign to \hsize{\hfil#\tabskip=0pt plus 1em&\hfil#\tabskip=2em plus 2em&
	\tabskip=0pt plus 2em\hfil#&\hfil#&\hfil#\tabskip=2em plus 2em&
	\tabskip=0pt plus 2em\hfil#&\hfil#&\hfil#\tabskip=0pt\cr
\omit \hfill \textsc{Minimum Vertex Cover} \hfill \span \span \span \span \span \span \span \cr
\noalign{\medskip \hrule height 0.4pt \smallskip} 
&& \omit \hfill time in seconds \hfill \span \span & \omit \hss memory in MB\hss \span \span \cr
width&runs&min&max&median&min&max&median\cr
\noalign{\smallskip \hrule height 0.4pt \smallskip} 
1 & 387 & 0.5 & 0.8 & 0.6 & 2 & 3 & 2 \cr
2 & 179 & 0.1 & 1.0 & 0.8 & 2 & 4 & 3 \cr
3 & 68 & 0.1 & 0.3 & 0.2 & 3 & 4 & 3 \cr
4 & 74 & 0.2 & 0.5 & 0.3 & 3 & 4 & 4 \cr
5 & 69 & 0.4 & 1.3 & 0.7 & 3 & 4 & 4 \cr
6 & 62 & 0.9 & 2.3 & 1.4 & 4 & 6 & 5 \cr
7 & 38 & 1.5 & 5.5 & 3.1 & 5 & 11 & 9 \cr
8 & 36 & 2.5 & 14.0 & 6.3 & 8 & 20 & 15 \cr
9 & 45 & 7.4 & 34.9 & 16.6 & 18 & 40 & 27 \cr
10 & 29 & 24.8 & 121.6 & 56.8 & 30 & 78 & 56 \cr
11 & 28 & 55.8 & 382.1 & 156.8 & 56 & 138 & 103 \cr
12 & 29 & 164.6 & 1495.9 & 392.7 & 100 & 293 & 160 \cr
\noalign{\smallskip \hrule height 0.4pt \bigskip}
\omit \hfill \textsc{Minimum Dominating Set} \hfill \span \span \span \span \span \span \span \cr
\noalign{\medskip \hrule height 0.4pt \smallskip} 
&& \omit \hfill time in seconds \hfill \span \span & \omit \hss memory in MB \hss \span \span \cr
width&runs&min&max&median&min&max&median\cr
\noalign{\smallskip \hrule height 0.4pt \smallskip } 
1 & 387 & 0.6 & 0.9 & 0.7 & 2 & 3 & 2 \cr
2 & 174 & 0.1 & 1.0 & 0.2 & 2 & 4 & 3 \cr
3 & 39 & 0.2 & 0.8 & 0.5 & 3 & 4 & 3 \cr
4 & 30 & 0.7 & 4.5 & 2.6 & 4 & 8 & 6 \cr
5 & 17 & 4.3 & 29.2 & 16.5 & 8 & 21 & 16 \cr
6 & 9 & 112.3 & 318.5 & 187.8 & 38 & 63 & 49 \cr
7 & 1 & 2403.9 & 2403.9 & 2403.9 & 162 & 162 & 162 \cr
8 & 3 & 35290.3 & 64922.8 & 51801.0 & 319 & 338 & 321 \cr
9 & 1 & 43228.4 & 43228.4 & 43228.4 & 347 & 347 & 347 \cr
\noalign{\smallskip \hrule height 0.4pt \bigskip}
\omit \hfill \textsc{3-Colorability} \hfill \span \span \span \span \span \span \span \cr
\noalign{\medskip \hrule height 0.4pt \smallskip} 
&& \omit \hfill time in seconds \hfill \span \span & \omit \hss memory
in MB \hss \span \span \cr
width&runs&min&max&median&min&max&median\cr
\noalign{\smallskip \hrule height 0.4pt \smallskip } 
1 & 387 & 0.1 & 0.2 & 0.1 & 2 & 3 & 2 \cr
2 & 174 & 0.2 & 0.8 & 0.5 & 2 & 4 & 3 \cr
3 & 39 & 0.8 & 3.6 & 2.2 & 3 & 6 & 5 \cr
4 & 30 & 4.1 & 22.2 & 15.0 & 7 & 16 & 13 \cr
5 & 17 & 35.3 & 156.9 & 99.9 & 19 & 52 & 37 \cr
6 & 9 & 485.8 & 1328.8 & 1168.8 & 81 & 150 & 125 \cr
8 & 2 & 33733.8 & 75099.9 & 54416.8 & 446 & 664 & 555 \cr
\noalign{\smallskip \hrule height 0.4pt \smallskip } 
}}\hfil
\end{center}

\caption{Running times and memory usage for some random graphs on 200 vertices, grouped by
the width of the tree decomposition used.%
\label{tab:runningtimes2}}
\end{table}

We started to implement the approach presented in this article in
C++.  The current version works for graphs over the vocabulary
$\taugraph = (\adj)$.  At certain places, the implementation varies
from the algorithms presented in this paper for increased efficiency.
For instance, $\reduce()$ is usually not called explicitly but computed directly where needed.

We list some running times and memory usage of the implementation when solving the three problems
discussed in the previous section.  Input graphs are randomly generated subgraphs of $n \times m$ grids and
Erd\H{o}s--R\'enyi random graphs.  All graphs have about 200 vertices and the probability to
include an edge ranges between $0.001$ and $0.015$.  For the grid-subgraphs
we used path decompositions of width~$n$.  Tree decompositions for the
random graphs were computed by a triangulation heuristics (cf.~\cite{Bod03}).
The tests were done under Linux 2.6.32 on a Intel Core 2 Quad CPU Q6600 (2.40GHz) with 4 GB RAM.

\section{Conclusion}

Motivated by a practical application, we present an alternative proof of Courcelle's Theorem.
Our proof is based on model checking games and tries to avoid expensive constructions such as the
power set construction for tree automata, which turned out to cause some problems in practice.

Let us mention that our approach could be made simpler if we applied it
to graphs of bounded clique-width.  The union operation for \emph{join}
nodes of a tree decomposition involves a ``fusion'' of elements and of
interpretations of nullary symbols.  The clique-width parse trees do
not use nullary symbols and the union is replaced by a disjoint union,
which simplifies many of the operations.  On the other hand, the lack
of suitable algorithms to compute the mandatory clique-width parse trees
favors treewidth based techniques for practical applications.

First experiments with our approach do indeed indicate practical feasibility.
An implementation based on our proof can solve the \textsc{3-Colorability} problem
for some graphs where the automata theoretic approach based on the well-known
MONA tool failed.  The running times of our generic implementation can still not compete
with specialized, hand-written algorithms that can easily solve problems such as, say
\textsc{3-Colorability}, for graphs of treewidth~15 and beyond.  
We are confident that further optimization can improve the feasibility of our generic approach in practical
applications even more.

\section{Acknowledgments}

The authors thank an anonymous referee for valuable comments and
suggestions that significantly helped to improve the quality of the
paper.  The authors thank Somnath Sikdar for useful discussions on
earlier drafts of this paper.  The third author thanks Bruno Courcelle
for pointing out how to use incidence graphs to handle edge set
quantifications in a simple way.


\begin{thebibliography}{10}
\expandafter\ifx\csname url\endcsname\relax
  \def\url#1{\texttt{#1}}\fi
\expandafter\ifx\csname urlprefix\endcsname\relax\def\urlprefix{URL }\fi
\expandafter\ifx\csname href\endcsname\relax
  \def\href#1#2{#2} \def\path#1{#1}\fi

\bibitem{Cou90}
B.~Courcelle, The monadic second order theory of {G}raphs {I}: {R}ecognisable
  sets of finite graphs, Information and Computation 85 (1990) 12--75.

\bibitem{FG04}
M.~Frick, M.~Grohe, The complexity of first-order and monadic second-order
  logic revisited, Ann. Pure Appl. Logic 130~(1--3) (2004) 3--31.

\bibitem{ALS91}
S.~Arnborg, J.~Lagergren, D.~Seese, Easy problems for tree-decomposable graphs,
  J. Algorithms 12~(2) (1991) 308--340.

\bibitem{CM93}
B.~Courcelle, M.~Mosbah, Monadic second-order evaluations on tree-decomposable
  graphs, Theor. Comput. Sci. 109~(1-2) (1993) 49--82.

\bibitem{Cou90a}
B.~Courcelle, Graph rewriting: An algebraic and logic approach, in: Handbook of
  Theoretical Computer Science, Volume B: Formal Models and Sematics (B),
  Elsevier, 1990, pp. 193--242.

\bibitem{DF99}
R.~G. Downey, M.~R. Fellows, Parameterized Complexity, Springer-Verlag, 1999.

\bibitem{FFG02}
J.~Flum, M.~Frick, M.~Grohe, Query evaluation via tree-decompositions, J. ACM
  49~(6) (2002) 716--752.

\bibitem{FG06}
J.~Flum, M.~Grohe, Parameterized Complexity Theory, Springer-Verlag, 2006.

\bibitem{Wey08}
M.~Weyer, Modifizierte parametrische {K}omplexit{\"a}tstheorie, Ph.D. thesis,
  Universit{\"a}t Freiburg (2008).

\bibitem{Cou11}
B.~Courcelle, Graph Structure and Monadic Second-Order Logic, a Language
  Theoretic Approach, Cambridge University Press, 2011, book in preparation.

\bibitem{FV59}
S.~Feferman, R.~Vaught, The first order properties of algebraic systems, Fund.
  Math 47 (1959) 57--103.

\bibitem{Gur79}
Y.~Gurevich, {Modest Theory of Short Chains. I}, J. Symb. Log. 44~(4) (1979)
  481--490.

\bibitem{Fri01}
M.~Frick, {Easy Instances for Model Checking}, Ph.D. thesis, Universit{\"a}t
  Freiburg (2001).

\bibitem{Mak04}
J.~A. Makowsky, {Algorithmic uses of the Feferman-Vaught Theorem}, Ann. Pure
  Appl. Logic 126~(1-3) (2004) 159--213.

\bibitem{Gro07}
M.~Grohe, Logic, graphs, and algorithms, in: J.~Flum, E.~Gr\"adel, T.~Wilke
  (Eds.), Logic and Automata: History and Perspectives, Amsterdam University
  Press, Amsterdam, 2007, pp. 357--422.

\bibitem{KM01}
N.~Klarlund, A.~M{\o}ller, {MONA Version 1.4 User Manual}, BRICS, Dept. of
  Comp. Sc., University of Aarhus, available from \texttt{\small
  http://www.brics.dk/mona/}. (January 2001).

\bibitem{KMS00}
N.~Klarlund, A.~M{\o}ller, M.~I. Schwartzbach, {MONA Implementation Secrets},
  in: Proc. of CIAA00, Springer-Verlag, 2001, pp. 182--194.

\bibitem{Sog08}
D.~Soguet, G\'en\'eration automatique d'algorithmes lin\'eaires, Doctoral
  dissertation, University Paris-Sud (2008).

\bibitem{CMR00}
B.~Courcelle, J.~A. Makowsky, U.~Rotics, {Linear Time Solvable Optimization
  Problems on Graphs of Bounded Clique Width}, Theory of Computing Systems 33
  (2000) 125--150.

\bibitem{Cou08}
B.~Courcelle, Graph structure and monadic second-order logic: Language
  theoretical aspects, in: \ICALP{35th}, Vol. 5125 of \LNCS, Springer, 2008,
  pp. 1--13.

\bibitem{GPW08a}
G.~Gottlob, R.~Pichler, F.~Wei, Abduction with bounded treewidth: From
  theoretical tractability to practically efficient computation, in: Proc. of
  23rd AAAI, AAAI Press, 2008, pp. 1541--1546.

\bibitem{GPW10}
G.~Gottlob, R.~Pichler, F.~Wei, Monadic datalog over finite structures of
  bounded treewidth, ACM Trans. Comput. Logic 12~(1) (2010) 3:1--3:48.

\bibitem{CD10}
B.~Courcelle, I.~A. Durand, Verifying monadic second-order graph properties
  with tree automata, in: 3rd European Lisp Symposium, 2010, pp. 7--21,
  informal proceedings edited by C. Rhodes.

\bibitem{CD10a}
B.~Courcelle, I.~A. Durand, Tractable constructions of finite automata from
  monadic second-order formulas, presented at {\em Logical Approaches to
  Barriers in Computing and Complexity}, Greifswald, Germany (2010).

\bibitem{Cou10a}
B.~Courcelle, {Special tree-width and the verification of monadic second-order
  graph properties}, in: \FSTTCS{30th}, Vol.~8 of \LIPIcs, \Dagstuhl, Dagstuhl,
  Germany, 2010, pp. 13--29.

\bibitem{Hin73}
J.~Hintikka, Logic, Language-Games and Information: Kantian Themes in the
  Philosophy of Logic, Clarendon Press, 1973.

\bibitem{Gra07}
E.~Gr{\"a}del, Finite model theory and descriptive complexity, in: Finite Model
  Theory and Its Applications, Springer, 2007, pp. 125--230.

\bibitem{Gra11}
E.~Gr{\"a}del, Back and forth between logics and games, in: K.~R. Apt,
  E.~Gr{\"a}del (Eds.), Lectures in Game Theory for Computer Scientists,
  Cambridge University Press, 2011, pp. 99--145.

\bibitem{AF93}
K.~A. Abrahamson, M.~R. Fellows, Finite automata, bounded treewidth, and
  well-quasiordering, in: Graph Structure Theory, Contemporary Mathematics,
  Vol. 147, American Mathematical Society, 1993, pp. 539--564.

\bibitem{TW68}
J.~W. Thatcher, J.~B. Wright, Generalized finite automata theory with an
  application to a decision problem of second-order logic, Mathematical Systems
  Theory 2~(1) (1968) 57--81.

\bibitem{Don70}
J.~Doner, Tree acceptors and some of their applications, J. Comput. Syst. Sci.
  4 (1970) 406--451.

\bibitem{MRW05}
P.~B. Miltersen, J.~Radhakrishnan, I.~Wegener, On converting cnf to dnf, Theor.
  Comput. Sci. 347~(1-2) (2005) 325--335.

\bibitem{Cou11a}
B.~Courcelle, On the model-checking of monadic second-order formulas with edge
  set quantifications, Discrete Applied Mathematics, to appear.

\bibitem{McM95}
K.~L. McMillan, A technique of state space search based on unfolding, Form.
  Methods Syst. Des. 6 (1995) 45--65.

\bibitem{EF99}
H.-D. Ebbinghaus, J.~Flum, Finite Model Theory, \Springer, 1999.

\bibitem{GH10}
R.~Ganian, P.~Hlin\v{e}en{\'y}, On parse trees and {M}yhill--{N}erode--type
  tools for handling graphs of bounded rank-width, Disc. App. Math. 158~(7)
  (2010) 851--867.

\bibitem{RS86}
N.~Robertson, P.~D. Seymour, Graph minors. {II}. {A}lgorithmic aspects of
  tree-width, J. Algorithms 7 (1986) 309--322.

\bibitem{Die10}
R.~Diestel, Graph Theory, 4th Edition, Springer-Verlag, Heidelberg, 2010.

\bibitem{ACP87}
S.~Arnborg, D.~G. Corneil, A.~Proskurowski, Complexity of finding embeddings in
  a $k$-tree, SIAM J. Alg. Disc. Meth. 8 (1987) 277--284.

\bibitem{Bod96}
H.~L. Bodlaender, A linear time algorithm for finding tree-decompositions of
  small treewidth, SIAM J. Comput. 25 (1996) 1305--1317.

\bibitem{BK10}
H.~Bodlaender, A.~M. C.~A. Koster, Treewidth computations {I}. {U}pper bounds,
  Inf. Comput. 208~(3) (2010) 259--275.

\bibitem{Bod93}
H.~L. Bodlaender, A tourist guide through treewidth, Acta Cybernetica 11 (1993)
  1--21.

\bibitem{Bod98}
H.~L. Bodlaender, A partial $k$-arboretum of graphs with bounded treewidth,
  Theoretical Comput. Sci. 209 (1998) 1--45.

\bibitem{Tar44}
A.~Tarski, The semantic conception of truth, Philosophy and Phenomenlological
  Research 4 (1944) 13--47.

\bibitem{BK08}
C.~Baier, J.-P. Katoen, Principles of Model Checking (Representation and Mind
  Series), The MIT Press, 2008.

\bibitem{TP97}
J.~A. Telle, A.~Proskurowski, Algorithms for vertex partitioning problems on
  partial $k$-trees, \SIDMA 10~(4) (1997) 529--550.

\bibitem{RBR09}
J.~M.~M. van Rooij, H.~L. Bodlaender, P.~Rossmanith, Dynamic programming on
  tree decompositions using generalised fast subset convolution, in: ESA, Vol.
  5757 of \LNCS, \Springer, 2009, pp. 566--577.

\bibitem{LMS10}
D.~Lokshtanov, D.~Marx, S.~Saurabh, Known algorithms on graphs of bounded
  treewidth are probably optimal, Tech. Rep. abs/1007.5450, CoRR, accepted for
  SODA'11 (Jul. 2010).

\bibitem{AN02}
J.~Alber, R.~Niedermeier, Improved tree decomposition based algorithms for
  domination-like problems, in: \LATIN{5th}, no. 2286 in \LNCS, \Springer,
  Cancun, Mexico, 2002, pp. 613--627.

\bibitem{Bod03}
H.~L. Bodlaender, Necessary edges in k-chordalisations of graphs, J. Comb.
  Optim. 7~(3) (2003) 283--290.

\end{thebibliography}

 \def\Springer{Springer}
  \def\LNCS{Lecture Notes in Computer Science}
  \def\SIDMA{SIAM Journal on Discrete Mathematics}
  \def\FSTTCS#1{Proceedings of the #1~Conference on Foundations of Software Technology and Theoretical Computer Science (FSTTCS)}
  \def\LIPIcs{Leibniz International Proceedings in Informatics (LIPIcs)}
  \def\Dagstuhl{Schloss Dagstuhl--Leibniz-Zentrum fuer Informatik}
  \def\ICALP#1{Proceedings of the #1~International Colloquium on Automata, Languages, and Programming (ICALP)}
  \def\LATIN#1{Proceedings of the #1~Symposium on Latin American Theoretical Informatics (LATIN)}

\end{document}